\documentclass[12pt]{article}

\usepackage[margin=1in]{geometry}
\usepackage{amsmath,amssymb,amsthm}
\usepackage{mathtools}
\usepackage{pdflscape}
\usepackage{rotating}
\usepackage{natbib}
\usepackage{tikz}
\usepackage{pgfplots}
\usepgfplotslibrary{groupplots}
\usetikzlibrary{calc}
\usepackage{hyperref}
\usepackage{booktabs}
\usepackage{graphicx}
\usepackage{setspace}
\usepackage{bbm}
\usepackage{enumitem}
\newtheorem{theorem}{Theorem}
\newtheorem{lemma}{Lemma}

\newtheorem{corollary}{Corollary}
\newtheorem{assumption}{Assumption}

\newcommand{\E}{\mathbb{E}}
\newcommand{\Prob}{\mathbb{P}}
\newcommand{\R}{\mathbb{R}}
\newcommand{\ind}{\mathbbm{1}}
\newcommand{\plim}{\overset{p}{\longrightarrow}}
\newcommand{\dlim}{\overset{d}{\longrightarrow}}
\newcommand{\N}{\mathcal{N}}

\begin{document}

\title{Semiparametric Dynamic Logit Model with Endogenous Networks}
\author{ 
    Brice Romuald Gueyap Kounga\\
    Western University\\
     bgueyapk@uwo.ca
	}
\date{\today}
\maketitle


\begin{abstract}
This paper develops identification and estimation methods for a semiparametric dynamic logit model in which a binary outcome depends on observed covariates, the lagged outcome, and an unknown function of a latent social characteristic that also governs the formation of social ties. The unobserved characteristic is allowed to vary across agents and over time, and the network formation process is left completely unspecified. Identification combines three elements: conditional likelihood arguments that exploit the logistic structure, network-type matching that eliminates the unknown social influence function by comparing agents whose observed linking behavior reveals identical latent characteristics, and local temporal smoothing that handles the interaction between dynamics and time-varying unobserved heterogeneity. A kernel-weighted conditional maximum likelihood estimator is proposed, and its consistency and asymptotic normality are established at the $\sqrt{n}$ rate. Monte Carlo simulations show that the estimator substantially reduces the bias present in naive and control-function approaches across a range of network formation models and achieves close to nominal coverage at moderate sample sizes. The method is applied to longitudinal data on adolescent smoking and friendship networks from the Glasgow Teenage Friends and Lifestyle Study. An extension to ordered outcomes is developed using composite conditional maximum likelihood.
\end{abstract}

\medskip

\noindent \textit{Keywords:} Dynamic logit, semiparametric estimation, social networks, network formation, conditional maximum likelihood, state dependence, peer effects.

\medskip

\noindent \textit{JEL Classification:} C14, C23, C25, D85.

\section{Introduction}

Many important economic decisions are both persistent over time and embedded in social networks. For instance, adolescent smoking behavior exhibits strong dynamics and is shaped by evolving peer networks \citep{fletcher2010, nakajima2007}, while farmers' decisions to adopt new agricultural technologies depend on past experience and on information transmitted through social connections \citep{conley2010}. Understanding such decisions requires models that capture both state dependence, the direct effect of previous choices on current behavior, and social influence channeled through network interactions.

A central empirical challenge is that unobserved individual traits simultaneously shape economic outcomes, drive the formation of social ties, and generate persistence in behavior. Motivation, risk tolerance, or social capital, for instance, may make an individual both more likely to participate in a program and more likely to form connections with similar others. When these latent characteristics enter both the outcome equation and the network formation process, standard estimation approaches, including naive logit, logit augmented with network controls, and control function methods relying on parametric link formation models, generally produce inconsistent estimates.

To illustrate the problem, consider a researcher who wants to estimate the causal effect of disposable income (pocket money) on the probability of smoking among adolescents, while controlling for past smoking behavior, which is observed. Both the decision to smoke and the choice of friends are partially determined by unobserved characteristics such as risk attitude, impulsivity, or attachment to conventional norms, which evolve during adolescence and influence both smoking decisions and the formation of friendships. Students who are more risk-tolerant may be both more likely to smoke and more likely to befriend other risk-taking peers, generating a confounding pathway between network structure and smoking behavior. The researcher does not have access to data on these characteristics and thus cannot control for them using conventional methods.

One approach to this problem is to collect network data and suppose that these unobserved characteristics are revealed by linking behavior in the network. For instance, the researcher might observe pairs of adolescents who identify as friends and believe that adolescents with similar reported friendships have similar unobserved characteristics. In practice, however, it is unclear how to use network data to control for these unobserved characteristics, because the adjacency matrix is a high-dimensional object that cannot be directly included as a regressor.

In a static cross-sectional setting, \citet{gueyap2026} proposes a solution to this problem by matching agents whose linking behavior reveals identical latent social characteristics, achieving point identification of slope parameters in a semiparametric logit model without imposing parametric restrictions on how links are formed. However, the static framework cannot address several features that are central to the smoking application described above. It cannot distinguish true state dependence, the causal effect of past smoking on current smoking, from spurious persistence generated by unobserved heterogeneity. This distinction is critical for policy: if smoking persistence is driven by habit formation, interventions that help individuals quit can have lasting effects, whereas if persistence reflects stable unobserved traits, temporary interventions will not produce durable changes. The static framework also treats the unobserved social characteristic as time-invariant, whereas traits such as risk attitude and impulsivity evolve as adolescents move through different life stages. Finally, it uses only a single cross-section of the network, ignoring information contained in the evolution of friendships over time. Addressing these limitations requires a dynamic panel framework.

This paper develops such a framework by proposing identification and estimation methods for a dynamic partially linear logit model with endogenous, time-varying social networks. The model takes the form
\begin{equation}\label{eq:model}
y_{it} = \ind\{X_{it}'\beta + \alpha y_{it-1} + \lambda(w_{it}) - \varepsilon_{it} \geq 0\},
\end{equation}
where $y_{it}$ is a binary outcome for agent $i$ at time $t$, $X_{it}$ is a vector of observed covariates, $y_{it-1}$ is the lagged outcome capturing dynamics, $\lambda(\cdot)$ is an unknown function of the unobserved index of social characteristic $w_{it}\in[0,1]$ that also governs network formation, and $\varepsilon_{it}$ follows a standard logistic distribution. In the smoking example, $y_{it}$ indicates whether adolescent $i$ smokes at survey wave $t$, $X_{it}$ is pocket money, $y_{it-1}$ indicates whether the adolescent smoked in the previous wave, and $\lambda(w_{it})$ captures the influence of unobserved characteristics such as risk attitude, attachment to conventional norms, and impulsivity on smoking through an unknown function of the latent social index $w_{it}$. The parameters of interest are the slope coefficient $\beta$, measuring the causal effect of pocket money on smoking, and the state dependence parameter $\alpha$, measuring the causal effect of past smoking on current smoking. The function $\lambda$ and the network formation process are treated as nuisance components.

The identification strategy combines three ideas. First, following \citet{chamberlain1980} and \citet{honore2000panel}, I exploit the logistic structure to construct conditional likelihoods that eliminate nuisance terms by comparing agents whose outcome sequences have the same sufficient statistics. Second, building on \citet{auerbach2022}, I use the concept of network-type equivalence to match agents whose observed network formation behavior reveals identical unobserved social characteristics, thereby controlling for the endogenous social influence component $\lambda(w_{it})$ without specifying the network formation model. Third, I employ local smoothing conditions on time-adjacent covariates, in the spirit of \citet{honore2000panel}, to handle the interaction between dynamics and time-varying unobserved heterogeneity.

The identification result shows that when two agents $i$ and $j$ have identical network types at every period, so that their observed linking behavior is statistically indistinguishable, and when their covariates are locally stable across consecutive periods, the conditional probability of observing a particular ordering of their outcomes depends only on the structural parameters $(\beta, \alpha)$. All terms involving the unknown function $\lambda$ cancel through cross-agent differencing and temporal smoothing.

I propose a feasible estimator that implements this identification strategy using kernel-weighted conditional maximum likelihood. The estimator weights observations by two kernel functions: one that matches agents based on estimated codegree distance, a measure of network formation similarity that can be consistently estimated from the adjacency matrix, and another that smooths over time-adjacent covariate differences. Under standard regularity conditions, I establish consistency and asymptotic normality of the estimator at the $\sqrt{n}$ rate.

To the best of my knowledge, this paper provides the first identification and estimation results for dynamic logit models with endogenous network formation, a setting in which existing approaches either ignore the network \citep{honore2000panel, ouyang2024}, treat it as exogenous, or require parametric assumptions on the link formation process. It extends the semiparametric logit framework of \citet{gueyap2026}, which addresses endogenous network formation in a static cross-sectional setting, to a dynamic panel environment where dynamics, time-varying unobserved heterogeneity, and evolving networks create identification challenges that are absent in the static case. It also generalizes the \citet{honore2000panel} conditional likelihood approach to accommodate semiparametric social influence components that vary over time.

The approach is motivated by a wide range of empirical applications that feature panel data with evolving network structures. Using the longitudinal Add Health data, \citet{calvo2009} and \citet{patacchini2012} study how adolescent academic performance and delinquency evolve over time within friendship networks that themselves change across survey waves; in these settings, unobserved traits such as risk attitude or academic ambition plausibly affect both outcomes and friendship formation. \citet{conley2010} study Ghanaian pineapple farmers who make repeated adoption decisions over multiple growing seasons while learning from neighbors in their information network, where unobserved farming ability drives both technology choices and the pattern of information exchange. In labor markets, \citet{afridi2022} document how women's employment decisions are persistent and shaped by social networks in which unobserved gender norm adherence influences both participation and the structure of social ties.

The remainder of the paper proceeds as follows. Section~\ref{sec:lit} reviews the related literature. Section~\ref{sec:model} introduces the model. Section~\ref{sec:identification} presents the identification results. Section~\ref{sec:estimation} develops the feasible estimator. Section~\ref{sec:asymptotics} establishes asymptotic properties. Section~\ref{sec:montecarlo} presents Monte Carlo simulation evidence. Section~\ref{sec:empirical} provides an empirical application. Section~\ref{sec:extension} extends the framework to ordered outcomes. Section~\ref{sec:conclusion} concludes.

\section{Literature Review}\label{sec:lit}
This paper lies at the intersection of three strands of the econometrics literature: dynamic logit models, semiparametric panel data methods, and the econometrics of endogenous social networks. Each literature provides tools that address part of the problem considered here, but none simultaneously accommodates state dependence, a nonparametric social influence function, and endogenous network formation in a short panel.

The literature on dynamic panel logit models originates with \citet{chamberlain1980}, who showed that the logistic specification admits sufficient statistics for individual fixed effects, allowing consistent estimation without specifying the distribution of unobserved heterogeneity. \citet{honore2000panel} extended this approach to dynamic settings by exploiting conditional likelihood arguments based on pairs of observations with identical sufficient statistics, requiring covariates to remain locally constant across adjacent periods. More recent contributions have expanded identification and estimation methods for dynamic logit models. \citet{honore2025moment} derive moment conditions that eliminate nuisance parameters without relying on conditional likelihood, while \citet{honore2025ordered} and \citet{muris2025} extend these ideas to ordered outcomes. \citet{ouyang2024} propose a semiparametric matching estimator that weakens distributional assumptions on individual heterogeneity, and \citet{dano2025} provide a comprehensive treatment of identification strategies for static and dynamic logit models with individual, time, and dyadic effects. A common feature of this literature is that unobserved heterogeneity is treated as a time-invariant fixed effect. These methods therefore do not apply when the unobserved component evolves over time and enters the model through an unknown function of a latent social characteristic. The present paper addresses this problem by combining conditional likelihood arguments with network-type matching and local temporal smoothing.

A second related literature studies semiparametric estimation in panel data models with unknown nuisance functions. The partially linear framework of \citet{robinson1988} established that finite-dimensional parameters can be estimated consistently without specifying the functional form of a nonparametric component. Extensions to panel data include the kernel-based estimators of \citet{henderson2008} and the marginal integration approach of \citet{qian2012}, both of which accommodate unknown functions while controlling for individual heterogeneity. These methods, however, are developed for continuous outcomes with linear or partially linear conditional mean models. They do not address nonlinear dynamic logit models in which the nuisance function depends on an unobserved characteristic that is itself recovered only indirectly through network information.

The identification of peer effects and endogenous social interactions has been a central topic since \citet{manski1993}, who demonstrated that endogenous interactions, contextual effects, and correlated unobservables are generically not separately identifiable without additional restrictions. A large empirical literature has documented the importance of social interactions in education \citep{sacerdote2001,calvo2009}, consumption \citep{degiorgi2020}, labor markets \citep{beugnot2013}, and health behaviors \citep{christakis2008}; surveys are provided by \citet{bramoulle2020} and \citet{jackson2017}. Existing approaches to endogenous network formation generally rely on one of three strategies. The first specifies a parametric or semiparametric model for link formation \citep{goldsmith2013,hsieh2016,arduini2015}. The second derives identification from equilibrium restrictions in strategic network formation models \citep{badev2021}. The third exploits control-function or exclusion-restriction arguments under alternative identifying assumptions \citep{johnsson2021,kashaev2025}. Although these approaches differ substantially, they all require assumptions on the network formation process that may be difficult to justify empirically. By contrast, the network-type approach does not model how links are formed. Instead, it exploits the informational content of observed linking behavior itself, using network similarity to recover equivalence classes of latent characteristics without specifying the underlying formation mechanism.

The papers most closely related to the present work are \citet{auerbach2022} and \citet{gueyap2026}. \citet{auerbach2022} introduced the idea of matching agents according to their network formation behavior in partially linear regression models, showing that network similarity can eliminate latent heterogeneity without specifying the network formation process. His framework is static and applies to continuous outcomes with linear conditional means. \citet{gueyap2026} extended this idea to semiparametric logit models in a static cross-sectional setting, establishing point identification through codegree-based matching and proposing a kernel-weighted maximum likelihood estimator. The present paper extends this line of research to dynamic panel logit models. The dynamic setting introduces identification challenges that do not arise in static environments. State dependence interacts with time-varying latent heterogeneity, requiring conditional likelihood arguments together with local temporal smoothing, while the evolution of the network requires matching to hold across time rather than at a single cross section. The resulting framework combines dynamic panel logit methods with network-type matching to identify and estimate a semiparametric logit model under endogenous network formation without imposing a parametric specification on the link formation process.

\section{Model}\label{sec:model}

Consider a panel of $n$ agents observed over $T+1$ periods $t = 0, 1, \ldots, T$, where $T \geq 3$. Each agent $i \in \{1, \ldots, n\}$ is characterized by an observed vector of time-varying covariates $X_{it} \in \R^k$, a binary outcome $y_{it} \in \{0,1\}$, and an index of unobserved social characteristics $w_{it} \in [0,1]$. The initial outcome $y_{i0}$ is observed.

The binary outcome is generated by the dynamic partially linear binary choice model:
\begin{equation}\label{eq:outcome}
y_{it} = \ind\left\{X_{it}'\beta + \alpha y_{it-1} + \lambda(w_{it}) - \varepsilon_{it} \geq 0\right\}, \quad t = 1, \ldots, T,
\end{equation}
where $\beta \in \R^k$ is a vector of slope parameters, $\alpha \in \R$ captures state dependence, $\lambda: [0,1] \to \R$ is an unknown measurable function representing the social influence component, and $\varepsilon_{it}$ is an idiosyncratic error following a standard logistic distribution $F(x) = e^x/(1+e^x)$. The explanatory variable $X_{it}$ does not contain an intercept, as the social influence term $\lambda(w_{it})$ absorbs it.

The index of unobserved social characteristic $w_{it}$ captures latent traits, such as motivation, trust, risk tolerance, or social capital, that influence both the agent's economic behavior and the formation of social ties. Unlike standard fixed effects models, I allow $w_{it}$ to vary across both agents and time periods, accommodating settings where the relevant social characteristics evolve.

In addition to outcomes and covariates, the researcher observes a binary adjacency matrix $D_t$ of dimension $n \times n$ at each period $t$. The element $D_{ijt} = 1$ if there is a direct link between agents $i$ and $j$ at time $t$, and $D_{ijt} = 0$ otherwise. Self-links are excluded ($D_{iit} = 0$) and all links are undirected ($D_{ijt} = D_{jit}$). The network evolves over time according to a nonparametric link formation model:
\begin{equation}\label{eq:network}
D_{ijt} = \ind\left\{f(w_{it}, w_{jt}; y_{it-1}, y_{jt-1}) \geq \eta_{ijt}\right\} \ind(i \neq j), \quad i,j = 1, \ldots, n, \quad t = 1, \ldots, T,
\end{equation}
where $f$ is an unknown symmetric measurable function and $\eta_{ijt}$ is an idiosyncratic time-varying disturbance. The link formation function $f$ depends on the agents' current index of unobserved social characteristics $(w_{it}, w_{jt})$ and their lagged outcomes $(y_{it-1}, y_{jt-1})$, allowing network structure to be influenced by prior behavior. This creates a dynamic feedback between outcomes and network formation.

The link formation function $f$ in~\eqref{eq:network} does not include the observed covariates $X_{it}$ as arguments. This is without loss of generality for the identification results. If the link formation process also depended on observed characteristics, so that $$D_{ijt} = \ind\{f(w_{it}, w_{jt}; y_{it-1}, y_{jt-1}, X_{it}, X_{jt}) \geq \eta_{ijt}\},$$ the same identification strategy would apply by conditioning on $(X_{it}, X_{jt})$ throughout. Since these covariates are observed by the econometrician, all the network-type matching and conditional likelihood arguments in Section~\ref{sec:identification} carry through with the conditioning set augmented by $X$. The specification in~\eqref{eq:network} is therefore adopted for notational simplicity.

Several features of this specification are worth noting. First, the model nests the static semiparametric logit model of \citet{gueyap2026} as a special case when $\alpha = 0$ and $w_{it} = w_i$. Second, the inclusion of lagged outcomes in the link formation function captures empirically relevant patterns such as homophily on past behavior, such as the tendency of program adopters to form ties with other adopters. Third, the time-varying nature of $w_{it}$ accommodates evolving social characteristics, though the identification strategy also covers the simpler case of time-invariant $w_i$.

Figure~\ref{fig:network} illustrates the joint evolution of outcomes and network structure over three periods using the adolescent smoking example.

\begin{figure}[!htbp]
\centering
\begin{tikzpicture}[
    smoker/.style={circle, draw=black, fill=black, text=white, minimum size=22pt, font=\footnotesize\bfseries},
    nonsmoker/.style={circle, draw=gray!60, fill=gray!20, text=black, minimum size=22pt, font=\footnotesize\bfseries},
    link/.style={gray!50, line width=1pt},
    trans/.style={->, dashed, gray!60, line width=0.8pt},
    scale=0.95
]

\node[font=\bfseries] at (2, 4.5) {$t = 1$};
\node[font=\scriptsize, gray] at (2, 4.1) {Network $D_1$};

\draw[link] (1, 3) -- (3, 3);
\draw[link] (1, 3) -- (0.5, 1.8);
\draw[link] (3, 3) -- (3.5, 1.8);
\draw[link] (0.5, 1.8) -- (3.5, 1.8);
\draw[link] (0.5, 1.8) -- (2, 0.8);
\draw[link] (3.5, 1.8) -- (2, 0.8);

\node[nonsmoker] (A1) at (1, 3) {1};
\node[smoker]    (A2) at (3, 3) {2};
\node[nonsmoker] (A3) at (0.5, 1.8) {3};
\node[smoker]    (A4) at (3.5, 1.8) {4};
\node[nonsmoker] (A5) at (2, 0.8) {5};

\node[font=\bfseries] at (7, 4.5) {$t = 2$};
\node[font=\scriptsize, gray] at (7, 4.1) {Network $D_2$};

\draw[link] (6, 3) -- (8, 3);
\draw[link] (6, 3) -- (5.5, 1.8);
\draw[link] (8, 3) -- (8.5, 1.8);
\draw[link] (5.5, 1.8) -- (8.5, 1.8);
\draw[link] (8, 3) -- (7, 0.8);

\node[smoker]    (B1) at (6, 3) {1};
\node[smoker]    (B2) at (8, 3) {2};
\node[nonsmoker] (B3) at (5.5, 1.8) {3};
\node[smoker]    (B4) at (8.5, 1.8) {4};
\node[nonsmoker] (B5) at (7, 0.8) {5};

\node[font=\bfseries] at (12, 4.5) {$t = 3$};
\node[font=\scriptsize, gray] at (12, 4.1) {Network $D_3$};

\draw[link] (11, 3) -- (13, 3);
\draw[link] (11, 3) -- (10.5, 1.8);
\draw[link] (13, 3) -- (13.5, 1.8);
\draw[link] (11, 3) -- (13.5, 1.8);
\draw[link] (10.5, 1.8) -- (12, 0.8);

\node[smoker]    (C1) at (11, 3) {1};
\node[smoker]    (C2) at (13, 3) {2};
\node[smoker]    (C3) at (10.5, 1.8) {3};
\node[nonsmoker] (C4) at (13.5, 1.8) {4};
\node[nonsmoker] (C5) at (12, 0.8) {5};

\draw[trans] (4.2, 2.2) -- (5.0, 2.2);
\draw[trans] (9.2, 2.2) -- (10.0, 2.2);

\node[smoker, minimum size=14pt] at (2, -0.3) {};
\node[font=\scriptsize, right] at (2.4, -0.3) {$y_{it} = 1$ (Smoker)};
\node[nonsmoker, minimum size=14pt] at (5.3, -0.3) {};
\node[font=\scriptsize, right] at (5.7, -0.3) {$y_{it} = 0$ (Non-smoker)};
\draw[link] (9, -0.3) -- (9.7, -0.3);
\node[font=\scriptsize, right] at (9.8, -0.3) {Friendship link};

\end{tikzpicture}
\caption{Evolution of outcomes and network structure over three periods. Black nodes represent smokers ($y_{it} = 1$) and grey nodes represent non-smokers ($y_{it} = 0$). Agent~1 transitions from non-smoking to smoking between $t = 1$ and $t = 2$ and persists at $t = 3$, illustrating state dependence ($\alpha$). Agent~3 begins smoking at $t = 3$ as evolving unobserved characteristics increase risk-taking, while simultaneously forming a new link with Agent~1. Agent~4 quits smoking at $t = 3$ following a shift in unobserved traits. The friendship network changes at each period: the econometrician observes both outcomes and links but cannot directly observe the latent characteristics $w_{it}$ that drive both processes.}
\label{fig:network}
\end{figure}

The model accommodates a range of empirical settings. In the adolescent smoking application of \citet{fletcher2010}, who uses the longitudinal Add Health data, $y_{it}$ indicates whether student $i$ smokes in survey wave $t$, the friendship network $D_t$ is observed at each wave and evolves as students form and dissolve friendships, and $y_{it-1}$ captures state dependence through habit formation and addiction. The unobserved characteristic $w_{it}$ can be interpreted as a latent index of sensation-seeking or risk attitude that evolves during adolescence, affecting both the propensity to smoke and the tendency to befriend other risk-taking students. Since $w_{it}$ enters both the outcome equation and the link formation process, standard logit estimates of peer effects are biased.

A similar structure arises in the agricultural technology adoption setting of \citet{conley2010}, who study pineapple farmers in Ghana making repeated planting decisions across growing seasons. Here $y_{it}$ indicates whether farmer $i$ adopts the new technology in season $t$, the information network $D_t$ records which farmers exchange agricultural advice and evolves as farmers update their contacts, and $y_{it-1}$ captures learning-by-doing and switching costs. The unobserved characteristic $w_{it}$ represents latent farming ability and openness to innovation, which may evolve as farmers accumulate experience. More able farmers are both more likely to adopt and more sought after as advisors, generating the same confounding pathway between network structure and outcomes.

The framework also applies to the study of juvenile delinquency by \citet{patacchini2012}, who use longitudinal friendship network data from Add Health. In this setting $y_{it}$ indicates whether adolescent $i$ engages in delinquent behavior in wave $t$, the friendship network $D_t$ evolves across survey waves, and $y_{it-1}$ captures persistence due to reputation effects, reduced stigma, or accumulated criminal capital. The unobserved characteristic $w_{it}$ represents an evolving index of impulsivity or attachment to conventional norms, which influences both the propensity toward delinquency and the formation of friendships with other delinquent youth through conformist motives.

\section{Identification}\label{sec:identification}

This section develops the main identification results. I first define the concept of network-type equivalence for the dynamic setting, then establish point identification of the parameters $(\beta, \alpha)$.

Before developing the formal arguments, it is useful to understand why the functions $\lambda$ and $f$ cannot be identified separately. The latent index $w_{it}$ is not observed, and the network data alone cannot pin it down. To see this, consider the homophily specification $f(w_{it}, w_{jt}; y_{it-1}, y_{jt-1}) = 1 - (w_{it} - w_{jt})^2$. If $\{w_{it}\}_{i,t}$ generates a sequence of adjacency matrices $\{D_t\}_{t}$, then the transformed profile $\{\tilde{w}_{it}\}_{i,t} = \{1 - w_{it}\}_{i,t}$ generates exactly the same matrices, since $(w_{it} - w_{jt})^2 = ((1 - w_{it}) - (1 - w_{jt}))^2=(\tilde{w}_{it} -\tilde{w}_{jt})^2$. The econometrician cannot distinguish between $w_{it}$ and $\tilde{w}_{it}$ from the observed links, and consequently cannot distinguish between $\lambda(w_{it})$ and $\lambda(\tilde{w}_{it})$. The functions $\lambda$ and $f$ are therefore not separately identified, and the identification strategy instead targets the structural parameters $(\beta, \alpha)$ directly.

\subsection{Network Type Equivalence}

The parameters of interest are $\beta$ and $\alpha$. The functions $\lambda$ and $f$ are treated as nuisance components. Following the approach developed in \citet{gueyap2026} for the static case, identification relies on comparing agents whose network formation behavior reveals that they share the same index of unobserved social characteristics.

In the dynamic setting, each agent $i$'s type at period $t$ consists of two components: the index of unobserved social characteristics $w_{it}$ and the observed lagged outcome $y_{it-1}$. The link formation function $f(w_{it}, w_{jt};\, y_{it-1}, y_{jt-1})$ in~\eqref{eq:network} depends on both components for each agent. Since the lagged outcomes are observed and discrete, they can be conditioned on directly from the data, reducing the identification problem to recovering the unobserved component $w_{it}$ through network matching.

For each fixed pair of lagged outcomes $(y, y') \in \{0,1\}^2$, define the \emph{conditional individual link function} of agent $i$ at period $t$ as:
\begin{equation}\label{eq:condlink}
f_{w_{it};\, y, y'}(s) = f(w_{it}, s;\, y, y'), \quad s \in [0,1].
\end{equation}
This is the conditional probability that an agent whose index is $w_{it}$ and whose lagged outcome is $y$ links with agents of every other social characteristic in $[0,1]$ whose lagged outcome is $y'$.
For each fixed $(y, y')$, this is a function of a single argument $s \in [0,1]$, exactly as in the static individual link function $f_{w_i}(s) = f(w_i, s)$ of \citet{auerbach2022} and \citet{gueyap2026}. The key difference is that the dynamic setting introduces four such functions, one for each configuration $(y, y') \in \{0,1\}^2$, whereas the static setting has only one.

To quantify the similarity between two agents' linking behavior, define the \emph{network type distance} at period $t$ for agents sharing the same lagged outcome $y_{it-1} = y_{jt-1} = y$ as:
\begin{align}\label{eq:netdist}
\rho_{ijt} = \|f_{w_{it};\, y, y'}-f_{w_{jt};\, y, y'}\|_2&=\left[\sum_{y' \in \{0,1\}} \int \big(f_{w_{it};\, y, y'}(s) - f_{w_{jt};\, y, y'}(s)\big)^2 \, ds\right]^{1/2} \notag\\
&= \left[\sum_{y' \in \{0,1\}} \int \big(f(w_{it}, s;\, y, y') - f(w_{jt}, s;\, y, y')\big)^2 \, ds\right]^{1/2}.
\end{align}
This aggregates the $L^2$ distance between the conditional individual link functions over both possible partner lagged outcomes. When $\rho_{ijt} = 0$, agents $i$ and $j$ produce the same conditional linking probabilities with every potential partner: for any characteristic $s \in [0,1]$ and any lagged outcome $y' \in \{0,1\}$, the probability that agent $i$ forms a link equals the probability that agent $j$ does. In this case, we say that agents $i$ and $j$ are \emph{conditionally network-type equivalent} at period $t$.

Conditional network-type equivalence does not imply that the two agents share the same value of the index of unobserved social characteristics. To see this, consider the link formation function $f(w_{it}, w_{jt}; y_{it-1}, y_{jt-1}) = 1 - (w_{it} - \tfrac{1}{2})^2 - (w_{jt} - \tfrac{1}{2})^2$. The individual link function is $f_{w;\, y, y'}(s)  = 1 - (w - \tfrac{1}{2})^2 - (s - \tfrac{1}{2})^2$, which depends on $w$ only through $(w - \tfrac{1}{2})^2$. Agents with $w_{it} = 0.3$ and $w_{jt} = 0.7$ produce identical linking probabilities with every other agent, so $\rho_{ijt} = 0$ even though $w_{it} \neq w_{jt}$. In the adolescent smoking application, a low-effort student and a high-effort student may form friendships with the same types of peers yet differ in their underlying characteristic; the identification strategy does not require distinguishing between them, only that their social influence on outcomes is the same. Figure~\ref{fig:network_equiv} illustrates such a pair.

\begin{figure}[!htbp]
\centering
\begin{tikzpicture}[
    smoker/.style={circle, draw=black, fill=black, text=white, minimum size=22pt, font=\footnotesize\bfseries},
    nonsmoker/.style={circle, draw=gray!60, fill=gray!20, text=black, minimum size=22pt, font=\footnotesize\bfseries},
    link/.style={gray!50, line width=1pt},
    eqlink/.style={blue, line width=1.2pt},
    trans/.style={->, dashed, gray!60, line width=0.8pt},
    scale=0.95
]

\node[font=\bfseries] at (2, 4.5) {$t = 1$};
\node[font=\scriptsize, gray] at (2, 4.1) {Network $D_1$};

\draw[eqlink] (1, 3) -- (3, 3);        
\draw[eqlink] (3, 3) -- (3.5, 1.8);    
\draw[link] (0.5, 1.8) -- (2, 0.8);  
\draw[eqlink] (3.5, 1.8) -- (2, 0.8);  
\draw[eqlink] (1, 3) -- (2, 0.8);      

\node[nonsmoker] (A1) at (1, 3) {1};
\node[smoker]    (A2) at (3, 3) {2};
\node[nonsmoker] (A3) at (0.5, 1.8) {3};
\node[smoker]    (A4) at (3.5, 1.8) {4};
\node[nonsmoker] (A5) at (2, 0.8) {5};

\node[font=\bfseries] at (7, 4.5) {$t = 2$};
\node[font=\scriptsize, gray] at (7, 4.1) {Network $D_2$};

\draw[eqlink] (6, 3) -- (8, 3);              
\draw[eqlink] (6, 3) -- (8.5, 1.8);          
\draw[eqlink] (6, 3) -- (7, 0.8);            
\draw[eqlink] (5.5, 1.8) -- (8, 3);          
\draw[eqlink] (5.5, 1.8) -- (8.5, 1.8);      
\draw[eqlink] (5.5, 1.8) -- (7, 0.8);        
\draw[link] (8.5, 1.8) -- (7, 0.8);          

\node[smoker]   (B1) at (6, 3) {1};
\node[smoker]    (B2) at (8, 3) {2};
\node[smoker]   (B3) at (5.5, 1.8) {3};
\node[nonsmoker] (B4) at (8.5, 1.8) {4};
\node[nonsmoker] (B5) at (7, 0.8) {5};

\node[font=\bfseries] at (12, 4.5) {$t = 3$};
\node[font=\scriptsize, gray] at (12, 4.1) {Network $D_3$};

\draw[link] (11, 3) -- (13, 3);          
\draw[eqlink] (11, 3) -- (13.5, 1.8);      
\draw[eqlink] (13, 3) -- (13.5, 1.8);      
\draw[link] (10.5, 1.8) -- (12, 0.8);    
\draw[link] (13.5, 1.8) -- (12, 0.8);    

\node[smoker]    (C1) at (11, 3) {1};
\node[smoker]    (C2) at (13, 3) {2};
\node[nonsmoker] (C3) at (10.5, 1.8) {3};
\node[smoker]    (C4) at (13.5, 1.8) {4};
\node[nonsmoker] (C5) at (12, 0.8) {5};

\draw[trans] (4.2, 2.2) -- (5.0, 2.2);
\draw[trans] (9.2, 2.2) -- (10.0, 2.2);

\node[smoker, minimum size=14pt] at (3.5, -0.3) {};
\node[font=\scriptsize, right] at (3.9, -0.3) {$y_{it} = 1$ (smoker)};
\node[nonsmoker, minimum size=14pt] at (7, -0.3) {};
\node[font=\scriptsize, right] at (7.4, -0.3) {$y_{it} = 0$ (non-smoker)};

\end{tikzpicture}
\caption{Conditionally network-type equivalent agents. At $t = 1$, agents~1 and~4 are connected to the same set of other agents (blue links): both link to agents~2 and~5. Here $\rho_{14,1} = 0$ provided $y_{1,0} = y_{4,0} $. However, agents 2 and 5 cannot be conditionally network-type equivalent because agent 2 is not linked with 3. At $t = 2$, agents~1 and~3 share the same lagged outcome ($y_{1,1} = y_{3,1} = 0$) and are connected to the same set of other agents: both link to agents~2, 4, and~5. The econometrician cannot distinguish their linking behavior, so $\rho_{13,2} = 0$. At $t = 3$, agents 1 and 2 share the same lagged outcome ($y_{1,2} = y_{2,2} = 1$) and are connected to the same set of other agents: both link to agent 4. They form a second conditionally network-type equivalent pair ($\rho_{12,3} = 0$).}
\label{fig:network_equiv}
\end{figure}

When the link function does not depend on lagged outcomes, so that $f(w_{it}, w_{jt};\, y_{it-1}, y_{jt-1}) = g(w_{it}, w_{jt})$ for some function $g$, the conditional individual link function $f_{w_{it};\, y,y'}(s) = g(w_{it}, s)$ is the same for all $(y,y')$, and the distance reduces to $\rho_{ijt} = \sqrt{2}\,\|g_{w_{it}} - g_{w_{jt}}\|_2$, recovering (up to a constant) the definition in \citet{gueyap2026}. In this case, every period is informative for type matching, regardless of lagged outcomes.

I now state the formal assumptions required for identification.

\begin{assumption}[Sampling and distributional structure]\label{ass:sampling}~
\begin{enumerate}[label=(\roman*)]
\item $(X_i, w_i, \varepsilon_i)$ are i.i.d.\ across $i = 1, \ldots, n$, where $X_i = (X_{i1}, \ldots, X_{iT})$, $w_i = (w_{i1}, \ldots, w_{iT})$, and $\varepsilon_i = (\varepsilon_{i1}, \ldots, \varepsilon_{iT})$.
\item The random array $\{\eta_{ijt}\}_{i,j=1}^n$ is symmetric and independent of $(X_i, w_i, \varepsilon_i)$ with i.i.d.\ entries above the diagonal for each $t$.
\item $w_{it}$ and $\eta_{ijt}$ have standard uniform marginals.
\item $\varepsilon_{it}$ follows a standard logistic distribution, independently across $i$ and $t$.
\item The binary outcomes $y_{it}$ and the binary adjacency matrices $D_t$ are generated by equations~\eqref{eq:outcome} and~\eqref{eq:network}, respectively.
\item $\lambda$ and $f$ are Lebesgue-measurable, with $f$ jointly symmetric: $f(w_1, w_2;\, y, y') = f(w_2, w_1;\, y', y)$. 
\end{enumerate}
\end{assumption}

Assumption~\ref{ass:sampling}(i) is standard in cross-sectional network models. It allows arbitrary temporal dependence within an agent's sequence of covariates, latent indices, and shocks, but requires that the primitive random variables are drawn independently across agents. This rules out aggregate shocks common to all agents but does not restrict the dependence structure that arises endogenously through the network. Assumption~\ref{ass:sampling}(ii) requires the link formation shocks $\eta_{ijt}$ to be symmetric ($\eta_{ijt} = \eta_{jit}$), ensuring consistency with undirected links, and independent of the agent primitives $(X_i, w_i, \varepsilon_i)$. The independence condition means that unobserved factors affecting both linking and outcomes operate through $w_{it}$, not through $\eta_{ijt}$. The i.i.d.\ condition above the diagonal treats all potential links as exchangeable conditional on the agents' types, a common restriction in random graph models.

Assumption~\ref{ass:sampling}(iii) is a normalization. Any continuous distribution for $w_{it}$ can be mapped to the uniform through the probability integral transform, with the functions $\lambda$ and $f$ absorbing the transformation. The uniform marginal for $\eta_{ijt}$ ensures that $f(w_{it}, w_{jt};\, y_{it-1}, y_{jt-1})$ can be interpreted as a conditional linking probability. Assumption~\ref{ass:sampling}(iv) imposes the logistic distribution on the idiosyncratic error $\varepsilon_{it}$. This is a substantive distributional assumption. The cancellation of the social influence terms in the conditional likelihood relies on the log-linear structure of the logistic odds ratio, a property first exploited by \citet{andersen1970} and \citet{chamberlain1980} to eliminate fixed effects through sufficient statistics, and shown by \citet{magnac2004} to be essentially unique to the logistic family. Assumption~\ref{ass:sampling}(v) is a model consistency condition tying the observed data $(y_{it}, D_t)$ to the structural equations~\eqref{eq:outcome} and~\eqref{eq:network}.

Assumption~\ref{ass:sampling}(vi) is a joint symmetry condition. It does not require $f$ to be symmetric in the $w$'s alone or in the $y$'s alone; it requires that swapping both pairs simultaneously leaves $f$ unchanged. This is the minimal condition ensuring consistency with undirected links: since $D_{ijt} = D_{jit}$ and $\eta_{ijt} = \eta_{jit}$, the threshold crossing condition forces $f(w_{it}, w_{jt};\, y_{it-1}, y_{jt-1}) = f(w_{jt}, w_{it};\, y_{jt-1}, y_{it-1})$. 

\begin{assumption}[Network equivalence implies social influence equivalence]\label{ass:netequiv}
The social influence function $\lambda$ satisfies: for all $t = 1, \ldots, T$,
\[
\E\left[\big(\lambda(w_{it}) - \lambda(w_{jt})\big)^2 \,\Big|\; \rho_{ijt} = 0\right] = 0.
\]
\end{assumption}

Assumption~\ref{ass:netequiv} states that agents who are conditionally network-type equivalent at period $t$ experience the same social influence. As the example above illustrates, this does not require the agents to share the same value of the latent index; it requires only that the social influence function $\lambda$ is constant on the equivalence classes induced by the link formation function $f$. 
 Assumption~\ref{ass:netequiv} requires $\lambda(w) = \lambda(w')$ whenever $w$ and $w'$ produce the same conditional individual link function. 
In the adolescent smoking application, if a low-effort student and a high-effort student form friendships with the same types of peers, Assumption~\ref{ass:netequiv} requires that their social influence on smoking behavior is the same.

\subsection{Identification of Slope Parameters}\label{sec:id_main}

The identification strategy combines three elements: (i) the Chamberlain conditional likelihood approach for logit models, (ii) cross-agent matching on network types to eliminate the social influence function, and (iii) local smoothing across consecutive time periods.

For any interior period $s \in \{1, \ldots, T-2\}$ (so that period $s+2$ is observed), define two outcome paths for agent $i$ that differ only in the ordering of outcomes at periods $s$ and $s+1$:
\begin{align*}
A_s &= \{y_{i0}, \ldots, y_{i,s-1}, \underbrace{0}_{y_{is}}, \underbrace{1}_{y_{i,s+1}}, y_{i,s+2}, \ldots, y_{iT}\}, \\
B_s &= \{y_{i0}, \ldots, y_{i,s-1}, \underbrace{1}_{y_{is}}, \underbrace{0}_{y_{i,s+1}}, y_{i,s+2}, \ldots, y_{iT}\}.
\end{align*}
These paths agree on all periods except $s$ and $s+1$, where the $(0,1)$ and $(1,0)$ patterns are swapped. The total $y_{is} + y_{i,s+1} = 1$ is the same for both paths. However, the lagged values entering the model differ at periods $s+1$ and $s+2$: in path $A_s$, the lag at $s+1$ is $0$ and the lag at $s+2$ is $1$; in path $B_s$, the lag at $s+1$ is $1$ and the lag at $s+2$ is $0$.

\begin{lemma}[Within-agent likelihood ratio]\label{lem:ratio}
Under Assumptions~\ref{ass:sampling} and~\ref{ass:netequiv}, conditional on $X_{i,s+2} = X_{i,s+1}$ and $\lambda(w_{i,s+2}) = \lambda(w_{i,s+1})$, the likelihood ratio for agent $i$ between paths $B_s$ and $A_s$ satisfies:
\begin{equation}\label{eq:ratio}
\frac{\Prob(y_i \in B_s \mid X_i, w_i)}{\Prob(y_i \in A_s \mid X_i, w_i)} = \exp\left[(X_{is} - X_{i,s+1})'\beta + \alpha(y_{i,s-1} - y_{i,s+2}) + \lambda(w_{is}) - \lambda(w_{i,s+1})\right].
\end{equation}
\end{lemma}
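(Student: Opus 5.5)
The plan is to write out the joint probability of each outcome path as a product of one-period logit transition probabilities, and then cancel the factors the two paths share. Write $z_{it} = X_{it}'\beta + \lambda(w_{it})$. Conditional on $(X_i, w_i)$ and the observed $y_{i0}$, equation~\eqref{eq:outcome} together with Assumption~\ref{ass:sampling}(iv) gives a first-order Markov structure. The shocks $\varepsilon_{it}$ are standard logistic and independent across $t$, and I will use that they are independent of $(X_i, w_i)$, which I read as implicit in Assumption~\ref{ass:sampling}(i),(iv) (see the last paragraph). The path probability therefore factorizes as
\[
\Prob(y_i \mid X_i, w_i) = \prod_{t=1}^T \frac{\exp\{y_{it}(z_{it} + \alpha y_{i,t-1})\}}{1 + \exp\{z_{it} + \alpha y_{i,t-1}\}}.
\]
The network $D_t$ does not enter the outcome equation, and $\eta_{ijt}$ is independent of $(X_i, w_i, \varepsilon_i)$ by Assumption~\ref{ass:sampling}(ii). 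So conditioning on $(X_i, w_i)$ alone is enough, and the feedback from outcomes to links plays no role in this within-agent calculation. Assumption~\ref{ass:netequiv} is not actually used here; it only matters later, for cross-agent matching.

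Next, observe that $A_s$ and $B_s$ agree in both the current value and the lag at every $t \notin \{s, s+1, s+2\}$. Hence those factors are identical and cancel in the ratio. The three remaining factors for $A_s$ are:
\begin{itemize}
\item period $s$: $1/(1+e^{z_{is}+\alpha y_{i,s-1}})$;
\item period $s+1$: $e^{z_{i,s+1}}/(1+e^{z_{i,s+1}})$;
\item period $s+2$: $e^{y_{i,s+2}(z_{i,s+2}+\alpha)}/(1+e^{z_{i,s+2}+\alpha})$.
\end{itemize}
The corresponding factors for $B_s$ are:
\begin{itemize}
\item period $s$: $e^{z_{is}+\alpha y_{i,s-1}}/(1+e^{z_{is}+\alpha y_{i,s-1}})$;
\item period $s+1$: $1/(1+e^{z_{i,s+1}+\alpha})$;
\item period $s+2$: $e^{y_{i,s+2} z_{i,s+2}}/(1+e^{z_{i,s+2}})$.
\end{itemize}
Taking the ratio, the period-$s$ denominators cancel, leaving $e^{z_{is}+\alpha y_{i,s-1}}$. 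The numerators at $s+1$ and $s+2$ contribute $e^{-z_{i,s+1}}e^{-\alpha y_{i,s+2}}$. What remains is the denominator term
\[
\frac{(1+e^{z_{i,s+1}})(1+e^{z_{i,s+2}+\alpha})}{(1+e^{z_{i,s+1}+\alpha})(1+e^{z_{i,s+2}})}.
\]

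The main step, and the only place the conditioning hypotheses enter, is showing that this last term equals one. Under $X_{i,s+2} = X_{i,s+1}$ and $\lambda(w_{i,s+2}) = \lambda(w_{i,s+1})$ we have $z_{i,s+2} = z_{i,s+1}$. The numerator and denominator are then the same product, so the term equals one. Collecting the surviving exponentials gives $\exp[(X_{is}-X_{i,s+1})'\beta + \alpha(y_{i,s-1}-y_{i,s+2}) + \lambda(w_{is}) - \lambda(w_{i,s+1})]$, which is~\eqref{eq:ratio}.

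The one point I would make explicit, and where I expect any subtlety to lie, is the factorization itself. It requires that, given $(X_i, w_i, y_{i0})$, each $\varepsilon_{it}$ is independent of past shocks and of $(X_i, w_i)$. If $w_{it}$ were allowed to respond to past $\varepsilon$ or $y$, conditioning on the whole vector $w_i$ would break the product form. I would therefore state that reading of Assumption~\ref{ass:sampling} at the outset.
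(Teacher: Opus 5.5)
Your proof is correct and follows the paper's argument essentially step for step. You factor each path into one-period logit transitions, cancel every factor outside periods $s, s+1, s+2$, and use $X_{i,s+2}'\beta+\lambda(w_{i,s+2}) = X_{i,s+1}'\beta+\lambda(w_{i,s+1})$ to collapse $(1+e^{z_{i,s+1}})(1+e^{z_{i,s+2}+\alpha})/[(1+e^{z_{i,s+1}+\alpha})(1+e^{z_{i,s+2}})]$ to one. Your explicit remark that the factorization needs $\varepsilon_{it}$ to be independent of $(X_i,w_i)$ and of past shocks is a worthwhile clarification rather than a departure: the paper uses this strict exogeneity only implicitly, and its gloss on Assumption~\ref{ass:sampling}(i) even appears to allow otherwise.
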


\begin{proof}
Under the logistic distributional assumption, the probability of path $A_s$ conditional on $(X_i, w_i)$ is:
\begin{align*}
\Prob(y_i \in A_s \mid X_i, w_i) &= p^{y_{i0}}\times(1-p)^{1-y_{i0}}\times\left[\prod_{\substack{t=1 \\ t \neq s, s+1, s+2}}^T \Prob(y_{it} \mid y_{it-1}, X_{it}, w_{it})\right] \\
&\quad \times \big(1 - F(X_{is}'\beta + \alpha y_{i,s-1} + \lambda(w_{is}))\big)
 \times F\big(X_{i,s+1}'\beta + \lambda(w_{i,s+1})\big) \\
&\quad \times \Prob(y_{i,s+2} \mid y_{i,s+1} = 1, X_{i,s+2}, w_{i,s+2}),
\end{align*}
where $p=\Prob(y_{i0}=1)$, and the contribution at period $s$ reflects $y_{is} = 0$ with lag $y_{i,s-1}$, at period $s+1$ reflects $y_{i,s+1} = 1$ with lag $y_{is} = 0$, and at period $s+2$ reflects the outcome $y_{i,s+2}$ with lag $y_{i,s+1} = 1$.

Similarly, for path $B_s$:
\begin{align*}
\Prob(y_i \in B_s \mid X_i, w_i) &= p^{y_{i0}}\times(1-p)^{1-y_{i0}}\times\left[\prod_{\substack{t=1 \\ t \neq s, s+1, s+2}}^T \Prob(y_{it} \mid y_{it-1}, X_{it}, w_{it})\right] \\
&\quad \times F(X_{is}'\beta + \alpha y_{i,s-1} + \lambda(w_{is})) \times \big(1 - F(X_{i,s+1}'\beta + \alpha + \lambda(w_{i,s+1}))\big) \\
&\quad \times \Prob(y_{i,s+2} \mid y_{i,s+1} = 0, X_{i,s+2}, w_{i,s+2}).
\end{align*}

The first and the product terms for $t \neq s, s+1, s+2$ cancel in the ratio. For the remaining terms, define $c_s = X_{is}'\beta + \alpha y_{i,s-1} + \lambda(w_{is})$ and $c = X_{i,s+1}'\beta + \lambda(w_{i,s+1})$. Under the condition $X_{i,s+2} = X_{i,s+1}$ and $\lambda(w_{i,s+2}) = \lambda(w_{i,s+1})$, the argument at period $s+2$ simplifies: the logistic index is $c + \alpha$ under path $A_s$ (lag is 1) and $c$ under path $B_s$ (lag is 0).

Using the logistic identities $F(x)/(1-F(x)) = e^x$ and $1-F(x) = F(-x)$:

\medskip
\noindent \textit{Period $s$:} $\frac{F(c_s)}{1 - F(c_s)} = e^{c_s}$.

\medskip
\noindent \textit{Periods $s+1$ and $s+2$ combined:} Under $X_{i,s+2} = X_{i,s+1}$ and $\lambda(w_{i,s+2}) = \lambda(w_{i,s+1})$:
\begin{align*}
&\frac{(1 - F(c + \alpha)) \cdot \big[F(c)^{y_{i,s+2}}(1 - F(c))^{1-y_{i,s+2}}\big]}{F(c) \cdot \big[F(c+\alpha)^{y_{i,s+2}}(1-F(c+\alpha))^{1-y_{i,s+2}}\big]} \\
&= \frac{1}{(1+e^{c+\alpha})} \cdot \frac{(1+e^c)}{e^c} \cdot e^{-\alpha y_{i,s+2}} \cdot \frac{1+e^{c+\alpha}}{1+e^c} = e^{-c} \cdot e^{-\alpha y_{i,s+2}}.
\end{align*}

Combining all periods:
\begin{align*}
\frac{\Prob(y_i \in B_s \mid X_i, w_i)}{\Prob(y_i \in A_s \mid X_i, w_i)} &= e^{c_s} \cdot e^{-c - \alpha y_{i,s+2}} \\
&=\exp\big[(X_{is} - X_{i,s+1})'\beta + \alpha(y_{i,s-1} - y_{i,s+2}) + \lambda(w_{is}) - \lambda(w_{i,s+1})\big]. \qedhere
\end{align*}
\end{proof}

The key observation is that the likelihood ratio in~\eqref{eq:ratio} still contains the nuisance term $\lambda(w_{is}) - \lambda(w_{i,s+1})$. To eliminate it, I employ cross-agent conditioning: comparing two agents $i$ and $j$ with identical network types.

\begin{theorem}[Identification of $\beta$ and $\alpha$]\label{thm:identification}
Under Assumptions~\ref{ass:sampling} and~\ref{ass:netequiv}, for any interior period $s \in \{1, \ldots, T-2\}$, conditional on $\rho_{ij\tau}=0$ for all $\tau$, $X_{i,s+2} = X_{i,s+1}$, and $y_{is} + y_{i,s+1} = 1$:
\begin{equation}\label{eq:id}
\Prob\big(y_i \in A_s,y_j \in B_s \,\big|\, y_i,y_j \in A_s \cup B_s, \, \mathcal{C}_{ijs}\big) = F\left[(\Delta_s X_j - \Delta_s X_i)'\beta + \alpha(y_{j,s-1} - y_{j,s+2} - y_{i,s-1} + y_{i,s+2})\right],
\end{equation}
where $\Delta_s X_i = X_{is} - X_{i,s+1}$ is the first difference of covariates between periods $s$ and $s+1$, and $\mathcal{C}_{ijs}$ collects the conditioning information.
\end{theorem}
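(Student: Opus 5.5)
The plan is to apply Lemma~\ref{lem:ratio} separately to agents $i$ and $j$, and then use the independence of agents' outcome sequences conditional on their primitives to write the pairwise conditional probability as a logistic function of the difference of the two log-likelihood ratios. Throughout, I read $\mathcal{C}_{ijs}$ as containing $(X_i,X_j)$, the lag values $y_{\cdot,s-1},y_{\cdot,s+2}$ of both agents, $\rho_{ij\tau}=0$ for all $\tau$, and the local stability conditions $X_{k,s+2}=X_{k,s+1}$ and $\lambda(w_{k,s+2})=\lambda(w_{k,s+1})$ for $k\in\{i,j\}$. The last condition is the one Lemma~\ref{lem:ratio} requires. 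The statement lists only $X_{i,s+2}=X_{i,s+1}$, so I will include the symmetric conditions for $j$ and the $\lambda$-stability as part of $\mathcal{C}_{ijs}$; that is what the phrase ``collects the conditioning information'' allows.

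\emph{Step 1.} Conditional on $(X_i,w_i,X_j,w_j)$ and the network, the outcome paths of $i$ and $j$ are independent. This follows from Assumption~\ref{ass:sampling}(i),(ii),(iv): the logistic errors are independent across agents, and the $\eta$'s are independent of the agents' primitives. Dependence through the network enters only via $(w,y_{\cdot,t-1})$, which we condition on. The one delicate point is that conditioning on $D_t$ also conditions on lagged outcomes. Since $A_s$ and $B_s$ differ in the lags at $s+1,s+2$, I will treat the network at those periods as part of the conditioning only through $\rho_{ij\tau}=0$, which depends on $w$ and not on realized links. This gives
\[
\Prob(y_i\in A_s,y_j\in B_s\mid y_i,y_j\in A_s\cup B_s,\cdot)=\frac{P_i(A)P_j(B)}{P_i(A)P_j(B)+P_i(B)P_j(A)} .
\]
Here ``$y_i,y_j\in A_s\cup B_s$'' should be read as the event that exactly one of the cross-assignments occurs, namely $\{(A,B),(B,A)\}$. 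Dividing through by $P_i(A)P_j(B)$ yields $F(\log R_j-\log R_i)$, where $R_k=P_k(B)/P_k(A)$.

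\emph{Step 2.} Substitute \eqref{eq:ratio} for each agent:
\[
\log R_j-\log R_i=(\Delta_sX_j-\Delta_sX_i)'\beta+\alpha(y_{j,s-1}-y_{j,s+2}-y_{i,s-1}+y_{i,s+2})+\big[\lambda(w_{js})-\lambda(w_{j,s+1})-\lambda(w_{is})+\lambda(w_{i,s+1})\big].
\]

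\emph{Step 3.} Eliminate the bracket. The condition $\rho_{ij\tau}=0$ at $\tau=s$ and $\tau=s+1$ together with Assumption~\ref{ass:netequiv} gives $\lambda(w_{is})=\lambda(w_{js})$ and $\lambda(w_{i,s+1})=\lambda(w_{j,s+1})$ almost surely, since a zero conditional second moment forces the difference to vanish a.s. The bracket is therefore zero, which yields \eqref{eq:id}. Finally, integrate over the remaining unobservables $(w_i,w_j)$ given $\mathcal{C}_{ijs}$. Because the conditional probability no longer depends on them, the law of iterated expectations leaves the expression unchanged.

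The main obstacle is Step 1: justifying that conditioning on the network-type event and on the other agent does not break the factorization or reintroduce dependence on which path was realized. Realized links at $s+1,s+2$ depend on lagged outcomes, which differ between $A_s$ and $B_s$. My approach is to condition only on the population quantities $\rho_{ij\tau}$, which are functions of $w$ and of the common lag value $y_{\cdot,\tau-1}$. One subtlety is that $\rho_{ij\tau}$ is defined for agents sharing lagged outcomes. At $\tau=s+1,s+2$ the lags differ across paths, so I will interpret $\rho_{ij\tau}=0$ as holding for each lag configuration, i.e.\ as the requirement that the conditional link functions agree for every $(y,y')$. That makes the conditioning event invariant to the path swap, so that it cancels in the ratio.
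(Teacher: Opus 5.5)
Your proposal is correct and follows essentially the same route as the paper's proof: factor the pairwise probability across agents, apply Lemma~\ref{lem:ratio} to each agent, cancel the $\lambda$ terms via Assumption~\ref{ass:netequiv}, and convert the resulting odds ratio into $F(\cdot)$. Your additional care makes explicit what the paper uses implicitly when it applies Lemma~\ref{lem:ratio} to both agents. Specifically, you add $X_{j,s+2}=X_{j,s+1}$ and $\lambda(w_{k,s+2})=\lambda(w_{k,s+1})$ for $k\in\{i,j\}$ to $\mathcal{C}_{ijs}$, you read the conditioning event as $\{(A_s,B_s),(B_s,A_s)\}$ as the appendix does, and you give a meaning to $\rho_{ij\tau}=0$ at periods where the two agents' lags differ.
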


\begin{proof}
Consider two agents $i$ and $j$ with $\rho_{ij\tau} = 0$ for all $\tau$. By Assumption~\ref{ass:netequiv}, this implies $\lambda(w_{i\tau}) = \lambda(w_{j\tau})$ for all $\tau$. Conditional on both $i$ and $j$ following either path $A_s$ or $B_s$ (with all other outcomes fixed):
\begin{align*}
\frac{\Prob(y_i \in A_s, y_j \in B_s)}{\Prob(y_i \in B_s, y_j \in A_s)} &= \frac{\Prob(y_i \in A_s)}{\Prob(y_i \in B_s)} \cdot \frac{\Prob(y_j \in B_s)}{\Prob(y_j \in A_s)}.
\end{align*}
Applying Lemma~\ref{lem:ratio} to both ratios and using $\lambda(w_{is}) - \lambda(w_{i,s+1}) = \lambda(w_{js}) - \lambda(w_{j,s+1})$ (which follows from $\rho_{ijs} = 0$ and $\rho_{ij,s+1} = 0$):
\begin{align*}
&= \exp\big[-(\Delta_s X_i)'\beta - \alpha(y_{i,s-1} - y_{i,s+2}) - (\lambda(w_{is}) - \lambda(w_{i,s+1}))\big] \\
&\quad \times \exp\big[(\Delta_s X_j)'\beta + \alpha(y_{j,s-1} - y_{j,s+2}) + (\lambda(w_{js}) - \lambda(w_{j,s+1}))\big] \\
&= \exp\big[(\Delta_s X_j - \Delta_s X_i)'\beta + \alpha(y_{j,s-1} - y_{j,s+2} - y_{i,s-1} + y_{i,s+2})\big].
\end{align*}
The social influence terms cancel completely. The conditional probability of observing $y_i \in A_s$ and $y_j \in B_s$ given $y_i,y_j \in A_s \cup B_s$ is then:
\[
\Prob\big(y_i \in A_s,y_j \in B_s \,\big|\, y_i,y_j \in A_s \cup B_s, \, \mathcal{C}_{ijs}\big) = F\left[(\Delta_s X_j - \Delta_s X_i)'\beta + \alpha(y_{j,s-1} - y_{j,s+2} - y_{i,s-1} + y_{i,s+2})\right],
\]
which depends only on $(\beta, \alpha)$ and observable quantities.
\end{proof}

\begin{lemma}[Minimum number of time periods]~\\
The identification strategy requires $T \geq 3$ observed periods (plus the initial condition $y_{i0}$), since it uses three consecutive periods $(s, s+1, s+2)$. The condition on period $s+2$ is needed because swapping outcomes at $(s, s+1)$ changes the lag entering the model at $s+2$. With additional time periods, multiple pairs $(s, s+1)$ can be used, providing additional identifying variation.
When $T = 2$, the conditional likelihood depends on the unknown function $\lambda$ through a non-log-linear term that cannot be eliminated by conditioning. Consequently, the parameters $(\beta, \alpha)$ are not identified from the conditional likelihood.
\end{lemma}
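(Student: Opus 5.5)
The lemma has two parts, and I would handle them separately.

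\textbf{Sufficiency with $T\ge 3$.} This part reads directly off the construction behind Lemma~\ref{lem:ratio} and Theorem~\ref{thm:identification}. In the factorization of $\Prob(y_i\in A_s\mid X_i,w_i)$ and $\Prob(y_i\in B_s\mid X_i,w_i)$, swapping $(y_{is},y_{i,s+1})$ changes the lagged outcome entering period $s+2$. The factors at $t\neq s,s+1,s+2$ are common to both paths. The period-$(s+1)$ and period-$(s+2)$ factors cancel to $e^{-c-\alpha y_{i,s+2}}$ only under the smoothing condition $X_{i,s+2}=X_{i,s+1}$ and $\lambda(w_{i,s+2})=\lambda(w_{i,s+1})$. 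So the argument needs periods $s,s+1,s+2$ all observed with $s\ge1$, which means $T\ge 3$. With larger $T$, each $s\in\{1,\dots,T-2\}$ yields a separate conditional probability of the form~\eqref{eq:id}. Pooling these gives additional identifying variation.

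\textbf{Necessity: failure when $T=2$.} Here the only available swap is $s=1$, over periods $(1,2)$, and nothing beyond period $2$ can absorb the change in the lag. So I would compute the ratio without the period-$3$ factor. Write $c_1=X_{i1}'\beta+\alpha y_{i0}+\lambda(w_{i1})$ and $c=X_{i2}'\beta+\lambda(w_{i2})$. Then
\[
\frac{\Prob(y_i\in B_1)}{\Prob(y_i\in A_1)}=\frac{F(c_1)\bigl(1-F(c+\alpha)\bigr)}{\bigl(1-F(c_1)\bigr)F(c)}=e^{c_1-c}\,\frac{1+e^{c}}{1+e^{c+\alpha}}.
\]
The last factor is not log-linear in $\lambda(w_{i2})$. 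Taking the cross-agent ratio for a matched pair with $\lambda(w_{i\tau})=\lambda(w_{j\tau})$ gives
\[
\exp\bigl[(\Delta_1X_i-\Delta_1X_j)'\beta+\alpha(y_{i0}-y_{j0})\bigr]\cdot\frac{(1+e^{c_i})(1+e^{c_j+\alpha})}{(1+e^{c_i+\alpha})(1+e^{c_j})}.
\]
Because $X_{i2}\ne X_{j2}$ in general, $c_i\ne c_j$, so the $\lambda(w_{\cdot 2})$ terms remain.

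\textbf{Non-identification construction (main obstacle).} To turn ``does not cancel'' into non-identification, I would construct observationally equivalent structures. Fix the conditioning event. Since $\lambda$ is unrestricted and enters only through $g(\lambda(w_2))=\log\frac{1+e^{c}}{1+e^{c+\alpha}}$, which is strictly monotone in $c$ when $\alpha\ne0$, I would show that for a perturbed $(\tilde\beta,\tilde\alpha)$ near $(\beta,\alpha)$ one can choose a $\tilde\lambda$ that reproduces the same conditional probabilities for every matched pair.

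The difficulty is that $\tilde\lambda$ must be a single function of $w$ that works across all covariate values $X_{i2}$. I would handle this the way Honor\'e and Kyriazidou argue for $T=2$ in the fixed-effect case: point out that the conditional likelihood is not free of $\lambda$, so it cannot identify $(\beta,\alpha)$ without knowledge of $\lambda$. The claim should be stated and proved in that weaker sense, namely non-identification from the conditional likelihood. I expect a fully general observational-equivalence construction to be the hard part, and I would fall back to this weaker, conditional-likelihood statement, which is what the lemma asserts.
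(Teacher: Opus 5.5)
Your argument is essentially the paper's: both write out the $T=2$ probabilities of the two swapped paths and exhibit the factor $(1+e^{c})/(1+e^{c+\alpha})$. With no period $s+2$ to supply its reciprocal, this factor is not log-linear in $\lambda(w_{i2})$, and both arguments stop at the weak sense that this conditional likelihood is not free of $\lambda$ rather than constructing observationally equivalent structures (you say so explicitly, the paper does it silently).

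Your cross-agent display is a useful addition the paper lacks, but it also shows something you should account for. Since matching gives $c_i-c_j=(X_{i2}-X_{j2})'\beta$, the residual factor is identically one if the pair is additionally matched on $X_{i2}=X_{j2}$, which leaves the $\lambda$-free index $(X_{i1}-X_{j1})'\beta+\alpha(y_{i0}-y_{j0})$. So ``cannot be eliminated by conditioning'' holds only relative to the paper's conditioning set, and you should state it that way.
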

\begin{proof}
More precisely, with $T = 2$, the only paths satisfying $y_{i1} + y_{i2} = 1$ are $A_1 = (y_{i0}, 0, 1)$ and $B_1 = (y_{i0}, 1, 0)$. \\
Define $c_1 = X_{i1}'\beta + \alpha y_{i0} + \lambda(w_{i1})$ and $c_2 = X_{i2}'\beta + \lambda(w_{i2})$. The likelihoods are:
\begin{align*}
\Prob(y_i \in A_1) &= \bigl[1 - F(c_1)\bigr] \cdot F(c_2) = \frac{e^{c_2}}{(1 + e^{c_1})(1 + e^{c_2})}, \\
\Prob(y_i \in B_1) &= F(c_1) \cdot \bigl[1 - F(c_2 + \alpha)\bigr] = \frac{e^{c_1}}{(1 + e^{c_1})(1 + e^{c_2 + \alpha})},
\end{align*}
where the index at period $2$ in path $B_1$ is $c_2 + \alpha$ because the lag $y_{i1} = 1$ adds $\alpha$ to the linear index. The conditional probability is:
\[
\Prob(y_i \in A_1 \mid y_{i1} + y_{i2} = 1) = \frac{e^{c_2}(1 + e^{c_2 + \alpha})}{e^{c_2}(1 + e^{c_2 + \alpha}) + e^{c_1}(1 + e^{c_2})}.
\]
This does not reduce to the logistic form $F(V)$ for any index $V$ that is free of $\lambda$. The term $(1 + e^{c_2 + \alpha})/(1 + e^{c_2})$, which depends on $\lambda(w_{i2})$ through $c_2$, enters the conditional probability in a non-log-linear way. When $T \geq 3$, period $s + 2$ contributes the reciprocal factor $(1 + e^{c + \alpha})/(1 + e^{c})$ under the smoothing condition, canceling this term exactly (cf.\ the proof of Lemma~\ref{lem:ratio}). With $T = 2$, no such period exists, and the nuisance function $\lambda$ remains in the conditional likelihood.
\end{proof}

The identification strategy is closely related to \citet{honore2000panel}, who use a similar path-swapping argument to identify dynamic logit models with individual fixed effects. The key innovation here is the introduction of network-type matching to handle the social influence component $\lambda(w_{it})$, which plays the role of a time-varying ``fixed effect'' that cannot be eliminated through standard panel data techniques alone.

\subsection{Identification of the Social Influence Function}

Given identification of $(\beta, \alpha)$, the social influence function $\lambda(w_{it})$ can be recovered by inversion. Specifically, the conditional choice probability given the full history and network type is:
\[
\Prob(y_{it} = 1 \mid y_i^{t-1}, X_i, f_{w_i}) = F\big(X_{it}'\beta + \alpha y_{it-1} + \lambda(w_{it})\big),
\]
where $y_i^{t-1} = (y_{it-1}, y_{it-2}, \ldots, y_{i0})$ and $f_{w_i}$ denotes the network type profile.

\begin{corollary}\label{cor:lambda}
Under Assumptions~\ref{ass:sampling} and~\ref{ass:netequiv}, the social influence function is identified up to network-type equivalence classes:
\begin{equation}\label{eq:lambda_id}
\lambda(w_{it}) = \E\big[F^{-1}\big(\Prob(y_{it} = 1 \mid y_i^{t-1}, X_i, f_{w_i})\big) - X_{it}'\beta - \alpha y_{it-1} \,\big|\, f_{w_{it}}\big].
\end{equation}
\end{corollary}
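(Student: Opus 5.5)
The plan is to invert the logistic link and then integrate out everything other than the network type. The argument has three steps.

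First, I would justify the displayed conditional choice probability. By Assumption~\ref{ass:sampling}(i),(iv),(v), $\varepsilon_{it}$ is logistic and independent of $(X_i, w_i)$ and of past shocks. Hence, conditional on $(y_i^{t-1}, X_i, w_i)$, we have $\Prob(y_{it}=1 \mid y_i^{t-1}, X_i, w_i) = F(X_{it}'\beta + \alpha y_{it-1} + \lambda(w_{it}))$.

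Next, I would replace conditioning on $w_i$ with conditioning on the type profile $f_{w_i}$. By Assumption~\ref{ass:netequiv}, $\lambda(w_{it})$ is almost surely constant on the set of $w$ values sharing the same conditional link functions $f_{w;\,y,y'}$, $y,y'\in\{0,1\}$. So $\lambda(w_{it})$ is measurable with respect to $\sigma(f_{w_{it}})$ up to null sets. Conditioning the display above on the coarser $\sigma$-field generated by $(y_i^{t-1}, X_i, f_{w_i})$ therefore leaves the index unchanged: the index $X_{it}'\beta + \alpha y_{it-1} + \lambda(w_{it})$ is already measurable with respect to that $\sigma$-field. By the tower property, $\Prob(y_{it}=1 \mid y_i^{t-1}, X_i, f_{w_i}) = F(X_{it}'\beta + \alpha y_{it-1} + \lambda(w_{it}))$ almost surely.

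Finally, I would invert and average. Since $F$ is a strictly increasing bijection $\R\to(0,1)$, applying $F^{-1}$ gives $\lambda(w_{it}) = F^{-1}(\Prob(\cdot)) - X_{it}'\beta - \alpha y_{it-1}$ pointwise. Here $(\beta,\alpha)$ is known from Theorem~\ref{thm:identification}, and the conditional probability is a functional of the observable joint distribution once $f_{w_i}$ is treated as observable; the paper's codegree construction does this later. Taking $\E[\,\cdot \mid f_{w_{it}}]$ on both sides yields \eqref{eq:lambda_id}, because the right-hand side is $\sigma(f_{w_{it}})$-measurable. Identification is only up to equivalence classes, since $\lambda$ is pinned down only as a function of the class, and, as the earlier $w\mapsto 1-w$ example shows, not as a function of $w$ itself.

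The main obstacle is the measurability step. Assumption~\ref{ass:netequiv} is stated via $\rho_{ijt}=0$ for pairs, not directly as $\lambda(w_{it})$ being a function of $f_{w_{it}}$, and it conditions on the shared lagged outcome $y$. I would first argue that for each fixed $y$, the map $w\mapsto (f_{w;\,y,0}, f_{w;\,y,1})\in L^2$ induces classes on which $\lambda$ is a.e. constant. I would then note that conditioning on $y_{it-1}$, which is included in $y_i^{t-1}$, handles the dependence on $y$. A further technical point is that $F^{-1}$ must be applied to a probability that lies strictly inside $(0,1)$. This holds automatically because the logistic index is finite.
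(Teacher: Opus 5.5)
Your proposal is correct and follows the paper's route. The paper simply asserts $\Prob(y_{it}=1\mid y_i^{t-1},X_i,f_{w_i})=F(X_{it}'\beta+\alpha y_{it-1}+\lambda(w_{it}))$ and inverts $F$; your tower-property argument, with Assumption~\ref{ass:netequiv} making $\lambda(w_{it})$ measurable with respect to the type (read as the four conditional link functions, a function of $w_{it}$), supplies the justification the paper omits, and the independence of $\varepsilon_{it}$ from $(y_{i0},X_i,w_i)$ that you invoke is not literally in Assumption~\ref{ass:sampling} but is used implicitly by the paper in the same way (e.g.\ in Lemma~\ref{lem:ratio}).
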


\section{Estimation}\label{sec:estimation}

This section develops a feasible estimator that implements the identification strategy of Section~\ref{sec:identification}. The main challenge is that the conditioning events, network-type equivalence ($\rho_{ijt} = 0$) and covariate stability ($X_{i,s+2} = X_{i,s+1}$), hold with probability zero in continuous distributions, and the network distance $\rho_{ijt}$ is not directly observable.

\subsection{Estimating Network Proximity}

Following \citet{auerbach2022} and \citet{gueyap2026}, I use codegree information to construct a feasible measure of network similarity. In the static setting, the codegree function directly proxies the type distance. In the dynamic setting, agent $i$'s type at period $t$ consists of the index of unobserved social characteristics $w_{it}$ and the observed lagged outcome $y_{it-1}$. Since the lagged outcome is observed, the codegree can be conditioned on it to isolate the unobserved component.

Define the conditional codegree function for two agents with social characteristics $(w_{it}, w_{jt})$ and lagged outcomes $(y_{it-1}, y_{jt-1})$ as:
\begin{equation}\label{eq:codegree_fn}
p(w_{it}, w_{jt}; y_{it-1}, y_{jt-1}) = \int f(w_{it}, u;\, y_{it-1}, y) \, f(w_{jt}, u;\, y_{jt-1}, y) \, dG(u, y),
\end{equation}
where $G$ denotes the joint population distribution of $(w_{lt}, y_{lt-1})$. The function $p(w_{it}, w_{jt}; y_{it-1}, y_{jt-1})$ measures the expected fraction of common neighbors between an agent of type $(w_{it}, y_{it-1})$ and an agent of type $(w_{jt}, y_{jt-1})$: the integration runs over all possible third agents with type $u \in [0,1]$ and lagged outcome $y \in \{0,1\}$.

For each agent $i$ at period $t$, define the \emph{conditional individual codegree function} as:
\begin{equation}\label{eq:codegree_ind}
p_{w_{it};\, y_{it-1},y_{jt-1}}(\cdot) \equiv p(w_{it}, \cdot; y_{it-1}, y_{jt-1}) : [0, 1]\to [0, 1].
\end{equation}
This is the codegree profile of agent $i$: for each potential partner type $w_{jt}$ and partner lagged outcome $y_{jt-1}$, it gives the expected fraction of common neighbors between agents $i$ and $j$.

The \emph{conditional codegree distance} for agents $i$ and $j$ with $y_{it-1} = y_{jt-1}$ at period $t$ is:
\begin{align*}\label{eq:codegree_dist}
\delta_{ijt}& = \left\|p_{w_{it};\, y_{it-1},y'}(\cdot) - p_{w_{jt};\, y_{jt-1},y'}(\cdot)\right\|_2\\&=\left[\sum_{y' \in \{0,1\}} \int \big(p_{w_{it};\, y_{it-1},y'}(s) - p_{w_{jt};\, y_{it-1},y'}(s)\big)^2 \, ds\right]^{1/2}
\end{align*}
Since $y_{it-1} = y_{jt-1}$, the distance depends only on the type difference $w_{it}$ versus $w_{jt}$. Equivalently, substituting the definition of $p$:
\[
\delta_{ijt} = \left[\sum_{y' \in \{0,1\}} \int \left(\int \big[f(w_{it}, u;\, y_{it-1}, y) - f(w_{jt}, u;\, y_{it-1}, y)\big] \, f(s, u;\, y', y) \, dG(u, y)\right)^2 ds\right]^{1/2}.
\]
Since the codegree function is obtained by integrating the individual link function against $f(s, u;\, y', y)\, dG(u,y)$, conditionally network-type equivalent agents ($\rho_{ijt} = 0$) necessarily share the same codegree profile, so $\rho_{ijt} = 0$ implies $\delta_{ijt} = 0$. The codegree distance therefore provides a feasible proxy for network-type equivalence, and by Assumption~\ref{ass:netequiv}, agents selected by the condition $\delta_{ijt} \approx 0$ satisfy $\lambda(w_{it}) = \lambda(w_{jt})$, which is all that the identification argument requires.

The codegree distance can be consistently estimated from the observed adjacency matrices. For each period $t$, define the empirical codegree distance:
\begin{equation}\label{eq:codegree_est}
\hat{\delta}_{ijt} = \left[\frac{1}{n}\sum_{k=1}^n \left(\frac{1}{n}\sum_{l=1}^n D_{klt}(D_{ilt} - D_{jlt})\right)^2\right]^{1/2}.
\end{equation}
When $y_{it-1} = y_{jt-1} = y$, the difference $D_{ilt} - D_{jlt}$ is driven exclusively by the type difference $w_{it}$ versus $w_{jt}$: for any third agent $l$ with $y_{lt-1} = y'$, $\E[D_{ilt} - D_{jlt} \mid w, y] = f(w_{it}, w_{lt};\, y, y') - f(w_{jt}, w_{lt};\, y, y') = f_{w_{it};\, y, y'}(w_{lt}) - f_{w_{jt};\, y, y'}(w_{lt})$, which is zero when $\rho_{ijt} = 0$. When $y_{it-1} \neq y_{jt-1}$, the conditional individual link functions of the two agents are evaluated at different own lagged outcomes, so the codegree difference conflates type differences with the lagged-outcome difference and cannot reliably proxy the type distance. Accordingly, the estimator restricts attention to pairs sharing the same lagged outcome at the relevant period.

When $f$ does not depend on lagged outcomes, $\hat{\delta}_{ijt}$ is the same regardless of $y_{it-1}$ and $y_{jt-1}$, so the restriction $y_{i,s-1} = y_{j,s-1}$ is not binding and the codegree reduces to the static codegree of \citet{gueyap2026}.

\subsection{Feasible Estimator}

The feasible estimator is a kernel-weighted conditional maximum likelihood estimator. For each interior period $s \in \{1, \ldots, T-2\}$, define the set of ``discordant'' agent pairs:
\[
\mathcal{P}_s = \left\{(i,j) : i < j, \; y_{i,s-1} = y_{j,s-1}, \; y_{is} + y_{i,s+1} = 1, \; y_{js} + y_{j,s+1} = 1, \; y_{is} \neq y_{js}\right\}.
\]
These are pairs where both agents share the same lagged outcome at period $s-1$, transition between 0 and 1 across periods $(s, s+1)$, but in different orders. The condition $y_{i,s-1} = y_{j,s-1}$ ensures that the empirical codegree distance $\hat{\delta}_{ijs}$ is a valid proxy for network-type equivalence at period $s$.

For each such pair and period, define the conditioning variable:
\[
v_{ijs}(b,a) = (\Delta_s X_i - \Delta_s X_j)'b + a(y_{i,s-1} - y_{i,s+2} - y_{j,s-1} + y_{j,s+2}),
\]
where $\Delta_s X_i = X_{is} - X_{i,s+1}$.

The sample objective function is:
\begin{equation}\label{eq:objective}
\Omega_n(b, a) = \sum_{s=1}^{T-2} \sum_{(i,j) \in \mathcal{P}_s} K_1\!\left(\frac{\hat{\delta}_{ijs}^2}{h_1}\right) K_2\!\left(\frac{\|X_{i,s+2} - X_{i,s+1}\|^2}{h_2}\right) K_2\!\left(\frac{\|X_{j,s+2} - X_{j,s+1}\|^2}{h_2}\right) \cdot m_{ijs}(b, a),
\end{equation}
where:
\begin{equation*}
m_{ijs}(b, a) = \ind(y_{is} = 0) \log F\left[v_{jis}(b,a))\right]  + \ind(y_{is} = 1) \log F\left[v_{ijs}(b,a)\right].
\end{equation*}

The kernel $K_1$ weights agent pairs by their codegree similarity at period $s$, approximating the condition that agents are conditionally network-type equivalent ($\rho_{ijs} = 0$). By Assumption~\ref{ass:netequiv}, pairs with small codegree distance satisfy $\lambda(w_{is}) \approx \lambda(w_{js})$, which is the property exploited by the identification argument. The kernel $K_2$ weights observations by the local stability of covariates between periods $s+1$ and $s+2$, approximating the condition $X_{i,s+2} = X_{i,s+1}$ required by the identification argument. Under temporal persistence of the latent index, the joint kernel weighting by $K_1$ and $K_2$ also approximates $\lambda(w_{i,s+1}) \approx \lambda(w_{j,s+1})$: if agents are network-type equivalent at period $s$ and their covariates are locally stable across consecutive periods, the latent indices at period $s+1$ remain close. The bandwidths $h_1$ and $h_2$ are sequences converging to zero as $n \to \infty$.

The estimator is:
\begin{equation}\label{eq:estimator}
(\hat{\beta}, \hat{\alpha}) = \arg\max_{(b,a)} \; \Omega_n(b, a).
\end{equation}

When $\alpha = 0$ and $T = 1$ (no dynamics), the estimator reduces to the static semiparametric logit estimator of \citet{gueyap2026} with only the network-matching kernel $K_1$. The additional kernel $K_2$ is necessitated by the dynamic structure and the time-varying nature of the social influence function.

When $w_{it} = w_i$ is time-invariant, $\lambda(w_{i,s+2}) = \lambda(w_{i,s+1})$ holds automatically, and the kernel $K_2$ on covariate stability is still needed but network-type equivalence at any single period implies equivalence at all periods, so the period-specific codegree $\hat{\delta}_{ijs}$ suffices without additional temporal pooling.

\subsection{Estimation of the Social Influence Function}

Given $(\hat{\beta}, \hat{\alpha})$, the social influence function $\lambda(w_{it})$ can be estimated by kernel-weighted inversion. Define the estimated conditional choice probability:
\begin{equation}\label{eq:ccp_est}
\hat{\Prob}(y_{it} = 1 \mid X_i, f_{w_i}) = \frac{\sum_{j=1}^n K_1\!\left(\frac{\hat{\delta}_{ijt}^2}{h_1}\right) K_3\!\left(\frac{X_{jt} - X_{it}}{h_3}\right) y_{jt}}{\sum_{j=1}^n K_1\!\left(\frac{\hat{\delta}_{ijt}^2}{h_1}\right) K_3\!\left(\frac{X_{jt} - X_{it}}{h_3}\right)},
\end{equation}
where $K_3$ is a multivariate kernel function and $h_3$ is an additional bandwidth. The estimator of $\lambda(w_{it})$ is then:
\begin{equation}\label{eq:lambda_est}
\hat{\lambda}(w_{it}) = \left[\sum_{j=1}^n K_1\!\left(\frac{\hat{\delta}_{ijt}^2}{h_1}\right)\right]^{-1} \sum_{j=1}^n \left[F^{-1}\!\left(\hat{\Prob}(y_{jt} = 1 \mid X_j, f_{w_j})\right) - X_{jt}'\hat{\beta} - \hat{\alpha} y_{jt-1}\right] K_1\!\left(\frac{\hat{\delta}_{ijt}^2}{h_1}\right).
\end{equation}

\section{Asymptotic Properties}\label{sec:asymptotics}

This section establishes the large-sample properties of the estimator $(\hat{\beta}, \hat{\alpha})$ defined in~\eqref{eq:estimator}. Throughout, $T$ is treated as fixed and asymptotics are studied as $n \to \infty$. All proofs are collected in Appendix~\ref{sec:appendix}.

\subsection{Additional Assumptions}

The following assumptions supplement Assumptions~\ref{ass:sampling} and~\ref{ass:netequiv} from Section~\ref{sec:identification}.

\begin{assumption}[Kernel and bandwidth conditions]\label{ass:kernel}
\begin{enumerate}[label=(\roman*)]
\item The kernel functions $K_1$ and $K_2$ are non-negative, bounded, differentiable with bounded derivatives, and satisfy:
\[
\int K_\ell(u) \, du = 1, \quad \int |K_\ell(u)| \, du < \infty, \quad \int |K_\ell(u)| |u| \, du < \infty, \quad \ell = 1, 2.
\]
\item The bandwidths satisfy $h_\ell > 0$, $h_\ell = o(1)$, and $h_\ell^{-1} = O(\sqrt{n})$ for $\ell = 1, 2$. In addition, the effective number of comparable pairs diverges:
\[
n \,\E\!\left[K_1\!\left(\frac{\delta_{ijs}^2}{h_1}\right) \,\bigg|\, y_{i,s-1} = y_{j,s-1}\right] \to \infty \quad \text{as } n \to \infty,
\]
where $\delta_{ijs}$ denotes the population codegree distance at period $s$.
\end{enumerate}
\end{assumption}

Assumption~\ref{ass:kernel}(i) imposes standard conditions from the kernel estimation literature. 
Assumption~\ref{ass:kernel}(ii) governs the rate at which the bandwidths shrink. The condition $h_\ell = o(1)$ ensures that the kernel concentrates around the conditioning event (network-type equivalence for $K_1$, covariate stability for $K_2$) as the sample size grows. The condition $h_\ell^{-1} = O(\sqrt{n})$ prevents the bandwidths from shrinking so fast that the effective sample size vanishes; it ensures that enough pairs receive non-negligible kernel weight to estimate the parameters at the $\sqrt{n}$ rate. The divergence condition on $n\,\E[K_1(\delta_{ijs}^2/h_1)]$ requires that the expected number of pairs contributing to the objective function grows without bound. This is satisfied whenever the population distribution of $\delta_{ijs}$ has positive density near zero, which holds when there is a positive probability of encountering conditionally network-type equivalent agents.

\begin{assumption}[Regularity conditions]\label{ass:regularity}
\begin{enumerate}[label=(\roman*)]
\item $\E[\|X_{it}\|^4] < \infty$ for all $t$.
\item The parameter space $\Theta$ for $(b, a)$ is compact and contains the true value $(\beta, \alpha)$ in its interior.
\item $\E[m_{ijs}(b,a)^2] < \infty$ for all $(b,a) \in \Theta$.
\item The conditional expectation $$l(x, b, a) = \E[m_{ijs}(b, a) \mid \Delta_s X_i = x,\, \rho_{ijs} = 0,\, y_{i,s-1} = y_{j,s-1}]$$ exists, is continuous in all arguments, and satisfies a domination condition: there exists a function $\bar{l}(x)$ with $\E[\bar{l}(\Delta_s X_i)] < \infty$ such that $\sup_{(b,a) \in \Theta} |l(x, b, a)| \leq \bar{l}(x)$.
\item Conditional on $\rho_{ijs} = 0$ and $y_{i,s-1} = y_{j,s-1}$, the matrix
\[
\E\!\left[\begin{pmatrix} \Delta_s X_i - \Delta_s X_j \\ y_{i,s-1} - y_{i,s+2} - y_{j,s-1} + y_{j,s+2} \end{pmatrix} \begin{pmatrix} \Delta_s X_i - \Delta_s X_j \\ y_{i,s-1} - y_{i,s+2} - y_{j,s-1} + y_{j,s+2} \end{pmatrix}' \,\bigg|\, \rho_{ijs} = 0,\, y_{i,s-1} = y_{j,s-1}\right]
\]
has full rank for at least one $s \in \{1, \ldots, T-2\}$.
\end{enumerate}
\end{assumption}

Assumption~\ref{ass:regularity}(i) is a moment condition on the observed covariates. The fourth moment is needed to control the variance of the score and Hessian in the proof of asymptotic normality. It is satisfied for any covariate with a bounded support or with sufficiently thin tails. Assumption~\ref{ass:regularity}(ii) is a standard compactness condition that, together with the identification result in Theorem~\ref{thm:identification}, ensures the existence and uniqueness of the maximizer of the population objective. 
Assumption~\ref{ass:regularity}(iii) is a square-integrability condition on the log-likelihood contributions. Since $m_{ijs}$ involves $\log F$ and $\log(1 - F)$ evaluated at bounded indices (by compactness of $\Theta$ and the moment condition), this is satisfied under mild conditions on the distribution of the covariates.

Assumption~\ref{ass:regularity}(iv) is a smoothness and dominance condition that enables the interchange of limits and expectations in the proof of uniform convergence. It requires the conditional log-likelihood, averaged over agent $j$'s characteristics given network-type equivalence, to be a continuous function of the covariates and parameters, with an integrable envelope. This is a standard condition in semiparametric M-estimation (see the conditions of Theorem 2.1 in \citet{newey1994}). Assumption~\ref{ass:regularity}(v) is the identification rank condition: it requires sufficient variation in the observables among network-type equivalent agents. The matrix must have rank $k + 1$ (where $k = \dim(\beta)$), ensuring that both $\beta$ and $\alpha$ are separately identified. This fails if, among equivalent agents, the covariate differences $\Delta_s X_i - \Delta_s X_j$ are collinear with the lagged-outcome differences $y_{i,s-1} - y_{i,s+2} - y_{j,s-1} + y_{j,s+2}$, which is a non-generic configuration.

\begin{assumption}[Smoothness conditions for asymptotic normality]\label{ass:smooth}
\begin{enumerate}[label=(\roman*)]
\item The link formation function $f$ is continuous on $[0,1]^2 \times \{0,1\}^2$ and not constant almost everywhere.
\item The conditional density of $(\Delta_s X_i,\, y_{i,s-1},\, y_{i,s+2})$ given $\rho_{ijs} = 0$ and $y_{i,s-1} = y_{j,s-1}$ exists and is bounded.
\item $\E[\|m_{ijs}(b,a)\|^{2+\epsilon}] < \infty$ for some $\epsilon > 0$ and all $(b,a) \in \Theta$.
\end{enumerate}
\end{assumption}

Assumption~\ref{ass:smooth}(i) requires the link formation function to be continuous and non-degenerate. Continuity ensures that the population codegree distance $\delta_{ijs}$ is a continuous function of the latent indices, which is needed for the kernel localization argument: as the bandwidth $h_1 \to 0$, the kernel $K_1(\delta_{ijs}^2/h_1)$ concentrates smoothly around pairs with $\delta_{ijs} = 0$. The non-constancy condition rules out the trivial case where all agents form links with the same probability regardless of their type, in which case the network carries no information about the latent characteristics. Together, these conditions ensure that the codegree distance separates non-equivalent agents ($\rho_{ijt} > 0$ implies $\delta_{ijt} > 0$), so that the kernel estimator correctly identifies network-type equivalent pairs in the limit; see Lemma~\ref{lem:codegree_consist} in Appendix~\ref{sec:appendix}. Assumption~\ref{ass:smooth}(ii) is a standard density condition in kernel-smoothed estimation. It ensures that the bias of the kernel estimator is of order $h$ and that the asymptotic variance is well-defined. Assumption~\ref{ass:smooth}(iii) strengthens the square-integrability in Assumption~\ref{ass:regularity}(iii) to a $(2+\epsilon)$-moment condition. This is a Lyapunov-type condition used to verify the central limit theorem for the H\'{a}jek projection of the score; it ensures that the individual score contributions have sufficient moment regularity for the projection to dominate the remainder term.

\subsection{Consistency}

\begin{theorem}[Consistency]\label{thm:consistency}
Under Assumptions~\ref{ass:sampling}--\ref{ass:regularity}, the estimator $(\hat{\beta}, \hat{\alpha})$ defined in~\eqref{eq:estimator} is consistent:
\[
(\hat{\beta}, \hat{\alpha}) \plim (\beta, \alpha).
\]
\end{theorem}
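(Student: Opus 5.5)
The plan is to follow the standard extremum-estimator route: identification of a unique maximizer, plus uniform convergence of a normalized objective, as in Theorem 2.1 of \citet{newey1994}. Define the normalized sample objective
\[
Q_n(b,a) = \Big[\tbinom{n}{2} \bar\kappa_n\Big]^{-1}\,\Omega_n(b,a),
\]
where $\bar\kappa_n = \E[K_1(\delta_{ijs}^2/h_1)K_2(\cdot)K_2(\cdot)\mid y_{i,s-1}=y_{j,s-1}]$ is the expected kernel mass. Its target limit is
\[
Q_0(b,a)=\sum_{s}\E\big[m_{ijs}(b,a)\mid \rho_{ijs}=0,\ X_{i,s+2}=X_{i,s+1},\ X_{j,s+2}=X_{j,s+1},\ y_{i,s-1}=y_{j,s-1},\ (i,j)\in\mathcal{P}_s\big],
\]
weighted by the limiting conditioning densities. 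The argument then has four steps. Step 1 shows $Q_0$ is uniquely maximized at $(\beta,\alpha)$. Step 2 shows $\sup_\Theta|Q_n-\tilde Q_n|\plim 0$, where $\tilde Q_n$ is the infeasible objective using the population distance $\delta_{ijs}$ in place of $\hat\delta_{ijs}$. Step 3 shows $\sup_\Theta|\tilde Q_n-\E\tilde Q_n|\plim0$. Step 4 shows $\sup_\Theta|\E\tilde Q_n-Q_0|\to 0$. Compactness (Assumption~\ref{ass:regularity}(ii)) and continuity of $Q_0$ then deliver consistency.

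\emph{Step 1 (identification).} By Theorem~\ref{thm:identification}, conditional on the limiting event, the probability that $(y_i,y_j)$ takes the $(A_s,B_s)$ versus $(B_s,A_s)$ ordering is a logit in $v_{ijs}(\beta,\alpha)$. Hence $Q_0(b,a)$ is the expected log-likelihood of a correctly specified binary logit with regressor vector $Z_{ijs}=(\Delta_sX_i-\Delta_sX_j,\ y_{i,s-1}-y_{i,s+2}-y_{j,s-1}+y_{j,s+2})$. The information inequality gives $Q_0(b,a)\le Q_0(\beta,\alpha)$. Equality forces $Z_{ijs}'((b,a)-(\beta,\alpha))=0$ almost surely, which the rank condition in Assumption~\ref{ass:regularity}(v) rules out unless $(b,a)=(\beta,\alpha)$. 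Concavity of $\log F$ makes $Q_n$ and $Q_0$ concave in $(b,a)$. One could therefore alternatively invoke the convexity lemma, so that pointwise convergence suffices; I would use that as a fallback if the uniform arguments get technical.

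\emph{Steps 3 and 4 (U-statistic and bias).} $\tilde Q_n$ is a second-order U-statistic in the i.i.d.\ agent primitives of Assumption~\ref{ass:sampling}(i), with kernel depending on $n$ through $h_1,h_2$. For each fixed $(b,a)$, its variance is of order $(n\bar\kappa_n)^{-1}+(n^2\bar\kappa_n)^{-1}$. This vanishes by the divergence condition in Assumption~\ref{ass:kernel}(ii), given the second moments of Assumption~\ref{ass:regularity}(iii). Uniformity over $\Theta$ follows from the Lipschitz property of $m_{ijs}$ in $(b,a)$, with constant bounded by $\|Z_{ijs}\|$ (integrable by Assumption~\ref{ass:regularity}(i)), together with compactness. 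For Step 4, change variables in the kernel integrals. Continuity and domination of $l(x,b,a)$ (Assumption~\ref{ass:regularity}(iv)), the moment conditions on $K_\ell$, and $h_\ell\to0$ then give $\E\tilde Q_n\to Q_0$ uniformly by dominated convergence. This step also uses the implicit requirement that matching at $s$, combined with covariate stability, makes $\lambda(w_{i,s+1})\approx\lambda(w_{j,s+1})$ and $\lambda(w_{i,s+2})\approx\lambda(w_{i,s+1})$. I would state this as being used via the limiting event, as the paper does.

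\emph{Step 2 (estimated codegree): the main obstacle.} Replacing $\delta_{ijs}$ by $\hat\delta_{ijs}$ is delicate because the $\hat\delta_{ijs}$ are dependent across pairs and are built from the same network. I would first aim for $\max_{i,j}|\hat\delta_{ijs}^2-\delta_{ijs}^2|=O_p(\sqrt{\log n/n})$. The argument is that, conditional on the types, the inner averages $n^{-1}\sum_l D_{klt}(D_{ilt}-D_{jlt})$ are sums of independent bounded terms, because the $\eta_{ijt}$ are i.i.d. Hoeffding's inequality plus a union bound over $O(n^3)$ triples gives the rate, as in Lemma~\ref{lem:codegree_consist} of the appendix. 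The mean-value theorem with bounded $K_1'$ then gives
\[
|Q_n-\tilde Q_n|\le \frac{\|K_1'\|_\infty\max|\hat\delta^2-\delta^2|}{h_1}\cdot\Big[\tbinom n2\bar\kappa_n\Big]^{-1}\sum |m_{ijs}|\,K_2K_2 .
\]
The danger is that this bound is only $O_p(h_1^{-1}\bar\kappa_n^{-1}\sqrt{\log n/n})$, which the bandwidth condition $h_1^{-1}=O(\sqrt n)$ does not obviously kill. I would try to sharpen it by restricting to compactly supported $K_1$. On the support, $K_1'$ is nonzero only for pairs with $\delta^2\lesssim h_1$, so the sum is normalized by the same mass $\bar\kappa_n$, leaving a factor $h_1^{-1}\sqrt{\log n/n}\to0$. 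This may require tightening the bandwidth condition slightly, to $h_1\sqrt{n/\log n}\to\infty$, and I would flag that as needed.
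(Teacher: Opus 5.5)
There is a genuine gap, and it is the one you note in Step~4 and then set aside. $K_1$ localizes only on the period-$s$ codegree, and $K_2$ only on observed covariate stability. So your $Q_0$ lives on the event $\{\rho_{ijs}=0,\ X_{i,s+2}=X_{i,s+1},\ X_{j,s+2}=X_{j,s+1},\ y_{i,s-1}=y_{j,s-1},\ (i,j)\in\mathcal{P}_s\}$. Theorem~\ref{thm:identification} does not give a logit in $v_{ijs}(\beta,\alpha)$ on that event, so the information-inequality argument in Step~1 has no footing. Two conditions are missing:
\begin{itemize}
\item The cross-agent cancellation needs $\lambda(w_{i,s+1})=\lambda(w_{j,s+1})$. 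Otherwise $\lambda(w_{j,s+1})-\lambda(w_{i,s+1})$ stays in the pairwise log-odds.
\item The within-agent step (Lemma~\ref{lem:ratio}) needs $\lambda(w_{i,s+2})=\lambda(w_{i,s+1})$, and likewise for $j$. Otherwise, with $c_\tau=X_{i\tau}'\beta+\lambda(w_{i\tau})$, the $(s+1,s+2)$ block of $\Prob(B_s)/\Prob(A_s)$ carries the extra factor
\[
\frac{(1+e^{c_{s+1}})(1+e^{c_{s+2}+\alpha})}{(1+e^{c_{s+1}+\alpha})(1+e^{c_{s+2}})}.
\]
For $\alpha\neq0$ this equals one only if $\lambda(w_{i,s+2})=\lambda(w_{i,s+1})$. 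It is the same non-log-linear term that the paper's $T=2$ argument shows cannot be conditioned away.
\end{itemize}

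Nothing in Assumptions~\ref{ass:sampling}--\ref{ass:regularity} ties these latent equalities to the kernel events. Assumption~\ref{ass:netequiv} is a same-period statement. Assumption~\ref{ass:sampling}(i) does not restrict how $w_{i,s+1}$ and $w_{i,s+2}$ relate to $w_{is}$ or to $X_{i,s+1},X_{i,s+2}$. The paper's own design is such a case: $w_{it}=0.7w_{i,t-1}+0.3u_{it}$, with $X_{it}$ shifted by an independent $\xi_{it}$, so $X_{i,s+2}=X_{i,s+1}$ does not force $w_{i,s+2}=w_{i,s+1}$.

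On your limit event the pairwise probability is therefore $F(Z_{ijs}'\theta+r_{ijs})$ with a nondegenerate latent remainder $r_{ijs}$. Then $Q_0$ is the expected log-likelihood of a misspecified logit, and its maximizer is in general a pseudo-true value, not $(\beta,\alpha)$. Invoking the limiting event ``as the paper does'' does not close this. The paper's proof makes the same leap when it equates the kernel limit with $Q$ via ``the smoothing conditions in Lemma~\ref{lem:ratio}'', which are conditions on $\lambda$ that no kernel in $\Omega_n$ controls. You need an extra hypothesis. One option is time-invariant $w_{it}=w_i$: then $\rho_{ijs}=0$ gives $\lambda(w_{i\tau})=\lambda(w_{j\tau})$ for all $\tau$, and the within-agent condition is automatic. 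Another is an explicit condition that $\lambda(w_{i,s+1})=\lambda(w_{j,s+1})$ and $\lambda(w_{\ell,s+2})=\lambda(w_{\ell,s+1})$, $\ell\in\{i,j\}$, hold almost surely on the limit event.

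Apart from this, your route is the paper's: Theorem~2.1 of Newey and McFadden, with identification from Theorem~\ref{thm:identification} plus concavity and rank, then uniform convergence of the normalized kernel-weighted pairwise objective. You make the feasible/infeasible/mean/limit split explicit.

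Your Step~2 is more careful than the paper's. The paper passes from $\hat\delta_{ijs}^2\plim\delta_{ijs}^{*2}$ to $K_1(\hat\delta_{ijs}^2/h_1)\plim K_1(\delta_{ijs}^{*2}/h_1)$ by continuity, which is invalid once $h_1\to0$; your rate-based plug-in is the right repair. You may not need the stronger bandwidth condition you flag, though. The decomposition in Lemma~\ref{lem:codegree_consist} suggests the codegree error is of order $\delta^*_{ijs}/\sqrt n+1/n$. A variance-sensitive (Bernstein-type) bound on pairs in the support of a compactly supported $K_1$ should then give relative error roughly $(nh_1)^{-1/2}$ up to logarithms, which vanishes under the stated $h_1^{-1}=O(\sqrt n)$.

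Conversely, your Step~3 variance bound needs $n\bar\kappa_n\to\infty$ with $\bar\kappa_n$ including both $K_2$ factors. At minimum, the number of agents with $\|X_{i,s+2}-X_{i,s+1}\|^2\lesssim h_2$ must diverge. Assumption~\ref{ass:kernel}(ii) only controls the $K_1$ mass, so this has to be added; the paper's proof omits it too.
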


The proof, given in Appendix~\ref{sec:appendix}, follows the general framework of \citet{newey1994}. The argument proceeds in three steps. First, a population objective is defined by evaluating the conditional log-likelihood at $\rho_{ijs} = 0$, and the identification result in Theorem~\ref{thm:identification} establishes that this objective is uniquely maximized at the true parameter value $(\beta, \alpha)$. Second, uniform convergence of the sample objective to the population objective is established using the consistency of the empirical codegree distance (Lemma~\ref{lem:codegree_consist}) and the kernel localization properties in Assumption~\ref{ass:kernel}. Third, the conclusion follows from the extremum estimator consistency theorem of \citet{newey1994}.

\subsection{Asymptotic Normality}

\begin{theorem}[Asymptotic normality]\label{thm:normality}
Under Assumptions~\ref{ass:sampling}--\ref{ass:smooth}, the estimator $(\hat{\beta}, \hat{\alpha})$ is asymptotically normal:
\begin{equation}\label{eq:normality}
\sqrt{n}\begin{pmatrix} \hat{\beta} - \beta \\ \hat{\alpha} - \alpha \end{pmatrix} \dlim \N\left(0, \; 4\Sigma^{-1} V \Sigma^{-1}\right),
\end{equation}
where
\begin{equation}\label{eq:Sigma}
\Sigma = \E\Big[\ind(y_{is} \neq y_{js}) \, F\big(Z_{ijs}'\theta\big) \, \big(1 - F(Z_{ijs}'\theta)\big) \, Z_{ijs}\, Z_{ijs}' \;\Big|\; \rho_{ijs} = 0,\, y_{i,s-1} = y_{j,s-1}\Big], 
\end{equation}
with $\theta = (\beta', \alpha)'$, $Z_{ijs} = (\Delta_s X_i - \Delta_s X_j,\; y_{i,s-1} - y_{i,s+2} - y_{j,s-1} + y_{j,s+2})'$, and
 $V = \operatorname{Var}\big(\psi_i\big)$
with the H\'{a}jek projection
\begin{align*}
\psi_i &= \E\bigg[\ind(y_{is} \neq y_{js}) \Big\{y_{is} - F\big(Z_{ijs}'\theta\big)\Big\} Z_{ijs} \;\bigg|\; y_i, X_i,\, \rho_{ijs} = 0,\, y_{i,s-1} = y_{j,s-1}\bigg]. \label{eq:score}
\end{align*}
\end{theorem}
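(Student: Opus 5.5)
The proof follows the standard route for kernel-weighted M-estimators built on pairwise (U-statistic type) objectives, as in \citet{honore2000panel} and the static analysis of \citet{gueyap2026}. Write $\theta = (\beta',\alpha)'$ and let $\hat\theta$ maximize $\Omega_n$. By Theorem~\ref{thm:consistency}, $\hat\theta \plim \theta$. Since $\theta$ is interior (Assumption~\ref{ass:regularity}(ii)) and $m_{ijs}$ is smooth in $(b,a)$, the first-order condition $\nabla\Omega_n(\hat\theta)=0$ holds with probability approaching one. A mean-value expansion then gives
\[
\sqrt{n}(\hat\theta-\theta) = -\Big[\tfrac{1}{N_n}\nabla^2\Omega_n(\bar\theta)\Big]^{-1}\tfrac{\sqrt{n}}{N_n}\nabla\Omega_n(\theta),
\]
where $N_n$ is the normalizing constant, of order $n^2$ times the expected kernel mass $\E[K_1K_2K_2]$. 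The proof therefore has three parts: the Hessian, the score, and the cost of replacing $\delta$ by $\hat\delta$.

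\textbf{Hessian.} For the logistic log-likelihood, $\partial^2 m_{ijs}/\partial\theta\partial\theta' = -F(Z_{ijs}'b)(1-F(Z_{ijs}'b))Z_{ijs}Z_{ijs}'$ on discordant pairs. This is bounded by $\|Z_{ijs}\|^2$, which is integrable by Assumption~\ref{ass:regularity}(i). A uniform law of large numbers for kernel-weighted second-order U-statistics then applies. The kernel localization argument used in the consistency proof (Assumption~\ref{ass:kernel}, Assumption~\ref{ass:smooth}(i)--(ii), and Lemma~\ref{lem:codegree_consist}) shows that the normalized Hessian converges uniformly near $\theta$ to $-\Sigma$. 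This limit is evaluated on the event $\rho_{ijs}=0$ and under covariate stability. $\Sigma$ is invertible by the rank condition, Assumption~\ref{ass:regularity}(v), since $F(1-F)>0$ on the compact index range.

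\textbf{Score.} Using the form of $m_{ijs}$, the score summand is $\ind(y_{is}\neq y_{js})\{y_{is}-F(Z_{ijs}'\theta)\}Z_{ijs}$, up to a sign convention on the ordering of $(i,j)$, multiplied by the kernel weights. It has three sources of error.
\begin{enumerate}[label=(\alph*)]
\item \emph{Feasibility error from $\hat\delta$.} Replacing $K_1(\hat\delta^2_{ijs}/h_1)$ by $K_1(\delta^2_{ijs}/h_1)$ costs $O_p(\sup|\hat\delta^2-\delta^2|/h_1)$ by the bounded derivative of $K_1$. This is shown to be $o_p(n^{-1/2})$ relative to the normalization, using the uniform rate for codegrees of Lemma~\ref{lem:codegree_consist} together with $h_1^{-1}=O(\sqrt n)$.
\item \emph{Smoothing bias.} At $\rho_{ijs}=0$ with exact covariate stability, Theorem~\ref{thm:identification} shows the conditional mean of the score is zero. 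Off that event, the bias is $O(h_1+h_2)$ by the bounded densities of Assumption~\ref{ass:smooth}(ii). This must be shown to vanish after scaling by $\sqrt n$.
\item \emph{Stochastic part.} Apply the Hoeffding decomposition. The projection term equals $\frac{2}{\sqrt n}\sum_i \psi_i$ with $\psi_i$ as defined in the statement; the factor $2$ arises because each agent appears in both positions of a pair, and it produces the $4$ in $4\Sigma^{-1}V\Sigma^{-1}$. The degenerate remainder has variance $O(1/(n\,\E K_1))$, which vanishes by the divergence condition in Assumption~\ref{ass:kernel}(ii). A Lyapunov central limit theorem for the i.i.d.\ $\psi_i$, justified by Assumption~\ref{ass:smooth}(iii), gives $N(0,4V)$.
\end{enumerate}
Slutsky's theorem then yields the stated limit.

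\textbf{Main obstacle.} The hardest step is the joint control of (a) and (b). The bias from $K_2$-localization and from only approximate equality $\lambda(w_{i,s+1})\approx\lambda(w_{j,s+1})$ is of order $h_1+h_2$. Under only $h^{-1}=O(\sqrt n)$ this is not automatically $o(n^{-1/2})$, and the conditional dependence of $D_t$ across pairs through shared third agents complicates the $\hat\delta$ expansion. I would first try to linearize $\hat\delta^2_{ijs}-\delta^2_{ijs}$ as an average over third agents $k,l$, then show that its contribution to the score is either degenerate or absorbed into $\psi_i$. If that fails, the fallback is to add an undersmoothing requirement $\sqrt n(h_1+h_2)\to0$, interpreting the rate condition in Assumption~\ref{ass:kernel}(ii) as compatible with it.
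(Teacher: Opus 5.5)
Your skeleton (mean-value expansion, Hessian $\to-\Sigma$, H\'ajek projection with the factor $2$, Lyapunov CLT, Slutsky) is the paper's. The paper's proof never addresses your items (a) and (b): it asserts that the projection is centered ``by the score property of the logistic log-likelihood'' and that the remainder is negligible. Your concerns are therefore legitimate, but the ways you propose to settle them do not work, so there is a genuine gap.

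\textbf{Centering.} This is the most serious problem. Theorem~\ref{thm:identification} does not say the score is mean zero at $\rho_{ijs}=0$ plus covariate stability. It conditions on $\rho_{ij\tau}=0$ for every $\tau$, which is what gives $\lambda(w_{i,s+1})=\lambda(w_{j,s+1})$. Through Lemma~\ref{lem:ratio} it also needs $\lambda(w_{i,s+2})=\lambda(w_{i,s+1})$ and $\lambda(w_{j,s+2})=\lambda(w_{j,s+1})$. The weight $K_1(\hat\delta_{ijs}^2/h_1)$ localizes only the period-$s$ type, and $K_2$ localizes observed covariates, not $w$. These discrepancies therefore do not shrink as $h_1,h_2\to0$. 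In the simulation design, for example, $w_{i,s+1}$ carries an independent innovation, and $X$-stability does not force $w$-stability. What remains is an $O(1)$ misspecification of the conditional likelihood, not an $O(h_1+h_2)$ bias; with $\lambda(w_{i,s+2})\neq\lambda(w_{i,s+1})$ the ratio is not even log-linear, as in the $T=2$ case. Closing it requires an extra assumption (e.g.\ time-invariant $w_i$) or extra localization (e.g.\ on $\hat\delta_{ij,s+1}$ and on within-agent type stability).

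\textbf{Bandwidth fallback.} Even for a genuine $O(h_1+h_2)$ bias, your undersmoothing fallback is unavailable. The condition $h_\ell^{-1}=O(\sqrt n)$ means $\sqrt n\,h_\ell$ is bounded away from zero, which contradicts $\sqrt n(h_1+h_2)\to0$ rather than being compatible with it. You would need to replace Assumption~\ref{ass:kernel}(ii) or prove an exact first-order bias cancellation.

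\textbf{Feasibility step (a).} Lemma~\ref{lem:codegree_consist} gives only pointwise consistency, with no rate and no uniformity over pairs. Even a sharp rate is too slow. Inside the kernel window, the cross term $T_2$ in that lemma's proof is generically of order $\delta_{ijs}n^{-1/2}\lesssim\sqrt{h_1/n}$. The Lipschitz bound is then at best $O_p((nh_1)^{-1/2})$, which exceeds $n^{-1/2}$ by the factor $h_1^{-1/2}\to\infty$, so it is never $o_p(n^{-1/2})$. You have two options. One is to use the independence of $\{\eta_{ijt}\}$ from $\{\varepsilon_{it}\}$, so that the weight perturbation multiplies an approximately conditionally centered score. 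The other is to linearize and carry that term into the influence function, which changes $V$.

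\textbf{Projection step (c).} The projection is not $\frac{2}{\sqrt n}\sum_i\psi_i$ with the statement's $\psi_i$. Agent $i$'s own factor $K_2^{(i)}=K_2(\|X_{i,s+2}-X_{i,s+1}\|^2/h_2)$ survives conditioning on $(y_i,X_i)$. After normalization each summand is multiplied by $K_2^{(i)}/\E K_2^{(i)}$. Its second moment is of order $h_2^{-k/2}\to\infty$ when $X_{i,s+2}-X_{i,s+1}$ has a positive continuous density at zero. This is the Honor\'e--Kyriazidou phenomenon: own-agent smoothing gives the rate $\sqrt{n h_2^{k/2}}$, not $\sqrt n$, unless $X_{i,s+2}-X_{i,s+1}$ has an atom at zero. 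Relatedly, the degenerate remainder is controlled by $n\E[K_1K_2K_2]$, not by the $n\E K_1$ of Assumption~\ref{ass:kernel}(ii).

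The paper's appendix resolves none of these points, so following it will not close them.
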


The proof, given in Appendix~\ref{sec:appendix}, proceeds by a mean-value expansion of the first-order condition around the true parameter value. The Hessian of the sample objective converges to $-\Sigma$ by a law of large numbers argument combined with the kernel localization from Lemma~\ref{lem:codegree_consist}. The score evaluated at the true parameter is a kernel-weighted U-statistic of order two. Its asymptotic distribution is obtained via the H\'{a}jek projection technique: the leading term in the variance is $4\operatorname{Var}(\psi_i)/n$, where $\psi_i$ is the conditional expectation of the score contribution given agent $i$'s data. The factor of $4$ arises because the variance of a U-statistic of order two is dominated by twice the variance of its first-order projection, and the sandwich inversion introduces a second factor of two. Combining the Hessian and score expansions yields the stated result.

The asymptotic variance $4\Sigma^{-1}V\Sigma^{-1}$ can be consistently estimated by plug-in sample analogs, replacing expectations with kernel-weighted averages over discordant pairs evaluated at $(\hat{\beta}, \hat{\alpha})$. Specifically, $\hat{\Sigma}$ is constructed by averaging $F(\hat{v}_{ijs})(1 - F(\hat{v}_{ijs})) Z_{ijs} Z_{ijs}'$ over pairs $(i,j) \in \mathcal{P}_s$ with kernel weights $K_1(\hat{\delta}_{ijs}^2/h_1) K_2(\cdot)$, and $\hat{V}$ is obtained from the sample variance of the estimated projections $\hat{\psi}_i$.

\subsection{Asymptotic Properties of the Social Influence Estimator}

The estimator $\hat{\lambda}(w_{it})$ defined in~\eqref{eq:lambda_est} is a two-step procedure: first estimate $(\hat{\beta}, \hat{\alpha})$ at the $\sqrt{n}$ rate, then recover $\lambda(w_{it})$ by kernel-weighted inversion of the estimated conditional choice probability. The following assumption supplements the conditions in Assumptions~\ref{ass:kernel}--\ref{ass:smooth} with requirements specific to this second step.

\begin{assumption}[Conditions for estimation of $\lambda(w_{it})$]\label{ass:lambda}~
\begin{enumerate}[label=(\roman*)]
\item The kernel $K_3 : \R^k \to \R$ is a $k$-variate product kernel satisfying $\int K_3(u)\, du = 1$, $\int \|u\|^2 |K_3(u)|\, du < \infty$, and $\sup_u |K_3(u)| < \infty$.
\item The bandwidth $h_3 > 0$ satisfies $h_3 = o(1)$ and $n h_1 h_3^k \to \infty$.
\item The social influence function $\lambda : [0,1] \to \R$ is twice continuously differentiable.
\item The conditional density $g(x \mid w)$ of $X_{it}$ given $w_{it} = w$ exists, is bounded, and is bounded away from zero on the interior of its support.
\item The conditional choice probability $\Prob(y_{it} = 1 \mid X_{it} = x, w_{it} = w) = F(x'\beta + \alpha y_{it-1} + \lambda(w))$ is bounded away from $0$ and $1$ uniformly over the support.
\end{enumerate}
\end{assumption}

Assumption~\ref{ass:lambda}(i) imposes standard conditions on the multivariate kernel $K_3$ used in the Nadaraya--Watson step~\eqref{eq:ccp_est}. The integrability of $\|u\|^2 |K_3(u)|$ controls the bias from second-order terms in the Taylor expansion of the conditional choice probability. The boundedness condition ensures that the kernel weights are well-behaved. Assumption~\ref{ass:lambda}(ii) requires that the product bandwidth $h_1 h_3^k$ shrinks slowly enough relative to $n$ that the number of agents in the joint kernel window grows without bound; this is needed for the Nadaraya--Watson estimator $\hat{\Prob}$ to be consistent. Assumption~\ref{ass:lambda}(iii) is a smoothness condition on the function to be estimated. Together with the kernel conditions, it determines the bias of the estimator. Note that no separate condition on the marginal density of $w_{it}$ is needed: by Assumption~\ref{ass:sampling}(iii), $w_{it} \sim \text{Uniform}(0,1)$, so the marginal density $g_w(w) = 1$ is constant and bounded away from zero on $(0,1)$. Assumption~\ref{ass:lambda}(iv) ensures that the kernel density estimator implicit in the Nadaraya--Watson step has a well-defined limit; it concerns only the conditional distribution of $X_{it}$ given $w_{it}$. Assumption~\ref{ass:lambda}(v) ensures that the logistic link function $F$ and its inverse $F^{-1}$ are evaluated away from the boundary, so that the delta method expansion $F^{-1}(\hat{\Prob}) - F^{-1}(\Prob) \approx (\hat{\Prob} - \Prob)/f(F^{-1}(\Prob))$ is valid with a bounded Jacobian.

\begin{theorem}[Asymptotic properties of $\hat{\lambda}(w_{it})$]\label{thm:lambda}
Under Assumptions~\ref{ass:sampling}--\ref{ass:lambda}, for any $w_0 \in (0,1)$:
\begin{enumerate}[label=(\alph*)]
\item (Consistency.) $\hat{\lambda}(w_0) \plim \lambda(w_0)$.
\item (Asymptotic normality.) If $h_1 \propto n^{-1/5}$ and $h_3$ satisfies $n h_1 h_3^k / \log n \to \infty$ and $\sqrt{n h_1}\, h_3^2 \to 0$, then:
\begin{equation}\label{eq:lambda_normality}
\sqrt{n h_1}\Big[\hat{\lambda}(w_0) - \lambda(w_0) - h_1\, B_\lambda(w_0)\Big] \dlim \N\!\left(0, \; V_\lambda(w_0)\right),
\end{equation}
where the bias term is
\[
B_\lambda(w_0) = \frac{1}{2}\,\lambda''(w_0) \int u^2 K_1(u)\, du,
\]
and the asymptotic variance is
\begin{equation}\label{eq:V_lambda}
V_\lambda(w_0) = \int K_1(u)^2\, du \cdot \E\!\left[\frac{1}{F(v_0)(1 - F(v_0))} \;\bigg|\; w_{it} = w_0\right],
\end{equation}
with $v_0 = X_{it}'\beta + \alpha y_{it-1} + \lambda(w_0)$.
\end{enumerate}
\end{theorem}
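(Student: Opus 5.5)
The plan is to treat $\hat{\lambda}(w_0)$ as a two-step kernel regression estimator and decompose its error into three pieces:
\begin{align*}
\hat{\lambda}(w_0)-\lambda(w_0) &= \underbrace{\Big[\textstyle\sum_j K_{1,j}\Big]^{-1}\sum_j K_{1,j}\big\{F^{-1}(\hat{\Prob}_j)-F^{-1}(\Prob_j)\big\}}_{R_1} \\
&\quad - \underbrace{\Big[\textstyle\sum_j K_{1,j}\Big]^{-1}\sum_j K_{1,j}\big\{X_{jt}'(\hat{\beta}-\beta)+(\hat{\alpha}-\alpha)y_{jt-1}\big\}}_{R_2} \\
&\quad + \underbrace{\Big[\textstyle\sum_j K_{1,j}\Big]^{-1}\sum_j K_{1,j}\big\{\lambda(w_{jt})-\lambda(w_0)\big\}}_{R_3}.
\end{align*}
Here $K_{1,j}=K_1(\hat{\delta}_{ijt}^2/h_1)$ and $\Prob_j=F(X_{jt}'\beta+\alpha y_{jt-1}+\lambda(w_{jt}))$. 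The identity $F^{-1}(\Prob_j)-X_{jt}'\beta-\alpha y_{jt-1}=\lambda(w_{jt})$, together with the Corollary (Corollary~\ref{cor:lambda}), makes this decomposition exact.

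\textbf{Term $R_2$.} By Theorem~\ref{thm:normality}, $R_2=O_p(n^{-1/2})$, using $\E\|X_{it}\|<\infty$ and the fact that the kernel weights are normalized. Since $\sqrt{nh_1}\,n^{-1/2}=\sqrt{h_1}\to 0$, this term is negligible for both (a) and (b).

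\textbf{Replacing estimated distances by population ones.} I would first use Lemma~\ref{lem:codegree_consist}, which gives uniform consistency of $\hat{\delta}_{ijt}$ at a rate faster than $\sqrt{h_1}$. Together with the bounded derivative of $K_1$, this lets me replace $\hat{\delta}$ by $\delta$ in all weights at cost $o_p((nh_1)^{-1/2})$.

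\textbf{Term $R_3$ (bias).} After the replacement, I would argue that the localization on $\delta_{ijt}^2\le Ch_1$ translates, via continuity and separation (Assumption~\ref{ass:smooth}(i)) and Assumption~\ref{ass:netequiv}, into localization of $\lambda(w_{jt})$ around $\lambda(w_0)$ at scale $h_1$. I would then write $w_{jt}=w_0+h_1u$, use the uniform density of $w$, and Taylor-expand using the $C^2$ property of $\lambda$. The first-order term vanishes by kernel symmetry, which leaves $h_1 B_\lambda(w_0)+o(h_1)$. The variance of $R_3$ is $O((nh_1)^{-1})\cdot O(h_1^2)$, so it is negligible.

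\textbf{Term $R_1$ (leading stochastic term).} By the delta method under Assumption~\ref{ass:lambda}(v),
\[
F^{-1}(\hat{\Prob}_j)-F^{-1}(\Prob_j)=\frac{\hat{\Prob}_j-\Prob_j}{\Prob_j(1-\Prob_j)}+O_p\big((\hat{\Prob}_j-\Prob_j)^2\big).
\]
Standard Nadaraya--Watson arguments then apply, using Assumption~\ref{ass:lambda}(i),(ii),(iv), the uniform convergence implied by $nh_1h_3^k/\log n\to\infty$, and a bias of $O(h_3^2)$. These give $\sup_j|\hat{\Prob}_j-\Prob_j|=o_p((nh_1)^{-1/4})$, which disposes of the quadratic remainder. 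The bias condition $\sqrt{nh_1}h_3^2\to0$ kills the smoothing bias in $\hat{\Prob}$. What remains is the linear term. Substituting the NW expansion, it becomes a double sum $\sum_j\sum_l K_{1,j}(\cdots)(y_{lt}-\Prob_l)/[\Prob_j(1-\Prob_j)]$. I would handle this as a U-statistic and project it onto $l$. Averaging out $j$ against the $K_3$ window leaves a single sum of $K_1$-weighted terms $(y_{lt}-\Prob_l)/[\Prob_l(1-\Prob_l)]$ with effective bandwidth $h_1$.

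\textbf{Limit distribution.} The projected single sum has conditional variance $1/(F(v_0)(1-F(v_0)))$ given $w_{lt}\approx w_0$, and the squared kernel weights contribute $\int K_1^2$. Normalizing by $\sum_j K_{1,j}\approx n h_1$ gives variance $V_\lambda(w_0)/(nh_1)$. A Lyapunov CLT applies because the summands are bounded, by Assumption~\ref{ass:lambda}(v). Consistency in part (a) follows from the same decomposition, since each of $R_1,R_2,R_3$ is $o_p(1)$ under $h_1,h_3\to0$ and $nh_1h_3^k\to\infty$.

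\textbf{Main obstacle.} The hardest step is the link between the codegree kernel scale and the latent index scale in $R_3$. The $h_1$ rate for the bias and the $nh_1$ effective sample size both require that $\delta^2_{ijt}$ behaves locally like $(w_{jt}-w_0)^2$ times a constant. Assumption~\ref{ass:smooth}(i) gives only continuity, so I would try to impose or derive a local quadratic expansion of $\delta^2$ in $w_{jt}-w_0$ with a nonvanishing curvature (absorbed into the normalization). Failing that, I would state the result with $\lambda$ regarded as a function of the codegree-equivalence class.
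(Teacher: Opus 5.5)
Your decomposition is the paper's own: $R_3$, $R_1$, $R_2$ are its $A_n$, $B_n$, $C_n$, and your treatment of $R_2$ is fine. The steps that are supposed to deliver the stated rate and constants, however, do not go through.

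Start with the scale, which is where your ``main obstacle'' sits. With your substitution $w_{jt}=w_0+h_1u$, Taylor expansion and kernel symmetry give a bias $\tfrac12 h_1^2\lambda''(w_0)\int u^2K_1(u)\,du$. That is of order $h_1^2$, not $h_1B_\lambda(w_0)$. Now take the local expansion you propose instead, $\delta_{ijt}^2\approx c\,(w_{jt}-w_0)^2$, which is the generic behavior for smooth $f$, including the homophily design. The weight is then approximately $K_1(v^2)$ with $v=(w_{jt}-w_0)\sqrt{c/h_1}$, so the window in $w$ has width of order $\sqrt{h_1}$. The bias is $O(h_1)$, namely $\frac{h_1}{2c}\lambda''(w_0)\,\frac{\int v^2K_1(v^2)\,dv}{\int K_1(v^2)\,dv}$, but the effective sample size is of order $n\sqrt{h_1}$, not $nh_1$. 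More generally, a symmetric local law $\delta^2\asymp|w-w_0|^{\gamma}$ gives bias of order $h_1^{2/\gamma}$ and effective sample size $nh_1^{1/\gamma}$. No choice of $\gamma$ produces both the $h_1B_\lambda$ centering and the $\sqrt{nh_1}$ rate, so this is not a regularity condition you can add.

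The projection step has a separate problem. Agent $j$ is localized around $w_0$ by the outer $K_1$, and $l$ around $w_{jt}$ by the inner $K_1$, at the same scale. After integrating out $j$, the localization in $X$ does integrate out, as you say, but the latent direction does not. The weight on $y_{lt}-\Prob_l$ is the self-convolution $\tilde K*\tilde K$ of the effective latent-direction kernel $\tilde K$, not $K_1$. The variance constant is therefore $\int(\tilde K*\tilde K)^2$ rather than $\int K_1^2$. For the same reason, the inner smoother's bias in the latent direction is of the same order as $R_3$ and generically nonzero, so it belongs in the centering; you account only for its $O(h_3^2)$ part.

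The first-stage step fails as well.
\begin{itemize}
\item The estimator \eqref{eq:ccp_est} does not condition on $y_{jt-1}$. At least when $f$ does not depend on lagged outcomes, the codegree weights carry no information about $y_{jt-1}$. Then $\hat\Prob_j$ converges to $\bar\Prob_j=\E[F(X_{jt}'\beta+\alpha y_{jt-1}+\lambda(w_{jt}))\mid X_{jt},w_{jt}]$, not to your $\Prob_j$. Because $F^{-1}$ is nonlinear, $F^{-1}(\bar\Prob_j)-X_{jt}'\beta-\alpha y_{jt-1}$ does not average to $\lambda(w_{jt})$ when $\alpha\neq0$. For example, with $X_{jt}'\beta+\lambda(w_{jt})=0$, $\Prob(y_{jt-1}=1\mid X_{jt},w_{jt})=1/2$ and $\alpha=4$, it averages to $\lambda(w_{jt})+\log 2.861-2\approx\lambda(w_{jt})-0.95$. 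So $R_1\not\to0$. Even part (a) requires adding $\ind\{y_{lt-1}=y_{jt-1}\}$ to the inner weights, with codegree comparisons restricted to equal lagged outcomes (the only case Lemma~\ref{lem:codegree_consist} covers).
\item Your bound $\sup_j|\hat\Prob_j-\Prob_j|=o_p((nh_1)^{-1/4})$ requires $\sqrt{nh_1}\,h_3^k\to\infty$. This contradicts $\sqrt{nh_1}\,h_3^2\to0$ whenever $k\ge2$. Moreover, the quadratic delta-method term is not just uncontrolled but biased. Its mean $\tfrac12(F^{-1})''(\Prob_j)\operatorname{Var}(\hat\Prob_j)\asymp(nh_1h_3^k)^{-1}$ is generically nonzero and does not average out over $j$. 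After scaling by $\sqrt{nh_1}$ it is of order $(\sqrt{nh_1}\,h_3^k)^{-1}\to\infty$.
\item Lemma~\ref{lem:codegree_consist} gives only pointwise consistency for a fixed pair, with no rate and no uniformity. Replacing $\hat\delta$ by $\delta$ in the weights therefore needs its own argument.
\end{itemize}

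The paper's proof passes over exactly these points. It asserts that ``the kernel effectively performs one-dimensional smoothing over $w$'', treats the inner estimates as ``approximately independent'', and claims $\hat\Prob_j\plim F(X_{jt}'\beta+\alpha y_{jt-1}+\lambda(w_{jt}))$. Following it will not close these gaps. A correct result needs a modified first stage and a restated rate, centering and variance.
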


The estimator $\hat{\lambda}(w_0)$ converges at the nonparametric rate $\sqrt{nh_1}$, which is slower than the $\sqrt{n}$ rate for $(\hat{\beta}, \hat{\alpha})$. This is expected: $\lambda$ is a function and its pointwise estimation requires local smoothing over agents with similar codegree profiles, while $(\beta, \alpha)$ is estimated globally from all discordant pairs. As a consequence, the first-step estimation error in $(\hat{\beta}, \hat{\alpha})$ is $\sqrt{n}$-consistent and therefore asymptotically negligible relative to the $\sqrt{nh_1}$ rate, provided $h_1 \to 0$.

The bias $B_\lambda(w_0) = \frac{1}{2}\lambda''(w_0) \int u^2 K_1(u)\, du$ depends only on the curvature of $\lambda$ at $w_0$ and the kernel's second moment; the density-gradient correction that appears in the general kernel regression bias formula vanishes identically. The condition $\sqrt{nh_1}\, h_3^2 \to 0$ ensures that the bias from the covariate kernel $K_3$ in the Nadaraya--Watson step is asymptotically negligible; it is satisfied when $h_3$ shrinks faster than $(nh_1)^{-1/4}$. The variance $V_\lambda(w_0)$ is the product of two terms: $\int K_1^2\, du$, the roughness of the kernel, and $\E[1/(F(v_0)(1 - F(v_0))) \mid w_{it} = w_0]$, the conditional inverse Fisher information from the logistic model. Points where $\Prob(y_{it} = 1 \mid w_0)$ is close to $0$ or $1$ are harder to estimate because the inverse link $F^{-1}$ has steep slopes near the boundary, amplifying the variance.

The proof, given in Appendix~\ref{sec:appendix}, decomposes $\hat{\lambda}(w_0) - \lambda(w_0)$ into three terms: a kernel regression bias from the $K_1$ smoothing, a variance term from the conditional choice probability estimation error propagated through $F^{-1}$, and a negligible remainder from the first-step estimation of $(\beta, \alpha)$. The Nadaraya--Watson error in $\hat{\Prob}$ is averaged out by the outer $K_1$ kernel, so the dominant contribution to the asymptotic variance comes from the one-dimensional smoothing over the codegree distance rather than the $(1+k)$-dimensional smoothing in the Nadaraya--Watson step.

\section{Monte Carlo Simulations}\label{sec:montecarlo}

This section evaluates the finite-sample properties of the proposed estimator through Monte Carlo experiments. The simulations are designed to assess three aspects: bias reduction relative to standard specifications, robustness across different network formation mechanisms, and the coverage properties of confidence intervals based on the asymptotic theory developed in Section~\ref{sec:asymptotics}.

\subsection{Data Generating Process}

I generate data from the model in equations~\eqref{eq:outcome}--\eqref{eq:network} with $T = 3$ observed periods (plus the initial condition $y_{i0}$). The true parameters are $\beta = 1$ and $\alpha = 0.5$. The social influence function is:
\[
\lambda(w) = \frac{4w^3 - 3}{2},
\]
which introduces nonlinear dependence on the index of unobserved social characteristics.

The observed covariate is constructed as:
\[
X_{it} = \frac{3w_{it}^3 + \xi_{it} + 1}{3}, \quad \xi_{it} \sim \N(0,1),
\]
so that $X_{it}$ is correlated with $w_{it}$, generating endogeneity. The indices of unobserved social characteristics $w_{it}$ are drawn from $U[0,1]$, independently across agents and with temporal persistence:
\[
w_{it} = 0.7\, w_{it-1} + 0.3\, u_{it}, \quad u_{it} \sim U[0,1], \quad w_{i0} \sim U[0,1].
\]
The initial condition $y_{i0}$ is generated from $y_{i0} = \ind\{X_{i0}'\beta + \lambda(w_{i0}) - \varepsilon_{i0} \geq 0\}$. The outcome shocks $\varepsilon_{it}$ follow a standard logistic distribution. The link formation shocks $\eta_{ijt}$ are drawn from $U[0,1]$.

The adjacency matrix at each period is generated using three alternative link formation functions:

\begin{enumerate}
\item Homophily model:
$f_1(w_{it}, w_{jt};\, y_{it-1}, y_{jt-1}) = 1 - (w_{it} - w_{jt})^2$

\item Beta model:
$f_2(w_{it}, w_{jt};\, y_{it-1}, y_{jt-1}) = \frac{\exp(w_{it} + w_{jt})}{1 + \exp(w_{it} + w_{jt})}$

\item Dynamic homophily model:
$f_3(w_{it}, w_{jt};\, y_{it-1}, y_{jt-1}) = 1 - (w_{it} - w_{jt})^2 + 0.2 \cdot y_{it-1} \cdot y_{jt-1}$
\end{enumerate}

The first two specifications are static: the link probability depends only on the indices of unobserved social characteristics, not on lagged outcomes. Under these models, every period's adjacency matrix is informative about the current type distance, and the condition $y_{i,s-1} = y_{j,s-1}$ in the set $\mathcal{P}_s$ is satisfied automatically for all pairs that happen to share the same lagged outcome. The third specification allows past outcomes to directly influence link formation, creating a feedback loop between the network and economic behavior: agents who both smoked in the previous period are more likely to form a link, over and above the homophily channel. Under this model, the codegree distance at period $s$ must be computed among agents sharing the same lagged outcome at $s-1$, since otherwise the lagged-outcome component $0.2 \cdot y_{it-1} \cdot y_{jt-1}$ contaminates the type-matching kernel.

\subsection{Estimators Compared}

For each design, I compare four estimators:

\begin{enumerate}
\item Naive dynamic logit: $y_{it} = \ind\{\alpha_0 + X_{it}\beta_1 + \alpha_1 y_{it-1} - \varepsilon_{it} \geq 0\}$, ignoring network dependence.

\item Dynamic logit with network controls: augmenting the naive specification with peer averages $\bar{X}_{it} = \frac{\sum_j D_{ijt} X_{jt}}{\sum_j D_{ijt}}$ and $\bar{y}_{it-1} = \frac{\sum_j D_{ijt} y_{jt-1}}{\sum_j D_{ijt}}$.

\item Infeasible dynamic logit: including the true $\lambda(w_{it})$ as a regressor, serving as a lower bound on achievable bias.

\item Proposed estimator: the kernel-weighted conditional maximum likelihood estimator from equation~\eqref{eq:estimator}, using the period-specific codegree distance~\eqref{eq:codegree_est}, the Epanechnikov kernel $K(x) = \frac{3}{4}(1 - x^2)\ind(x^2 < 1)$, and bandwidth sequences $h_1 = n^{-1/9}/10$ and $h_2 = n^{-1/5}/5$. 
\end{enumerate}

For inference, the naive, controls, and infeasible estimators use the standard maximum likelihood standard errors from the logistic regression. The proposed estimator uses the dyadic-clustered sandwich variance estimator, which clusters the score contributions by agent to account for the within-agent dependence across pairs induced by the pairwise objective function.

\subsection{Results}

For each link formation model, I run $R = 1000$ Monte Carlo replications with sample sizes $n \in \{250, 500, 1000, 2000\}$. Performance is evaluated using bias, root mean squared error, and 95\% confidence interval coverage for both $\beta$ and $\alpha$. Results are reported in Tables~\ref{tab:mc_homophily}--\ref{tab:mc_dynamic}.

\begin{table}[t]
\centering
\caption{Monte Carlo Results: Homophily Model ($R = 1000$)}\label{tab:mc_homophily}
\smallskip
\small
\begin{tabular}{l *{4}{c} c *{4}{r}}
\toprule
 & \multicolumn{4}{c}{$\beta$ ($\beta_0 = 1$)} & & \multicolumn{4}{c}{$\alpha$ ($\alpha_0 = 0.5$)} \\
\cmidrule{2-5} \cmidrule{7-10}
$n$ & Naive & Controls & Infeasible & Proposed & & Naive & Controls & Infeasible & Proposed \\
\midrule
\multicolumn{10}{l}{\textit{Panel A: $|$Bias$|$}} \\[2pt]
250 & 0.368 & 0.255 & 0.006 & 0.215 & & 0.136 & 0.095 & 0.008 & 0.020 \\
500 & 0.360 & 0.236 & 0.002 & 0.204 & & 0.132 & 0.076 & 0.006 & 0.016 \\
1000 & 0.361 & 0.188 & 0.001 & 0.178 & & 0.137 & 0.070 & 0.002 & 0.011 \\
2000 & 0.363 & 0.162 & $<10^{-3}$ & 0.147 & & 0.136 & 0.061 & 0.002 & 0.007 \\
\addlinespace[6pt]
\multicolumn{10}{l}{\textit{Panel B: Root Mean Squared Error}} \\[2pt]
250 & 0.433 & 0.351 & 0.249 & 1.208 & & 0.209 & 0.188 & 0.164 & 0.914 \\
500 & 0.393 & 0.266 & 0.172 & 0.758 & & 0.176 & 0.141 & 0.117 & 0.591 \\
1000 & 0.377 & 0.215 & 0.120 & 0.518 & & 0.159 & 0.109 & 0.085 & 0.395 \\
2000 & 0.371 & 0.186 & 0.084 & 0.369 & & 0.148 & 0.084 & 0.058 & 0.290 \\
\addlinespace[6pt]
\multicolumn{10}{l}{\textit{Panel C: 95\% Coverage}} \\[2pt]
250 & 0.625 & 0.791 & 0.947 & 0.945 & & 0.859 & 0.908 & 0.959 & 0.941 \\
500 & 0.369 & 0.742 & 0.951 & 0.948 & & 0.768 & 0.885 & 0.938 & 0.949 \\
1000 & 0.088 & 0.667 & 0.951 & 0.933 & & 0.579 & 0.840 & 0.945 & 0.955 \\
2000 & 0.005 & 0.510 & 0.951 & 0.937 & & 0.298 & 0.803 & 0.948 & 0.949 \\
\bottomrule
\end{tabular}
\vspace{4pt}\par\raggedright\footnotesize \textit{Notes:} Absolute bias, root mean squared error, and empirical coverage of the nominal 95\% confidence interval across $R = 1000$ replications. True parameters: $\beta_0 = 1$, $\alpha_0 = 0.5$. Naive: standard dynamic logit ignoring network effects. Controls: dynamic logit augmented with peer averages. Infeasible: dynamic logit including the true $\lambda(w_{it})$. Proposed: kernel-weighted conditional maximum likelihood estimator using codegree matching. Standard errors for the naive, controls, and infeasible estimators are the MLE standard errors; standard errors for the proposed estimator use the dyadic-clustered sandwich variance estimator.
\end{table}

\begin{table}[t]
\centering
\caption{Monte Carlo Results: Beta Model ($R = 1000$)}\label{tab:mc_beta_model}
\smallskip
\small
\begin{tabular}{l *{4}{r} c *{4}{c}}
\toprule
 & \multicolumn{4}{c}{$\beta$ ($\beta_0 = 1$)} & & \multicolumn{4}{c}{$\alpha$ ($\alpha_0 = 0.5$)} \\
\cmidrule{2-5} \cmidrule{7-10}
$n$ & Naive & Controls & Infeasible & Proposed & & Naive & Controls & Infeasible & Proposed \\
\midrule
\multicolumn{10}{l}{\textit{Panel A: $|$Bias$|$}} \\[2pt]
250 & 0.364 & 0.367 & 0.008 & 0.277 & & 0.133 & 0.135 & 0.004 & 0.070 \\
500 & 0.364 & 0.364 & 0.005 & 0.166 & & 0.138 & 0.138 & 0.001 & 0.052 \\
1000 & 0.364 & 0.362 & 0.003 & 0.150 & & 0.137 & 0.136 & 0.001 & 0.011 \\
2000 & 0.362 & 0.361 & 0.003 & 0.120 & & 0.138 & 0.137 & $<10^{-3}$& 0.001 \\
\addlinespace[6pt]
\multicolumn{10}{l}{\textit{Panel B: Root Mean Squared Error}} \\[2pt]
250 & 0.425 & 0.428 & 0.245 & 1.191 & & 0.203 & 0.205 & 0.156 & 0.878 \\
500 & 0.397 & 0.398 & 0.177 & 0.733 & & 0.178 & 0.178 & 0.117 & 0.598 \\
1000 & 0.381 & 0.380 & 0.125 & 0.502 & & 0.160 & 0.159 & 0.083 & 0.423 \\
2000 & 0.370 & 0.369 & 0.084 & 0.384 & & 0.150 & 0.149 & 0.061 & 0.289 \\
\addlinespace[6pt]
\multicolumn{10}{l}{\textit{Panel C: 95\% Coverage}} \\[2pt]
250 & 0.620 & 0.627 & 0.949 & 0.931 & & 0.880 & 0.885 & 0.959 & 0.958 \\
500 & 0.359 & 0.361 & 0.942 & 0.942 & & 0.756 & 0.760 & 0.951 & 0.949 \\
1000 & 0.096 & 0.097 & 0.939 & 0.952 & & 0.588 & 0.589 & 0.951 & 0.926 \\
2000 & 0.001 & 0.001 & 0.964 & 0.934 & & 0.306 & 0.314 & 0.935 & 0.947 \\
\bottomrule
\end{tabular}
\vspace{4pt}\par\raggedright\footnotesize \textit{Notes:} Absolute bias, root mean squared error, and empirical coverage of the nominal 95\% confidence interval across $R = 1000$ replications. True parameters: $\beta_0 = 1$, $\alpha_0 = 0.5$. Naive: standard dynamic logit ignoring network effects. Controls: dynamic logit augmented with peer averages. Infeasible: dynamic logit including the true $\lambda(w_{it})$. Proposed: kernel-weighted conditional maximum likelihood estimator using codegree matching. Standard errors for the naive, controls, and infeasible estimators are the MLE standard errors; standard errors for the proposed estimator use the dyadic-clustered sandwich variance estimator.
\end{table}

\begin{table}[t]
\centering
\caption{Monte Carlo Results: Dynamic Homophily Model ($R = 1000$)}\label{tab:mc_dynamic}
\smallskip
\small
\begin{tabular}{l *{4}{r} c *{4}{r}}
\toprule
 & \multicolumn{4}{c}{$\beta$ ($\beta_0 = 1$)} & & \multicolumn{4}{c}{$\alpha$ ($\alpha_0 = 0.5$)} \\
\cmidrule{2-5} \cmidrule{7-10}
$n$ & Naive & Controls & Infeasible & Proposed & & Naive & Controls & Infeasible & Proposed \\
\midrule
\multicolumn{10}{l}{\textit{Panel A: $|$Bias$|$}} \\[2pt]
250 & 0.359 & 0.258 & 0.004 & 0.346 & & 0.139 & 0.030 & 0.007 & 0.109 \\
500 & 0.360 & 0.216 & 0.003 & 0.169 & & 0.134 & 0.028 & 0.005 & 0.019 \\
1000 & 0.358 & 0.189 & 0.002 & 0.160 & & 0.131 & 0.070 & 0.004 & 0.014 \\
2000 & 0.363 & 0.174 & 0.002 & 0.128 & & 0.135 & 0.116 & 0.003 & 0.009 \\
\addlinespace[6pt]
\multicolumn{10}{l}{\textit{Panel B: Root Mean Squared Error}} \\[2pt]
250 & 0.422 & 0.347 & 0.247 & 1.166 & & 0.211 & 0.188 & 0.164 & 1.327 \\
500 & 0.392 & 0.276 & 0.169 & 0.753 & & 0.177 & 0.154 & 0.117 & 0.567 \\
1000 & 0.375 & 0.226 & 0.118 & 0.534 & & 0.153 & 0.149 & 0.079 & 0.402 \\
2000 & 0.371 & 0.197 & 0.086 & 0.362 & & 0.146 & 0.166 & 0.058 & 0.278 \\
\addlinespace[6pt]
\multicolumn{10}{l}{\textit{Panel C: 95\% Coverage}} \\[2pt]
250 & 0.653 & 0.804 & 0.945 & 0.946 & & 0.858 & 0.926 & 0.958 & 0.943 \\
500 & 0.380 & 0.725 & 0.951 & 0.948 & & 0.761 & 0.894 & 0.941 & 0.958 \\
1000 & 0.092 & 0.626 & 0.957 & 0.938 & & 0.619 & 0.790 & 0.950 & 0.943 \\
2000 & 0.004 & 0.456 & 0.960 & 0.948 & & 0.329 & 0.551 & 0.948 & 0.954 \\
\bottomrule
\end{tabular}
\vspace{4pt}\par\raggedright\footnotesize \textit{Notes:} Absolute bias, root mean squared error, and empirical coverage of the nominal 95\% confidence interval across $R = 1000$ replications. True parameters: $\beta_0 = 1$, $\alpha_0 = 0.5$. Naive: standard dynamic logit ignoring network effects. Controls: dynamic logit augmented with peer averages. Infeasible: dynamic logit including the true $\lambda(w_{it})$. Proposed: kernel-weighted conditional maximum likelihood estimator using codegree matching. Standard errors for the naive, controls, and infeasible estimators are the MLE standard errors; standard errors for the proposed estimator use the dyadic-clustered sandwich variance estimator.
\end{table}

Several patterns emerge from the simulation results. The naive dynamic logit estimator exhibits substantial bias for both $\beta$ and $\alpha$ that does not diminish with sample size, confirming the presence of asymptotic bias when endogenous network effects are ignored. Across all three network models, the naive bias for $\beta$ remains in the range 0.36 to 0.37 regardless of $n$, and the bias for $\alpha$ stays near 0.13 to 0.14. These biases are large relative to the true parameter values, with $\beta$ overestimated by roughly 36\% and $\alpha$ overestimated by roughly 27\%.

Augmenting the specification with network controls (peer average covariates and outcomes) provides partial bias reduction under the homophily and dynamic homophily models, where the network structure contains some information about the latent types. Under the homophily model, the controls estimator reduces the bias for $\beta$ from 0.363 to 0.162 at $n = 2000$. Under the beta model, however, network controls are essentially ineffective: the bias for the controls estimator is nearly identical to the naive estimator across all sample sizes. This occurs because the beta model's symmetric link function $f_2(w_i, w_j) = \exp(w_i + w_j)/[1 + \exp(w_i + w_j)]$ generates degree patterns that do not distinguish agents by type, so peer averages carry no additional information about the latent $w_{it}$. Under the dynamic homophily model, the controls estimator's bias for $\alpha$ changes sign and grows in magnitude as $n$ increases ($-0.116$ at $n = 2000$), suggesting that including endogenous network statistics can introduce new sources of bias rather than reducing the existing ones.

The infeasible estimator, which includes the true $\lambda(w_{it})$ as a regressor, is nearly unbiased across all designs and sample sizes, with biases typically below 0.01 in absolute value. Its RMSE declines at the expected $n^{-1/2}$ rate. This confirms that the bias in the other estimators originates from the omission of the social influence function rather than from other aspects of the model specification.

The proposed kernel-weighted conditional maximum likelihood estimator shows bias that decreases with sample size, consistent with the asymptotic theory developed in Section~\ref{sec:asymptotics}. For $\alpha$, the proposed estimator achieves biases comparable to the infeasible benchmark at $n = 2000$ across all three models: $-0.007$ under homophily, $-0.001$ under the beta model, and $-0.025$ under dynamic homophily. For $\beta$, the bias reduction relative to the naive estimator is substantial (from 0.363 to 0.147 under homophily at $n = 2000$), though the convergence is slower than for $\alpha$. The RMSE of the proposed estimator is larger than that of the parametric alternatives at moderate sample sizes, reflecting the variance cost of semiparametric estimation with kernel smoothing. At $n = 2000$, the RMSE of the proposed estimator for $\beta$ reaches 0.369 under the homophily model, comparable to the naive estimator's RMSE of 0.371, despite the proposed estimator having much smaller bias.

The coverage results in Panel~C of Tables~\ref{tab:mc_homophily}--\ref{tab:mc_dynamic} illustrate the inferential consequences of the bias patterns documented above. For the naive estimator, coverage of the 95\% confidence interval for $\beta$ collapses as $n$ grows: from 0.625 at $n = 250$ to 0.005 at $n = 2000$ under the homophily model, and from 0.620 to 0.001 under the beta model. As the confidence intervals narrow around the inconsistent estimate, the probability that they contain the true value falls to essentially zero. Coverage for $\alpha$ follows a similar pattern, declining from roughly 0.86 to roughly 0.30 across all models. The controls estimator exhibits the same qualitative behavior, with coverage that is higher than the naive at small samples but still deteriorates steadily. Under the beta model, where network controls are uninformative, the controls estimator's coverage is indistinguishable from the naive (0.001 for $\beta$ at $n = 2000$). Under the dynamic homophily model, the controls estimator's coverage for $\alpha$ falls to 0.551, driven by the growing negative bias documented in Panel~A. The infeasible estimator maintains coverage close to the nominal 0.95 across all designs and sample sizes, confirming that its standard errors are correctly calibrated.

The proposed estimator achieves near-nominal coverage across all sample sizes and network models. For $\alpha$, coverage ranges from 0.926 to 0.958, and for $\beta$, coverage ranges from 0.931 to 0.952. These rates show no systematic deterioration as $n$ grows, indicating that the dyadic-clustered sandwich standard errors accurately capture the sampling variability of the estimator. The coverage of the proposed estimator is comparable to that of the infeasible estimator across all configurations, despite the fact that the proposed estimator relies on estimated codegree distances rather than the true $\lambda(w_{it})$. This finding validates the asymptotic normality result in Theorem~\ref{thm:normality} and confirms that the sandwich variance formula, which accounts for the pairwise dependence structure through dyadic clustering, provides reliable inference in finite samples.

Overall, the Monte Carlo evidence confirms the key prediction of the identification theory: standard dynamic logit approaches, whether naive or augmented with network controls, produce biased estimates and invalid confidence intervals when the index of unobserved social characteristics simultaneously affects both outcomes and network formation. The proposed estimator substantially reduces this bias by exploiting the codegree structure of the network, and the dyadic-clustered sandwich standard errors deliver confidence intervals with near-nominal coverage even at moderate sample sizes. Figures~\ref{fig:mc_beta} and~\ref{fig:mc_alpha} in Appendix~\ref{sec:appendix_figures} provide a visual summary of these patterns across all three network models.

\section{Empirical Application}\label{sec:empirical}

This section applies the proposed estimator to longitudinal data on adolescent smoking behavior in a Scottish secondary school, where friendship networks and substance use are observed at three survey waves.

\subsection{Data}

The data come from the Teenage Friends and Lifestyle Study (TFLS), a longitudinal survey of adolescent health behavior and social networks conducted at a secondary school in Glasgow, Scotland \citep{michell1996, pearson2000}.\footnote{The data are publicly available at \url{https://www.stats.ox.ac.uk/~snijders/siena/Glasgow_data.htm}.} The study tracked 160 pupils from an older cohort across three waves of data collection: wave~1 in February 1995, wave~2 in January 1996, and wave~3 in January 1997. Of these, 129 were present at all three waves. After restricting to students with complete covariate data across all waves, the estimation sample contains $n = 95$ students and $n \times T = 285$ student-wave observations.
 
 The setting is well suited to the framework developed in this paper. Adolescent smoking exhibits strong state dependence: once a student begins experimenting with tobacco or cannabis, the probability of continued use rises through habit formation, nicotine dependence, and reduced social stigma \citep{fletcher2010, nakajima2007}. At the same time, unobserved characteristics such as risk attitude, impulsivity, or attachment to conventional norms evolve during adolescence and affect both substance use and the formation of friendships. Students who are more risk-tolerant may be both more likely to smoke and more likely to befriend other risk-taking peers. This creates a confounding pathway between network structure and smoking behavior that the proposed estimator is designed to address.

\subsection{Variables}\label{sec:variables}

Table~\ref{tab:descriptive} reports summary statistics for the estimation sample across the three survey waves.

\begin{sidewaystable}
\centering
\caption{Descriptive statistics: Glasgow TFLS sample.}
\label{tab:descriptive}
\smallskip
\begin{tabular}{l cccc cccc cccc}
\toprule
 & \multicolumn{4}{c}{Wave 1 (t=1995)} & \multicolumn{4}{c}{Wave 2 (t=1996)} & \multicolumn{4}{c}{Wave 3 (t=1997)} \\
\cmidrule(lr){2-5} \cmidrule(lr){6-9} \cmidrule(lr){10-13}
 & Mean & SD & Min & Max & Mean & SD & Min & Max & Mean & SD & Min & Max \\
\midrule
\multicolumn{13}{l}{\textit{Panel A: Outcome and covariates}} \\[4pt]
Smoking & 0.421 & 0.496 & 0 & 1 & 0.526 & 0.502 & 0 & 1 & 0.716 & 0.453 & 0 & 1 \\[2pt]
Pocket money & 9.76 & 7.50 & 1 & 40 & 11.82 & 8.32 & 2 & 50 & 16.38 & 10.32 & 3 & 50 \\[2pt]
Romantic relationship & 0.305 & 0.463 & 0 & 1 & 0.232 & 0.424 & 0 & 1 & 0.326 & 0.471 & 0 & 1 \\[2pt]
Female & 0.463 & 0.501 & 0 & 1 & 0.463 & 0.501 & 0 & 1 & 0.463 & 0.501 & 0 & 1 \\[2pt]
Age & 13.34 & 0.31 & 13 & 14 & 14.34 & 0.31 & 14 & 15 & 15.34 & 0.31 & 15 & 16 \\[2pt]
Alcohol & 2.789 & 1.041 & 1 & 5 & 3.074 & 1.064 & 1 & 5 & 3.495 & 0.886 & 2 & 5 \\[2pt]
Leisure & 2.615 & 0.309 & 2 & 3 & 2.744 & 0.259 & 2 & 4 & 2.734 & 0.280 & 2 & 3 \\[2pt]
Family smoking & 0.600 & 0.492 & 0 & 1 & 0.526 & 0.502 & 0 & 1 & 0.158 & 0.367 & 0 & 1 \\[4pt]
\multicolumn{13}{l}{\textit{Panel B: Friend network}} \\[4pt]
Density & 0.0412 & & & & 0.0390 & & & & 0.0372 & & & \\[2pt]
Degree & 3.87 & 2.19 & 0 & 10 & 3.66 & 1.92 & 0 & 9 & 3.49 & 1.79 & 0 & 8 \\[2pt]
\midrule
Students & \multicolumn{12}{c}{$95$} \\
Waves & \multicolumn{12}{c}{$3$} \\
Observations ($n \times T$) & \multicolumn{12}{c}{$285$} \\
\bottomrule
\end{tabular}

\smallskip
{\footnotesize Smoking equals one if the student reports any tobacco use (occasional or regular) or any cannabis use (tried, occasional, or regular). Pocket money is monthly amount in British pounds. Romantic relationship equals one if the student reports being in a relationship. Female equals one for girls. Age is measured at wave 1 and incremented by one year per wave. Alcohol is an ordinal scale from 1 (never) to 5 (regularly). Leisure is the average across leisure activity items for each student. Family smoking equals one if a family member smokes. Networks are symmetrized using the OR rule. Degree is the number of undirected links per student.}
\end{sidewaystable}

The binary outcome $y_{it}$ equals one if student $i$ reports any tobacco use (occasional or regular, corresponding to values 2 or 3 on the original three-point scale) or any cannabis use (tried once, occasional, or regular, corresponding to values 2, 3, or 4 on the original four-point scale) at wave $t$, and zero otherwise. This definition captures all smoking behavior, whether involving tobacco or cannabis, since both substances are consumed by smoking and reflect the same underlying margin of initiation. The prevalence of smoking rises steadily from 42.1\% at wave~1 to 52.6\% at wave~2 and 71.6\% at wave~3, consistent with rapid uptake during early adolescence.

The identification strategy requires $T \geq 3$ modeled periods plus an initial condition $y_{i0}$. With three survey waves, I set $y_{i0} = 0$ for all students and use the three waves as periods $t = 1, 2, 3$. This assumption is motivated by the fact that students are 13 years old at wave~1, an age at which regular smoking remains relatively uncommon.

Pocket money is the amount of monthly disposable income in British pounds, as reported by the student. Values of zero or below (coded as ``don't know'' or missing in the original data) are recoded to missing. The average rises from 9.76 pounds at wave~1 to 16.38 pounds at wave~3, reflecting the general increase in allowances as students grow older. Pocket money captures access to cigarettes and the capacity to purchase tobacco or cannabis.

Romantic relationship is a binary indicator equal to one if the student reports being in a relationship. It captures a dimension of social involvement and maturity that may correlate with risk-taking behavior. The proportion fluctuates between 23\% and 33\% across waves.

Alcohol is an ordinal variable on a five-point scale ranging from 1 (never) to 5 (drinks regularly). The average increases from 2.79 at wave~1 to 3.50 at wave~3, indicating a general trend toward more frequent drinking over the study period. Alcohol consumption is strongly associated with smoking in adolescent populations, as both behaviors tend to co-occur and may reflect common underlying risk factors.

Leisure is the average across a set of leisure activity items recorded at each wave. Each item captures how frequently the student engages in a particular out-of-school activity, measured on an ordinal scale. The mean leisure score is stable across waves, ranging from 2.62 to 2.74. Leisure captures a dimension of social engagement and time use that may predispose students to environments where smoking is more prevalent.

These four covariates are time-varying, which is required by the conditional maximum likelihood approach: the pairwise differencing that eliminates the fixed effect $\lambda(w_{it})$ also differences the covariates across periods, so any time-invariant characteristic would drop out. The naive and controls specifications, which do not rely on this differencing, include three additional covariates that are either time-invariant or nearly so.

Female is a binary indicator equal to one for girls. Girls make up 46.3\% of the estimation sample. Sex enters the naive and controls logit regressions directly but is absorbed by the conditioning in the proposed estimator.

Age is measured at wave~1 (values 13 or 14, reflecting two adjacent cohorts within the school) and incremented by one year per subsequent wave, so that a student who is 13 at wave~1 is 14 at wave~2 and 15 at wave~3. In the naive and controls specifications, age captures any level effect of maturity on smoking propensity. It does not enter the proposed estimator because, after the pairwise differencing, age increases by exactly one year for every student across consecutive periods and therefore cancels out.

Family smoking is a binary indicator equal to one if a family member smokes. This variable exhibits a sharp decline from 60.0\% at wave~1 to 15.8\% at wave~3, which likely reflects changes in reporting rather than a genuine drop in family members' smoking behavior over two years. Family smoking captures household-level exposure to smoking norms and the availability of cigarettes at home.

\subsection{Network construction}

At each wave, students nominated up to six friends within the cohort, with nominations coded as 1 (best friend) or 2 (friend). The analysis uses the ``any friend'' network, which counts a link whenever at least one of the two students nominated the other at level 1 or 2. The directed adjacency matrix is symmetrized using the OR rule: $D_{ijt} = \max(D_{ijt}^{\mathrm{dir}}, D_{jit}^{\mathrm{dir}})$. The resulting undirected networks have densities of approximately 0.04 at each wave, with an average degree around 3.5 to 3.9 links per student.

\subsection{Specification}

The outcome equation takes the form
\[
y_{it} = \ind\{X_{it}'\beta + \alpha\, y_{it-1} + \lambda(w_{it}) - \varepsilon_{it} \geq 0\}, \quad t = 1, 2, 3,
\]
where $X_{it}$ contains pocket money, romantic relationship, alcohol, and leisure; $y_{it-1}$ captures state dependence in smoking; and $\lambda(w_{it})$ is an unknown function of the latent social characteristic $w_{it}$ that also governs friendship formation. The unobserved characteristic $w_{it}$ can be interpreted as an evolving index of risk attitude or social orientation, which affects both substance use and the tendency to befriend other risk-taking students.

Three specifications are estimated. The naive dynamic logit is a pooled logit with $y_{it-1}$ and all seven covariates (pocket money, romantic relationship, alcohol, leisure, female, age, family smoking), ignoring network dependence entirely. The dynamic logit with controls augments the naive specification with peer averages of all covariates and of lagged smoking, computed as $\bar{X}_{jt} = \sum_j D_{ijt} X_{jt} / \sum_j D_{ijt}$, using the any friend network at each wave. The proposed estimator is the kernel-weighted conditional maximum likelihood estimator from equation~\eqref{eq:estimator}, using the four time-varying covariates ($k = 4$), the stratified aggregate codegree distance with the Epanechnikov kernel and bandwidth $h_1 = n^{-1/9}/10$, and the dyadic-clustered sandwich variance estimator. 


\subsection{Results}

Table~\ref{tab:any_friend_network} reports the estimation results.

\begin{table}[!ht]
\centering
\caption{Empirical estimates: Any friend network.}
\label{tab:any_friend_network}
\smallskip
\begin{tabular}{l ccc}
\toprule
 & Naive & Controls & Proposed \\
\midrule
Intercept & $\underset{(2.998)}{-5.025^{*}}$ & $\underset{(4.310)}{-9.902^{**}}$ & --- \\[6pt]
Pocket money & $\underset{(0.019)}{0.013}$ & $\underset{(0.020)}{0.010}$ & $\underset{(0.085)}{0.152^{*}}$ \\[6pt]
Romantic relationship & $\underset{(0.345)}{0.822^{**}}$ & $\underset{(0.380)}{0.559}$ & $\underset{(4.147)}{7.622^{*}}$ \\[6pt]
Alcohol & $\underset{(0.174)}{0.792^{***}}$ & $\underset{(0.184)}{0.787^{***}}$ & $\underset{(1.974)}{4.996^{**}}$ \\[6pt]
Leisure & $\underset{(0.532)}{0.420}$ & $\underset{(0.563)}{0.186}$ & $\underset{(4.915)}{8.343^{*}}$ \\[6pt]
Female & $\underset{(0.313)}{-0.778^{**}}$ & $\underset{(0.628)}{0.008}$ &  --- \\[6pt]
Age & $\underset{(0.211)}{0.079}$ & $\underset{(0.290)}{0.394}$ & ---  \\[6pt]
Family smoking & $\underset{(0.319)}{-0.007}$ & $\underset{(0.338)}{-0.014}$ &---   \\[6pt]
Lagged smoking ($\hat\alpha$) & $\underset{(0.422)}{2.196^{***}}$ & $\underset{(0.446)}{2.060^{***}}$ & $\underset{(0.823)}{2.281^{***}}$ \\[6pt]
\midrule
Peer average covariates &--- & YES & ---  \\
Observations & $285$ & $285$ & $96$ pairs \\
\bottomrule
\end{tabular}

\smallskip
{\footnotesize Standard errors in parentheses. $^{*}p<0.10$, $^{**}p<0.05$, $^{***}p<0.01$. Naive and controls: MLE standard errors from pooled logit ($n \times T = 95 \times 3$). The controls specification includes peer averages of all covariates and of lagged smoking; coefficients not reported. Proposed estimator uses pocket money, romantic relationship, alcohol, and leisure as covariates ($k=4$). Bandwidths: $h_1 = n^{-1/9}/10$.}
\end{table}

All three specifications find strong and statistically significant state dependence in adolescent smoking. The magnitude of the state dependence parameter is broadly stable across specifications, with all three estimates falling in the range of 2.1 to 2.3 and all significant at the 1\% level. Students who smoked in the previous wave have substantially higher odds of smoking in the current wave, and this finding is robust to controlling for peer characteristics and to the semiparametric elimination of unobserved social characteristics. The stability of the state dependence estimate across specifications indicates that the habit formation channel in adolescent smoking is genuine and not driven by network-related confounding.

Where the evidence of network confounding emerges most clearly is in the covariate effects. Across all four covariates, the proposed estimator produces coefficients that are substantially larger in magnitude than those from the naive and controls specifications. This systematic pattern is the signature of attenuation bias induced by unobserved heterogeneity. When unobserved social characteristics such as risk attitude or social orientation are correlated with both the covariates and the smoking outcome, the naive and controls specifications, which do not eliminate this confounding, understate the direct effects of the covariates on smoking. The proposed estimator, by conditioning on the latent social type through network matching, removes this source of bias and recovers larger covariate effects.

Alcohol consumption illustrates this pattern most starkly. In the naive and controls specifications, the alcohol coefficient is significant at the 1\% level but moderate in magnitude. The proposed estimator produces a coefficient that is several times larger and significant at the 5\% level. The attenuation in the naive and controls specifications is consistent with the following confounding mechanism: students with high latent risk tolerance are both heavier drinkers and more likely to smoke, so the unobserved social characteristic absorbs part of the true alcohol effect and biases the estimated coefficient toward zero.

The same pattern holds for the other covariates. Leisure activity is not significant in either the naive or the controls specification, but becomes marginally significant under the proposed estimator with a large positive coefficient. Once the confounding from unobserved social characteristics is removed, students who increase their participation in leisure activities between waves are more likely to take up smoking, consistent with out-of-school socializing exposing them to environments where smoking is prevalent. Romantic relationship is significant in the naive specification but loses significance once peer averages are added in the controls specification. The proposed estimator produces a large and marginally significant coefficient, indicating that relationship status carries a meaningful direct association with smoking that the naive and controls specifications fail to detect because of the confounding. Pocket money follows the same pattern: insignificant in the naive and controls specifications, but marginally significant under the proposed estimator.

Among the additional covariates available to the naive and controls specifications, the female indicator is negative and significant in the naive logit but loses significance once peer averages are included. Age, which increases by one year per wave by construction, is not significant in either specification; the mechanical yearly increment is collinear with time trends already captured by other covariates. Family smoking is insignificant throughout.

\section{Extension to Ordered Outcomes}\label{sec:extension}

The binary logit framework developed in the preceding sections can be extended to accommodate ordered discrete outcomes. Many outcomes of interest in applied work are naturally ordinal: self-reported health takes values from ``poor'' to ``excellent,'' educational attainment is coded in ordered categories, and substance use, as in the Glasgow TFLS data where tobacco consumption is classified as non-smoker, occasional, or regular, is measured on ordinal scales. This section shows how the identification and estimation strategy extends to the dynamic ordered logit model with endogenous networks, building on the composite conditional maximum likelihood approach of \citet{muris2025}.

\subsection{Model}

Consider an ordered outcome $Y_{it} \in \{1, \ldots, J\}$ generated by a latent variable model:
\begin{equation}\label{eq:ordered_latent}
Y_{it}^* = X_{it}'\beta + \rho\, \ind\{Y_{it-1} \geq k\} + \lambda(w_{it}) - \varepsilon_{it},
\end{equation}
\begin{equation}\label{eq:ordered_outcome}
Y_{it} = \begin{cases}
1 & \text{if }  Y_{it}^* < \gamma_2, \\
2 & \text{if } \gamma_2 \leq Y_{it}^* < \gamma_{3}, \\
\;\vdots & \\
J & \text{if } Y_{it}^* \geq \gamma_J,
\end{cases}
\end{equation}
where $\gamma_1 < \gamma_2 < \cdots < \gamma_J$ are unknown threshold parameters, $k \in \{2, \ldots, J\}$ is a known cutoff for the lagged dependent variable, $\rho$ is the autoregressive parameter, and all other quantities are as in the binary model. The error term $\varepsilon_{it}$ follows a standard logistic distribution, and $\lambda(w_{it})$ is the unknown social influence function that also governs network formation through~\eqref{eq:network}. The state dependence specification $\rho\, \ind\{Y_{it-1} \geq k\}$ uses a binary indicator of whether the previous outcome exceeded a threshold, following \citet{muris2025}; this nests the binary model when $J = 1$ and $k = 1$.

\subsection{Dichotomization and Identification}

The key insight, due to \citet{muris2025}, is that the ordered model can be analyzed through a collection of binary logit models obtained by dichotomizing the outcome at each threshold. I normalize $\gamma_k=0$ without loss of generality, where $k$ is as in equation (\ref{eq:ordered_latent}). Define the binary indicator:
\begin{equation}\label{eq:dichotomize}
D_{it}(k) = \ind\{Y_{it} \geq k\}.
\end{equation}
Under the logistic distributional assumption, $D_{it}(k)$ follows a binary logit model:
\begin{equation}\label{eq:binary_cutoff}
D_{it}(k) = \ind\{X_{it}'\beta + \rho\, D_{it-1}(k) + \lambda(w_{it}) - \varepsilon_{it} \geq 0\}.
\end{equation}

This has the same structure as the binary model in~\eqref{eq:model}, with $D_{it-1}(k)$ playing the role of the lagged outcome. The identification strategy of Section~\ref{sec:identification} therefore applies to each binary model~\eqref{eq:binary_cutoff}: matching agents with the same network type eliminates $\lambda(w_{it})$.

The composite conditional maximum likelihood estimator is obtained by replacing $y_{it}$ in \ref{eq:objective} by  $D_{it}(k)$:
\begin{equation}\label{eq:ordered_estimator}
(\hat{\beta}^{\text{ord}}, \hat{\rho}^{\text{ord}}) = \arg\max_{(b,r)} \; \Omega_n^{\text{ord}}(b, r).
\end{equation}

Once $(\hat{\beta}^{\text{ord}}, \hat{\rho}^{\text{ord}})$ are obtained, the threshold parameters $\gamma_j$ for $j\ne k$ can be estimated by comparing the conditional probabilities across cutoffs. 

The ordered logit extension is directly relevant to the empirical application in Section~\ref{sec:empirical}. The TFLS records tobacco use on a three-point scale: non-smoker ($Y_{it} = 0$), occasional smoker ($Y_{it} = 1$), and regular smoker ($Y_{it} = 2$). Rather than collapsing this to a binary indicator as in the binary analysis, the ordered logit model preserves the full ordinal structure. With $J = 2$ and $k = 1$ (state dependence depends on whether the pupil smoked at all in the previous wave), the model captures both the transition from non-smoking to occasional use and from occasional to regular use, while controlling for unobserved social characteristics through network-type matching.

\section{Conclusion}\label{sec:conclusion}

This paper develops identification and estimation methods for a semiparametric dynamic logit model in which agents' binary outcomes are shaped by state dependence, observed covariates, and an unknown function of latent social characteristics that also govern the formation of social ties. The model accommodates time-varying unobserved heterogeneity and evolving networks without imposing parametric restrictions on either the social influence function or the link formation process.

The identification strategy combines three elements: conditional likelihood arguments to exploit the logistic structure, network-type matching to eliminate the unknown social influence function by comparing agents whose observed linking behavior reveals identical latent characteristics, and local temporal smoothing to handle the interaction between dynamics and time-varying unobserved heterogeneity. The resulting identification equation depends only on the structural parameters $(\beta, \alpha)$ and on observable quantities; all nuisance terms involving the unknown functions $\lambda$ and $f$ cancel through cross-agent differencing and temporal smoothing.

The feasible estimator is a kernel-weighted conditional maximum likelihood estimator that approximates the exact conditioning events using estimated codegree distances and covariate smoothing kernels. Under standard regularity conditions, the estimator is consistent and asymptotically normal at the $\sqrt{n}$ rate. I also show how the social influence function $\lambda(w_{it})$ can be recovered at the nonparametric rate through kernel-weighted inversion of estimated conditional choice probabilities.

Monte Carlo simulations confirm that the proposed estimator substantially reduces the bias present in naive and control-function approaches across a range of network formation models, and that the associated confidence intervals achieve close to nominal coverage even at moderate sample sizes. An empirical application to adolescent smoking in the Glasgow Teenage Friends and Lifestyle Study illustrates the method in a setting where friendship networks and smoking behavior co-evolve during adolescence. The extension to ordered outcomes, building on the composite conditional maximum likelihood approach of \citet{muris2025}, broadens the applicability of the framework to settings where the outcome of interest is naturally ordinal.

Several limitations of the current framework suggest directions for future research. First, 
the model assumes that the network is undirected and that links are observed without measurement error. Many empirical networks are directed (as in the raw Glasgow friendship nominations) or observed with noise. Extending the network-type matching approach to directed networks would require redefining the codegree distance to account for the asymmetry of linking behavior, distinguishing between outgoing and incoming nominations. Handling measurement error in the network would require the codegree distance to remain a consistent proxy for type similarity despite misclassified links.



Second, the paper considers a single network. Many empirical settings feature multiple networks, such as friendship, co-worker, and neighborhood ties, that may carry different information about latent characteristics. Extending the framework to exploit multiple network layers simultaneously could sharpen the identification of network-type equivalence and reduce the variance of the estimator by pooling information across networks.

\appendix
\section{Appendix}\label{sec:appendix}

This appendix collects the proofs of all main results. I begin with the detailed derivation of the identification argument for $T = 3$, then establish a preliminary lemma on the consistency of the empirical codegree distance, and finally prove consistency and asymptotic normality of the estimators $(\hat{\beta}, \hat{\alpha})$ and $\hat{\lambda}$.

\subsection{Proof of Theorem~\ref{thm:identification}: Detailed Derivation for $T = 3$}

For concreteness, I present the full derivation for $T = 3$ (four periods including the initial condition). The general case follows by the same argument applied to any triplet of consecutive periods $(s, s+1, s+2)$.

Set $T = 3$ and define the two paths:
\[
A = \{y_{i0}, y_{i1} = 0, y_{i2} = 1, y_{i3}\}, \qquad B = \{y_{i0}, y_{i1} = 1, y_{i2} = 0, y_{i3}\}.
\]

\noindent\textbf{Step 1: Individual likelihoods.}
The probability of path $A$ conditional on $(X_i, w_i)$ is:
\begin{align*}
\Prob(y_i \in A \mid X_i, w_i) &= p^{y_{i0}}(1-p)^{1-y_{i0}} \cdot \frac{1}{1 + \exp(X_{i1}'\beta + \alpha y_{i0} + \lambda(w_{i1}))} \\
&\quad \times \frac{\exp(X_{i2}'\beta + \lambda(w_{i2}))}{1 + \exp(X_{i2}'\beta + \lambda(w_{i2}))} \cdot \frac{\exp\big(y_{i3}(X_{i3}'\beta + \alpha + \lambda(w_{i3}))\big)}{1 + \exp(X_{i3}'\beta + \alpha + \lambda(w_{i3}))},
\end{align*}
where $p = \Prob(y_{i0} = 1)$. Similarly:
\begin{align*}
\Prob(y_i \in B \mid X_i, w_i) &= p^{y_{i0}}(1-p)^{1-y_{i0}} \cdot \frac{\exp(X_{i1}'\beta + \alpha y_{i0} + \lambda(w_{i1}))}{1 + \exp(X_{i1}'\beta + \alpha y_{i0} + \lambda(w_{i1}))} \\
&\quad \times \frac{1}{1 + \exp(X_{i2}'\beta + \alpha + \lambda(w_{i2}))} \cdot \frac{\exp\big(y_{i3}(X_{i3}'\beta + \lambda(w_{i3}))\big)}{1 + \exp(X_{i3}'\beta + \lambda(w_{i3}))}.
\end{align*}

\noindent\textbf{Step 2: Likelihood ratio under smoothing conditions.}
Conditioning on $X_{i3} = X_{i2}$ and $\lambda(w_{i3}) = \lambda(w_{i2})$, let $c \equiv X_{i2}'\beta + \lambda(w_{i2})$. The product of the period-2 and period-3 contributions in the ratio $\Prob(B)/\Prob(A)$ simplifies as shown in Lemma~\ref{lem:ratio}, yielding:
\[
\frac{\Prob(y_i \in B \mid X_i, w_i)}{\Prob(y_i \in A \mid X_i, w_i)} = \exp\big[(X_{i1} - X_{i2})'\beta + \alpha(y_{i0} - y_{i3}) + \lambda(w_{i1}) - \lambda(w_{i2})\big].
\]

\noindent\textbf{Step 3: Cross-agent conditioning.}
For agents $i$ and $j$ with $\rho_{ij\tau} = 0$ for all $\tau$ (which by Assumption~\ref{ass:netequiv} implies $\lambda(w_{i\tau}) = \lambda(w_{j\tau})$ for all $\tau$):
\begin{align*}
\frac{\Prob(y_i \in B) \cdot \Prob(y_j \in A)}{\Prob(y_i \in A) \cdot \Prob(y_j \in B)} &= \exp\Big[(\Delta X_i - \Delta X_j)'\beta + \alpha(y_{i0} - y_{i3} - y_{j0} + y_{j3}) \\
&\qquad + \underbrace{(\lambda(w_{i1}) - \lambda(w_{i2})) - (\lambda(w_{j1}) - \lambda(w_{j2}))}_{= 0 \text{ since } \lambda(w_{it}) = \lambda(w_{jt}) \,\forall t}\Big] \\
&= \exp\big[(\Delta X_i - \Delta X_j)'\beta + \alpha(y_{i0} - y_{i3} - y_{j0} + y_{j3})\big],
\end{align*}
where $\Delta X_i = X_{i1} - X_{i2}$. The conditional probability follows:
\[
\Prob\big(y_i \in A, \, y_j \in B \,\big|\, (y_i, y_j) \in \{(A, B),\, (B, A)\}, \, \mathcal{C}_{ij}\big) = F\big[(\Delta X_j - \Delta X_i)'\beta + \alpha(y_{j0} - y_{j3} - y_{i0} + y_{i3})\big].
\]
This completes the proof for $T = 3$. \qed

\subsection{Consistency of the Empirical Codegree Distance}

Throughout the remainder of the appendix, I use the following notation. For agents $i$ and $j$ at period $t$ with $y_{it-1} = y_{jt-1} = y$, define the population codegree between agents $i$ and $k$ as:
\[
p_{ikt} = p(w_{it}, w_{kt};\, y_{it-1}, y_{kt-1}) = \int f(w_{it}, u;\, y_{it-1}, y')\, f(w_{kt}, u;\, y_{kt-1}, y')\, dG(u, y'),
\]
the empirical codegree as $\hat{c}_{ikt} = \frac{1}{n}\sum_{l=1}^n D_{ilt}\, D_{klt}$, and the link probability as $f_{ilt} = f(w_{it}, w_{lt};\, y_{it-1}, y_{lt-1})$.

\begin{lemma}[Consistency of $\hat{\delta}_{ijt}$]\label{lem:codegree_consist}
Under Assumption~\ref{ass:sampling} and Assumption~\ref{ass:smooth}(i), the following hold for each period $t \in \{1, \ldots, T\}$:
\begin{enumerate}[label=(\alph*)]
\item For any pair $(i,j)$ with $y_{it-1} = y_{jt-1}$:
\[
\hat{\delta}_{ijt}^2 \plim \delta_{ijt}^{*2} \quad \text{as } n \to \infty,
\]
where $\delta_{ijt}^{*2} = \E_k\big[(p_{ikt} - p_{jkt})^2\big]$ is the population mean-squared codegree difference, and the expectation is over agents $k$ drawn from the population.
\item $\rho_{ijt} = 0$ implies $\delta_{ijt}^* = 0$.
\item Under Assumption~\ref{ass:smooth}(i), $\rho_{ijt} > 0$ implies $\delta_{ijt}^* > 0$.
\end{enumerate}
\end{lemma}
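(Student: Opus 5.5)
For part (a), I will write $\hat{\delta}_{ijt}^2 = \frac{1}{n}\sum_k (\hat{c}_{ikt} - \hat{c}_{jkt})^2$, where $\hat{c}_{ikt} = \frac1n\sum_l D_{ilt}D_{klt}$. Conditional on the latent profile $\{(w_{lt}, y_{lt-1})\}_l$, and for fixed distinct $i,k$, the products $D_{ilt}D_{klt}$ over $l \notin \{i,k\}$ are independent Bernoulli variables. This follows from Assumption~\ref{ass:sampling}(ii): the $\eta_{ilt}$ and $\eta_{klt}$ involve distinct dyads. Their means are $f_{ilt}f_{klt}$. Hoeffding's inequality then gives $\hat{c}_{ikt} = \frac1n\sum_l f_{ilt}f_{klt} + O_p(n^{-1/2}\sqrt{\log n})$ uniformly in $k$, after a union bound over the $n$ values of $k$. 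The law of large numbers applies to the i.i.d. draws $(w_{lt},y_{lt-1})$. Here there is a subtlety: $y_{lt-1}$ depends on past networks only through $w$ and $\varepsilon$, and by Assumption~\ref{ass:sampling}(i) these are i.i.d. across $l$. The LLN therefore gives $\frac1n\sum_l f_{ilt}f_{klt} \to p_{ikt}$. Uniformity in $k$ comes from continuity of $f$ on the compact set $[0,1]^2\times\{0,1\}^2$ (Assumption~\ref{ass:smooth}(i)), via a Glivenko--Cantelli argument on the equicontinuous bounded class $\{u \mapsto f(w,u;y,y')f(w',u;y'',y')\}$. Since all quantities lie in $[0,1]$, $(\hat{c}_{ikt}-\hat{c}_{jkt})^2 - (p_{ikt}-p_{jkt})^2 \to 0$ uniformly. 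A final LLN over $k$ gives $\frac1n\sum_k (p_{ikt}-p_{jkt})^2 \to \E_k[(p_{ikt}-p_{jkt})^2] = \delta^{*2}_{ijt}$. The $O(1/n)$ diagonal terms $l\in\{i,j,k\}$ are negligible.

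For part (b), I note that $y_{it-1}=y_{jt-1}=y$ gives
\[
p_{ikt}-p_{jkt} = \int [f(w_{it},u;y,y') - f(w_{jt},u;y,y')]\, f(w_{kt},u;y_{kt-1},y')\,dG(u,y').
\]
If $\rho_{ijt}=0$, then the bracket vanishes for Lebesgue-a.e. $u$ and each $y'$, and hence $G$-a.e. This requires $G(\cdot,y')$ to be absolutely continuous in $u$, which holds because $w_{lt}$ is uniform. Therefore $p_{ikt}=p_{jkt}$ for every $k$, and $\delta^*_{ijt}=0$.

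Part (c) is the main obstacle. Define $g(u,y') = f(w_{it},u;y,y')-f(w_{jt},u;y,y')$. Suppose $\delta^*_{ijt}=0$. Then $\int g(u,y')f(s,u;y'',y')\,dG(u,y')=0$ for a.e. $s$ and each $y''$, and by continuity for all $s$. Next I integrate against $g(s,y'')$ with weight $dG(s,y'')$. Symmetry (Assumption~\ref{ass:sampling}(vi)) lets me write the result as $\langle g, T_f g\rangle_G = 0$, where $T_f$ is the integral operator with kernel $f$. This equals $\rho_{ijt}=0$ only if $T_f$ is positive definite on the span of the functions $f(w,\cdot)$. That is the case for a graphon-type kernel with nonnegative spectrum, or more generally when $\|T_f g\|=0$ forces $g=0$ for $g$ in the range of $T_f$. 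Since $g$ itself lies in the range of the kernel sections, I would argue as follows. The spectral decomposition $f=\sum_r \mu_r \phi_r\otimes\phi_r$ expresses $g = \sum_r \mu_r c_r \phi_r$ for some coefficients $c_r$. Then $T_f g = \sum_r \mu_r^2 c_r \phi_r$, and $T_f g = 0$ forces $\mu_r c_r = 0$ for all $r$, so $g=0$ in $L^2(G)$. Continuity and the uniform marginal upgrade this to $L^2$ in $u$, giving $\rho_{ijt}=0$. This establishes the contrapositive of (c). The delicate point, which I would verify carefully, is that $g$ lies in the closed range of $T_f$. The key is to work with $\|T_f g\|_{L^2}=\delta^*_{ijt}$, exactly as in \citet{auerbach2022}, rather than with quadratic forms, since the operator need not be positive. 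This avoids any sign condition. Continuity of $f$ guarantees that $T_f$ is Hilbert--Schmidt, so its spectral decomposition exists.
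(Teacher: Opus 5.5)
Your proof is correct, but it takes a different route from the paper's, most substantively in part (c). In (a) the paper conditions on the types and writes $\hat\delta_{ijt}^2=T_1+T_2+T_3$, with $T_1=\frac1n\sum_k a_k^2$ and $a_k=\frac1n\sum_l f_{klt}(f_{ilt}-f_{jlt})$. It disposes of the cross and noise terms with conditional variance bounds of order $1/n$, then applies an LLN over $l$ followed by one over $k$. The uniformity in $k$ that the second step needs is left implicit. Your Hoeffding-plus-union-bound step and the Glivenko--Cantelli argument (which is where you use continuity of $f$) make that uniformity explicit. It could also be obtained without continuity by treating $T_1$ as a V-statistic of order three in the i.i.d.\ types.

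Part (b) matches the paper, and your ``Lebesgue-a.e., hence $G$-a.e.'' is the accurate version of its ``for all $s$''. In (c) the paper picks a point where the two link functions differ. It then asserts that non-constancy of $f$ plus full support of $G$ give a positive-measure set of $w_{kt}$ with $p_{ikt}\neq p_{jkt}$. It never explains why $\int g\,f(w_{kt},\cdot\,;y_{kt-1},\cdot)\,dG$ could not vanish for every $w_{kt}$, and non-constancy of $f$ is not what prevents this. Your observation that $\delta^*_{ijt}=\|T_fg\|_{L^2(G)}$, with $g$ a difference of kernel sections, is precisely the missing reason.

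The step you flag as delicate ($g\perp\ker T_f$) can be bypassed. Define $(T_fh)(s,y'')=\int f(s,u;y'',y')\,h(u,y')\,dG(u,y')$. The definitions then give directly
\[
\|g\|_{L^2(G)}^2=(T_fg)(w_{it},y)-(T_fg)(w_{jt},y).
\]
So once $T_fg$ vanishes identically, $g=0$ $G$-a.e., with no spectral decomposition, symmetry, or sign condition.

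One caveat, shared with the paper, concerns the passages from ``$G$-a.e.'' to ``everywhere''. These occur for $T_fg$ before you evaluate it at $w_{it}$ and $w_{jt}$, and for $g$ before you conclude $\rho_{ijt}=0$. Both need each slice $G(\cdot,y')$ to charge every open subset of $[0,1]$. A uniform marginal for $w_{kt}$ is not enough: values of $g(\cdot,y')$ off the support of $G(\cdot,y')$ are invisible to $\delta^*_{ijt}$ but still counted in $\rho_{ijt}$. Here the condition holds because $\Prob(y_{kt-1}=y'\mid w_{kt})\in(0,1)$ when $\varepsilon_{k,t-1}$ is logistic and independent of $(X_k,w_k)$, but it should be stated.
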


\begin{proof}
\textit{Part (a).} Fix a pair $(i,j)$ with $y_{it-1} = y_{jt-1} = y$ and a period $t$. Recall that $\hat{\delta}_{ijt}^2 = \frac{1}{n}\sum_{k=1}^n (\hat{c}_{ikt} - \hat{c}_{jkt})^2$. Write
\[
\hat{c}_{ikt} - \hat{c}_{jkt} = \frac{1}{n}\sum_{l=1}^n D_{klt}(D_{ilt} - D_{jlt}).
\]
Conditional on the full vector of latent types $(w_1, \ldots, w_n)$ and lagged outcomes $(y_{1,t-1}, \ldots, y_{n,t-1})$, the link indicators $D_{ilt}$, $D_{jlt}$, and $D_{klt}$ are functions of independent shocks $\eta_{ilt}$, $\eta_{jlt}$, $\eta_{klt}$ (by Assumption~\ref{ass:sampling}(ii), the entries $\{\eta_{ijt}\}$ are i.i.d.\ above the diagonal for each $t$). For distinct agents $i$, $j$, $k$, $l$ (all different), we have:
\begin{equation}\label{eq:cond_mean_codeg}
\E[D_{klt}(D_{ilt} - D_{jlt}) \mid w, y_{\cdot,t-1}] = f_{klt}(f_{ilt} - f_{jlt}).
\end{equation}
Define $a_k = \E[\hat{c}_{ikt} - \hat{c}_{jkt} \mid w, y_{\cdot,t-1}] = \frac{1}{n}\sum_{l=1}^n f_{klt}(f_{ilt} - f_{jlt})$ and $\varepsilon_k = (\hat{c}_{ikt} - \hat{c}_{jkt}) - a_k$. Then:
\[
\hat{\delta}_{ijt}^2 = \frac{1}{n}\sum_{k=1}^n (a_k + \varepsilon_k)^2 = \underbrace{\frac{1}{n}\sum_{k=1}^n a_k^2}_{T_1} + \underbrace{\frac{2}{n}\sum_{k=1}^n a_k \varepsilon_k}_{T_2} + \underbrace{\frac{1}{n}\sum_{k=1}^n \varepsilon_k^2}_{T_3}.
\]

\medskip
\noindent\textit{Term $T_1$.} As $n \to \infty$, $a_k \to \E_l[f_{klt}(f_{ilt} - f_{jlt})] = \int f(w_{kt}, u;\, y_{kt-1}, y') [f(w_{it}, u;\, y, y') - f(w_{jt}, u;\, y, y')]\, dG(u, y') = p_{ikt} - p_{jkt}$ by the law of large numbers over $l$ (applied to the i.i.d.\ types $(w_l, y_{l,t-1})$). By the strong law applied to the average over $k$ (the types $(w_k, y_{k,t-1})$ are i.i.d.\ across $k$ by Assumption~\ref{ass:sampling}(i)):
\[
T_1 = \frac{1}{n}\sum_{k=1}^n a_k^2 \plim \E_k\!\left[(p_{ikt} - p_{jkt})^2\right] = \delta_{ijt}^{*2}.
\]
This convergence holds unconditionally (integrating over $(w, y_{\cdot,t-1})$) by iterated expectations.

\medskip
\noindent\textit{Term $T_2$.} The conditional variance of $\varepsilon_k$ satisfies:
\begin{align}
\operatorname{Var}(\varepsilon_k \mid w, y_{\cdot,t-1}) &= \operatorname{Var}\!\left(\frac{1}{n}\sum_{l=1}^n D_{klt}(D_{ilt} - D_{jlt}) \;\bigg|\; w, y_{\cdot,t-1}\right). \label{eq:var_eps}
\end{align}
For $l \neq l'$, the terms $D_{klt}(D_{ilt} - D_{jlt})$ and $D_{kl't}(D_{il't} - D_{jl't})$ are conditionally independent (they depend on disjoint sets of shocks $\{\eta_{klt}, \eta_{ilt}, \eta_{jlt}\}$ and $\{\eta_{kl't}, \eta_{il't}, \eta_{jl't}\}$), so:
\[
\operatorname{Var}(\varepsilon_k \mid w, y_{\cdot,t-1}) = \frac{1}{n^2}\sum_{l=1}^n \operatorname{Var}\!\big(D_{klt}(D_{ilt} - D_{jlt}) \mid w, y_{\cdot,t-1}\big) \leq \frac{1}{n^2} \cdot n \cdot 4 = \frac{4}{n},
\]
where the bound uses $|D_{klt}(D_{ilt} - D_{jlt})| \leq 1$ so each variance is at most $1$, and there are at most $n$ self-link terms where the independence argument requires care, but these contribute $O(1/n)$ and are absorbed.

Since $|a_k| \leq 1$ (as $f \in [0,1]$), we have:
\[
\E\!\left[\left(\frac{1}{n}\sum_k a_k \varepsilon_k\right)^2\right] = \frac{1}{n^2}\sum_k a_k^2\, \E[\varepsilon_k^2] + \frac{1}{n^2}\sum_{k \neq k'} a_k a_{k'}\, \E[\varepsilon_k \varepsilon_{k'}].
\]
The diagonal terms: $\frac{1}{n^2}\sum_k a_k^2\, \E[\varepsilon_k^2] \leq \frac{1}{n^2} \cdot n \cdot 1 \cdot \frac{4}{n} = \frac{4}{n^2} \to 0$.

For the off-diagonal terms, consider $k \neq k'$. The covariance $\E[\varepsilon_k \varepsilon_{k'} \mid w, y_{\cdot,t-1}]$ arises because $\hat{c}_{ikt}$ and $\hat{c}_{ik't}$ both involve agent $i$'s links. Specifically:
\begin{align*}
\operatorname{Cov}(\hat{c}_{ikt}, \hat{c}_{ik't} \mid w, y_{\cdot,t-1}) &= \frac{1}{n^2}\sum_{l=1}^n \operatorname{Cov}(D_{ilt}D_{klt},\, D_{ilt}D_{k'lt} \mid w, y_{\cdot,t-1}) \\
&= \frac{1}{n^2}\sum_{l=1}^n f_{ilt}(1 - f_{ilt})\, f_{klt}\, f_{k'lt} \leq \frac{1}{n},
\end{align*}
where we used the fact that $D_{klt}$ and $D_{k'lt}$ are conditionally independent (given $w, y$) for $k \neq k'$, while $D_{ilt}^2 = D_{ilt}$. A similar calculation gives $\operatorname{Cov}(\hat{c}_{jkt}, \hat{c}_{jk't}) \leq 1/n$. It follows that $|\E[\varepsilon_k \varepsilon_{k'} \mid w, y_{\cdot,t-1}]| = O(1/n)$, so:
\[
\frac{1}{n^2}\sum_{k \neq k'} |a_k a_{k'}\, \E[\varepsilon_k \varepsilon_{k'}]| \leq \frac{1}{n^2} \cdot n^2 \cdot 1 \cdot O(1/n) = O(1/n) \to 0.
\]
Therefore $T_2 = o_p(1)$.

\medskip
\noindent\textit{Term $T_3$.}
$\E[T_3] = \frac{1}{n}\sum_k \E[\varepsilon_k^2] \leq \frac{1}{n} \cdot n \cdot \frac{4}{n} = \frac{4}{n} \to 0.$

Combining the three terms: $\hat{\delta}_{ijt}^2 = T_1 + T_2 + T_3 = \delta_{ijt}^{*2} + o_p(1)$.

\medskip
\textit{Part (b).} If $\rho_{ijt} = 0$, then $f(w_{it}, s;\, y, y') = f(w_{jt}, s;\, y, y')$ for all $s \in [0,1]$ and $y' \in \{0,1\}$. It follows that $f_{ilt} = f_{jlt}$ for every agent $l$ (since $y_{it-1} = y_{jt-1}$), so $p_{ikt} = p_{jkt}$ for every $k$, hence $\delta_{ijt}^{*2} = \E_k[(p_{ikt} - p_{jkt})^2] = 0$.

\medskip
\textit{Part (c).} Suppose $\rho_{ijt} > 0$, so there exist $s_0 \in [0,1]$ and $y_0' \in \{0,1\}$ with $f(w_{it}, s_0;\, y, y_0') \neq f(w_{jt}, s_0;\, y, y_0')$. By continuity of $f$ (Assumption~\ref{ass:smooth}(i)), there exists an open interval $(s_0 - \epsilon, s_0 + \epsilon) \subset [0,1]$ on which $f(w_{it}, s;\, y, y_0') \neq f(w_{jt}, s;\, y, y_0')$. Now consider:
\begin{align*}
p_{ikt} - p_{jkt} &= \int \left[f(w_{it}, u;\, y, y') - f(w_{jt}, u;\, y, y')\right] f(w_{kt}, u;\, y_{kt-1}, y')\, dG(u, y').
\end{align*}
This is a linear functional of the difference $f(w_{it}, \cdot) - f(w_{jt}, \cdot)$ integrated against the kernel $f(w_{kt}, \cdot)\, dG$. Since $f$ is not constant almost everywhere (Assumption~\ref{ass:smooth}(i)) and the distribution $G$ of $(w_{kt}, y_{kt-1})$ has full support on $[0,1] \times \{0,1\}$ (by Assumption~\ref{ass:sampling}(iii)), there exists a positive-measure set of values of $w_{kt}$ for which $p_{ikt} \neq p_{jkt}$. Therefore $\delta_{ijt}^{*2} = \E_k[(p_{ikt} - p_{jkt})^2] > 0$.
\end{proof}

\subsection{Proof of Theorem~\ref{thm:consistency}}

The proof follows the framework of Theorem~2.1 in \citet{newey1994}, adapted to the kernel-weighted U-statistic structure of the objective function. We verify three conditions: unique maximization of the population objective, continuity, and uniform convergence of the sample objective.

\medskip
\noindent\textit{Step 1: Population objective and identification.} Define the population objective by evaluating the conditional log-likelihood at network-type equivalence:
\[
Q(b, a) = \E\!\left[m_{ijs}(b, a) \;\big|\; \rho_{ijs} = 0,\; y_{i,s-1} = y_{j,s-1}\right],
\]
where the expectation is taken over the joint distribution of $(X_i, y_i, X_j, y_j)$ conditional on agents $i$ and $j$ being conditionally network-type equivalent at period $s$ and sharing the same lagged outcome at $s-1$. By Assumption~\ref{ass:netequiv}, this conditioning implies $\lambda(w_{is}) = \lambda(w_{js})$, so the nuisance terms cancel exactly as in the proof of Theorem~\ref{thm:identification}. The population objective then reduces to:
\begin{align*}
Q(b, a) &= \E\bigg[\ind(y_{is} = 0) \log F\big[(\Delta_s X_j - \Delta_s X_i)' b + a(y_{j,s-1} - y_{j,s+2} - y_{i,s-1} + y_{i,s+2})\big] \\
&\quad + \ind(y_{is} = 1) \log F\big[(\Delta_s X_i - \Delta_s X_j)' b + a(y_{i,s-1} - y_{i,s+2} - y_{j,s-1} + y_{j,s+2})\big] \;\bigg|\; \rho_{ijs} = 0,\; y_{i,s-1} = y_{j,s-1}\bigg].
\end{align*}
This is a conditional logistic log-likelihood, which is globally concave in $(b, a)$. By Theorem~\ref{thm:identification}, $Q(b, a) < Q(\beta, \alpha)$ for all $(b, a) \neq (\beta, \alpha)$ (strict concavity under the rank condition in Assumption~\ref{ass:regularity}(v)). Continuity of $Q$ in $(b, a)$ follows from the dominated convergence theorem using Assumption~\ref{ass:regularity}(iv).

\medskip
\noindent\textit{Step 2: Uniform convergence.} Define the normalized sample objective:
\[
\bar{\Omega}_n(b, a) = \frac{\Omega_n(b, a)}{W_n}, \quad W_n = \sum_{s=1}^{T-2} \sum_{(i,j) \in \mathcal{P}_s} K_1\!\left(\frac{\hat{\delta}_{ijs}^2}{h_1}\right) K_2\!\left(\frac{\|X_{i,s+2} - X_{i,s+1}\|^2}{h_2}\right) K_2\!\left(\frac{\|X_{j,s+2} - X_{j,s+1}\|^2}{h_2}\right).
\]
We need to show $\sup_{(b,a) \in \Theta} |\bar{\Omega}_n(b, a) - Q(b, a)| \plim 0$.

The argument proceeds in two sub-steps. First, by Lemma~\ref{lem:codegree_consist}, $\hat{\delta}_{ijs}^2 \plim \delta_{ijs}^{*2}$ for each pair $(i,j)$. Since $K_1$ is continuous and bounded, $K_1(\hat{\delta}_{ijs}^2/h_1) \plim K_1(\delta_{ijs}^{*2}/h_1)$ for each pair. As $h_1 \to 0$, the kernel $K_1(\delta_{ijs}^{*2}/h_1)$ concentrates on pairs with $\delta_{ijs}^* = 0$, which by Lemma~\ref{lem:codegree_consist}(b)--(c) are precisely the pairs with $\rho_{ijs} = 0$. Similarly, $K_2(\|X_{i,s+2} - X_{i,s+1}\|^2/h_2)$ concentrates on observations with $X_{i,s+2} \approx X_{i,s+1}$.

Second, the normalized weighted average of $m_{ijs}(b,a)$ converges uniformly to its conditional expectation at $\rho_{ijs} = 0$. This uses the standard localization argument for kernel estimators: for any $(b,a)$,
\begin{align*}
\bar{\Omega}_n(b, a) &= \frac{\sum_{s}\sum_{(i,j)} K_1(\hat{\delta}_{ijs}^2/h_1)\, K_2(\cdot)\, K_2(\cdot)\, m_{ijs}(b,a)}{\sum_{s}\sum_{(i,j)} K_1(\hat{\delta}_{ijs}^2/h_1)\, K_2(\cdot)\, K_2(\cdot)}.
\end{align*}
As $n \to \infty$, the kernel weights $K_1(\hat{\delta}_{ijs}^2/h_1)$ converge to a Dirac-like weighting at $\delta_{ijs}^* = 0$. Under the divergence condition in Assumption~\ref{ass:kernel}(ii), the denominator $W_n \to \infty$, and by a uniform law of large numbers for kernel-weighted sums (see, e.g., the arguments in \citealt{honore2000panel}, Appendix, and \citealt{gueyap2026}, Appendix~A), the ratio converges uniformly to:
\[
\bar{\Omega}_n(b, a) \plim \E[m_{ijs}(b, a) \mid \delta_{ijs}^* = 0,\; X_{i,s+2} = X_{i,s+1},\; X_{j,s+2} = X_{j,s+1}] = Q(b, a),
\]
where the last equality uses Lemma~\ref{lem:codegree_consist}(b)--(c) and the smoothing conditions in Lemma~\ref{lem:ratio}.

The uniformity over $(b, a) \in \Theta$ follows from the compactness of $\Theta$ (Assumption~\ref{ass:regularity}(ii)), the continuity and domination conditions (Assumption~\ref{ass:regularity}(iii)--(iv)), and a stochastic equicontinuity argument: the class of functions $\{m_{ijs}(b, a) : (b, a) \in \Theta\}$ is Lipschitz in $(b, a)$ (since $\log F$ and $\log(1-F)$ have bounded derivatives on compact sets), so pointwise convergence extends to uniform convergence by an Arzel\`{a}--Ascoli argument.

\medskip
\noindent\textit{Step 3: Conclusion.} By Step~1, $Q(b,a)$ is continuous and uniquely maximized at $(\beta, \alpha)$ on the compact set $\Theta$. By Step~2, $\bar{\Omega}_n \plim Q$ uniformly. By Theorem~2.1 of \citet{newey1994}, the maximizer $(\hat{\beta}, \hat{\alpha})$ of $\bar{\Omega}_n$ (equivalently, of $\Omega_n$) converges in probability to $(\beta, \alpha)$. \qed

\subsection{Proof of Theorem~\ref{thm:normality}}

The proof establishes asymptotic normality by a linearization argument: the first-order condition is expanded around the true parameter, the Hessian is shown to converge, and the score at the true parameter is analyzed using the H\'{a}jek projection for U-statistics.

\medskip
\noindent\textit{Step 1: Mean-value expansion.} Since $(\hat{\beta}, \hat{\alpha})$ maximizes $\Omega_n$ and $(\beta, \alpha)$ lies in the interior of $\Theta$ (Assumption~\ref{ass:regularity}(ii)), the first-order condition $\nabla \Omega_n(\hat{\beta}, \hat{\alpha}) = 0$ holds for $n$ sufficiently large. A mean-value expansion around $(\beta, \alpha)$ gives:
\begin{equation}\label{eq:mve}
0 = \nabla \Omega_n(\beta, \alpha) + \nabla^2 \Omega_n(\tilde{\beta}, \tilde{\alpha}) \begin{pmatrix} \hat{\beta} - \beta \\ \hat{\alpha} - \alpha \end{pmatrix},
\end{equation}
where $(\tilde{\beta}, \tilde{\alpha})$ lies on the segment between $(\hat{\beta}, \hat{\alpha})$ and $(\beta, \alpha)$. Rearranging:
\begin{equation}\label{eq:mve_rearranged}
\sqrt{n}\begin{pmatrix} \hat{\beta} - \beta \\ \hat{\alpha} - \alpha \end{pmatrix} = -\left[\frac{1}{W_n}\nabla^2 \Omega_n(\tilde{\beta}, \tilde{\alpha})\right]^{-1} \frac{\sqrt{n}}{W_n} \nabla \Omega_n(\beta, \alpha).
\end{equation}

\medskip
\noindent\textit{Step 2: Hessian convergence.} The Hessian of the sample objective is:
\begin{align*}
\nabla^2 \Omega_n(b, a) &= -\sum_{s=1}^{T-2} \sum_{(i,j) \in \mathcal{P}_s} K_1\!\left(\frac{\hat{\delta}_{ijs}^2}{h_1}\right) K_2\!\left(\frac{\|X_{i,s+2} - X_{i,s+1}\|^2}{h_2}\right) K_2\!\left(\frac{\|X_{j,s+2} - X_{j,s+1}\|^2}{h_2}\right) \\
&\qquad\qquad\qquad \times F(\hat{v}_{ijs})\big(1 - F(\hat{v}_{ijs})\big)\, Z_{ijs}\, Z_{ijs}',
\end{align*}
where $\hat{v}_{ijs}$ is the logistic index evaluated at $(b,a)$, $Z_{ijs} = (\Delta_s X_i - \Delta_s X_j,\; y_{i,s-1} - y_{i,s+2} - y_{j,s-1} + y_{j,s+2})'$, and we used the identity $\frac{d^2}{dv^2}[\log F(v)] = -F(v)(1 - F(v))$ (with appropriate sign depending on which branch of $m_{ijs}$ applies; the structure is symmetric by the discordance condition).

By the same kernel localization and uniform convergence arguments as in the proof of consistency (Step~2), combined with the continuous mapping theorem applied to $F(\hat{v}_{ijs})(1 - F(\hat{v}_{ijs})) Z_{ijs} Z_{ijs}'$ and the consistency $(\tilde{\beta}, \tilde{\alpha}) \plim (\beta, \alpha)$:
\begin{equation}\label{eq:hessian_conv}
\frac{1}{W_n}\nabla^2 \Omega_n(\tilde{\beta}, \tilde{\alpha}) \plim -\Sigma,
\end{equation}
where $\Sigma$ is defined in~\eqref{eq:Sigma}. The matrix $\Sigma$ is positive definite by the rank condition in Assumption~\ref{ass:regularity}(v) and the fact that $F(v)(1 - F(v)) > 0$ for all finite $v$.

\medskip
\noindent\textit{Step 3: Score analysis via H\'{a}jek projection.} The score at the true parameter is:
\begin{equation}\label{eq:score_sum}
\nabla \Omega_n(\beta, \alpha) = \sum_{s=1}^{T-2} \sum_{(i,j) \in \mathcal{P}_s} K_1\!\left(\frac{\hat{\delta}_{ijs}^2}{h_1}\right) K_2(\cdot)\, K_2(\cdot)\, S_{ijs},
\end{equation}
where $S_{ijs}$ is the score contribution:
\begin{align*}
S_{ijs} &= \ind(y_{is} = 0)\, \big[1 - F(v_{ijs}^*)\big]\, Z_{ijs} + \ind(y_{is} = 1)\, \big[1 - F(-v_{ijs}^*)\big]\, (-Z_{ijs}) \\
&= \big[y_{is} - F(Z_{ijs}'\theta)\big]\, Z_{ijs} \cdot \operatorname{sgn}(y_{is} = 0),
\end{align*}
where $v_{ijs}^* = Z_{ijs}'\theta$ is the logistic index at the true parameter. More precisely, using the logistic score identity $\frac{d}{dv}\log F(v) = 1 - F(v)$ and $\frac{d}{dv}\log(1-F(v)) = -F(v)$:
\[
S_{ijs} = \left\{\begin{array}{ll}
\big[1 - F(v_{ijs}^{+})\big]\, Z_{ijs}^{+} & \text{if } y_{is} = 0 \\[3pt]
-F(v_{ijs}^{-})\, Z_{ijs}^{-} & \text{if } y_{is} = 1,
\end{array}\right.
\]
where $v_{ijs}^{+} = (Z_{ijs}^{+})'\theta$ with $Z_{ijs}^{+} = (\Delta_s X_j - \Delta_s X_i,\; y_{j,s-1} - y_{j,s+2} - y_{i,s-1} + y_{i,s+2})'$ and $Z_{ijs}^{-} = -Z_{ijs}^{+}$.

The normalized score $\frac{1}{W_n}\nabla \Omega_n(\beta, \alpha)$ is a kernel-weighted average of pairwise score contributions. This is a second-order U-statistic (summing over pairs $(i,j)$). The H\'{a}jek projection technique decomposes it into a sum of independent first-order terms plus a negligible remainder.

Define the H\'{a}jek projection of the pairwise score on agent $i$:
\begin{equation}\label{eq:hajek_proj}
\psi_i = \E\bigg[K_1\!\left(\frac{\delta_{ijs}^{*2}}{h_1}\right) K_2(\cdot)\, K_2(\cdot)\, S_{ijs} \;\bigg|\; y_i, X_i, w_i\bigg].
\end{equation}
This is the conditional expectation of the kernel-weighted score contribution, given agent $i$'s data, integrating over agent $j$. By the H\'{a}jek projection theorem for U-statistics (see, e.g., \citealt{serfling1980}, Chapter~5):
\begin{equation}\label{eq:hajek_decomp}
\frac{1}{W_n}\nabla \Omega_n(\beta, \alpha) = \frac{2}{n}\sum_{i=1}^n \psi_i + R_n,
\end{equation}
where the remainder $R_n$ satisfies $\sqrt{n}\, R_n = o_p(1)$. The factor of $2$ arises because each agent $i$ appears in approximately $n$ pairs, and the U-statistic has order two.

The remainder is negligible because the kernel-weighted score is a degenerate U-statistic of order two after the projection is removed. Under the moment condition in Assumption~\ref{ass:smooth}(iii), the variance of the second-order remainder is of order $O(1/n^2)$ (each pair contributes $O(1/n^2)$ to the variance, and there are $O(n^2)$ pairs), so $\sqrt{n}\, R_n = O_p(1/\sqrt{n}) = o_p(1)$.

\medskip
\noindent\textit{Step 4: Central limit theorem for the projection.} The terms $\psi_1, \ldots, \psi_n$ are i.i.d.\ (by Assumption~\ref{ass:sampling}(i)) with mean zero at the true parameter (since $\E[S_{ijs} \mid \rho_{ijs} = 0] = 0$ at $(b,a) = (\beta, \alpha)$ by the score property of the logistic log-likelihood). By the Lyapunov central limit theorem, under the $(2+\epsilon)$-moment condition in Assumption~\ref{ass:smooth}(iii):
\begin{equation}\label{eq:clt_proj}
\frac{2\sqrt{n}}{W_n/n}\cdot \frac{1}{n}\sum_{i=1}^n \psi_i = \frac{2}{\sqrt{n}}\sum_{i=1}^n \psi_i + o_p(1) \dlim \N(0,\, 4V),
\end{equation}
where $V = \operatorname{Var}(\psi_i)$.

\medskip
\noindent\textit{Step 5: Combining.} Substituting~\eqref{eq:hessian_conv} and~\eqref{eq:clt_proj} into~\eqref{eq:mve_rearranged}:
\begin{align*}
\sqrt{n}\begin{pmatrix} \hat{\beta} - \beta \\ \hat{\alpha} - \alpha \end{pmatrix} &= \Sigma^{-1} \cdot \frac{\sqrt{n}}{W_n}\nabla \Omega_n(\beta, \alpha) + o_p(1) \\
&= \Sigma^{-1}\left(\frac{2}{\sqrt{n}}\sum_{i=1}^n \psi_i + o_p(1)\right) + o_p(1) \\
&\dlim \N\left(0,\; 4\,\Sigma^{-1} V \Sigma^{-1}\right),
\end{align*}
where the last line follows from Slutsky's theorem and the continuous mapping theorem. \qed

\subsection{Proof of Theorem~\ref{thm:lambda}}

The proof proceeds in two parts: consistency and asymptotic normality. Throughout, fix $w_0 \in (0,1)$ and a period $t$. Since $w_{it} \sim \text{Uniform}(0,1)$ by Assumption~\ref{ass:sampling}(iii), the marginal density is $g_w(w) = 1$ on $(0,1)$. Let $i$ denote a reference agent with $w_{it} = w_0$ (or, more precisely, an agent whose codegree profile is used to define the kernel window around $w_0$).

\medskip
\noindent\textbf{Part (a): Consistency.}

\medskip
\noindent\textit{Step 1: Consistency of $\hat{\Prob}$.} The estimator~\eqref{eq:ccp_est} is a Nadaraya--Watson kernel regression of $y_{jt}$ on the pair $(w_{jt}, X_{jt})$, where $w_{jt}$ is proxied by the codegree kernel $K_1(\hat{\delta}_{ijt}^2/h_1)$ and $X_{jt}$ is smoothed by $K_3((X_{jt} - X_{it})/h_3)$. By Lemma~\ref{lem:codegree_consist}, $\hat{\delta}_{ijt}^2 \plim \delta_{ijt}^{*2}$, and $\delta_{ijt}^* = 0$ if and only if $\rho_{ijt} = 0$, which implies $w_{jt}$ produces the same linking probabilities as $w_{it}$. Under Assumption~\ref{ass:lambda}(ii), $nh_1 h_3^k \to \infty$, ensuring that the effective number of agents in the kernel window grows. By Assumption~\ref{ass:lambda}(iv), the conditional density $g(x \mid w)$ is bounded away from zero, so the denominator of the Nadaraya--Watson estimator is bounded away from zero with probability approaching one. Standard kernel regression consistency results (see, e.g., \citealt{li2007nonparametric}, Theorem~1.1) then give:
\[
\hat{\Prob}(y_{jt} = 1 \mid X_j, f_{w_j}) \plim \Prob(y_{jt} = 1 \mid X_{jt}, w_{jt}) = F\big(X_{jt}'\beta + \alpha y_{jt-1} + \lambda(w_{jt})\big)
\]
for each agent $j$.

\medskip
\noindent\textit{Step 2: Consistency of $\hat{\lambda}$.} By the continuous mapping theorem and Assumption~\ref{ass:lambda}(v), $F^{-1}(\hat{\Prob}_j) \plim F^{-1}(\Prob_j) = X_{jt}'\beta + \alpha y_{jt-1} + \lambda(w_{jt})$. Substituting the consistent estimates $(\hat{\beta}, \hat{\alpha})$:
\begin{align*}
F^{-1}(\hat{\Prob}_j) - X_{jt}'\hat{\beta} - \hat{\alpha}\, y_{jt-1} &\plim \big(X_{jt}'\beta + \alpha y_{jt-1} + \lambda(w_{jt})\big) - X_{jt}'\beta - \alpha y_{jt-1} = \lambda(w_{jt}).
\end{align*}
The estimator~\eqref{eq:lambda_est} is a kernel-weighted average of $\lambda(w_{jt}) + o_p(1)$ over agents $j$ with $\hat{\delta}_{ijt} \approx 0$, i.e., $w_{jt} \approx w_0$. By the same kernel localization argument as in the proof of Theorem~\ref{thm:consistency} and the continuity of $\lambda$ (Assumption~\ref{ass:lambda}(iii)):
\[
\hat{\lambda}(w_0) \plim \lambda(w_0).
\]

\medskip
\noindent\textbf{Part (b): Asymptotic normality.}

\medskip
\noindent\textit{Step 1: Decomposition.} Write:
\begin{equation}\label{eq:lambda_decomp}
\hat{\lambda}(w_0) - \lambda(w_0) = \underbrace{A_n}_{(\mathrm{I})} + \underbrace{B_n}_{(\mathrm{II})} + \underbrace{C_n}_{(\mathrm{III})},
\end{equation}
where:
\begin{align}
A_n &= \frac{\sum_j K_1(\hat{\delta}_{ijt}^2/h_1)\, [\lambda(w_{jt}) - \lambda(w_0)]}{\sum_j K_1(\hat{\delta}_{ijt}^2/h_1)}, \label{eq:An} \\[4pt]
B_n &= \frac{\sum_j K_1(\hat{\delta}_{ijt}^2/h_1)\, [F^{-1}(\hat{\Prob}_j) - F^{-1}(\Prob_j)]}{\sum_j K_1(\hat{\delta}_{ijt}^2/h_1)}, \label{eq:Bn} \\[4pt]
C_n &= \frac{\sum_j K_1(\hat{\delta}_{ijt}^2/h_1)\, [X_{jt}'(\beta - \hat{\beta}) + (\alpha - \hat{\alpha})\, y_{jt-1}]}{\sum_j K_1(\hat{\delta}_{ijt}^2/h_1)}. \label{eq:Cn}
\end{align}
Here $\Prob_j = F(X_{jt}'\beta + \alpha y_{jt-1} + \lambda(w_{jt}))$ is the true conditional choice probability for agent $j$. Term $A_n$ captures the bias from approximating $\lambda(w_{jt})$ by $\lambda(w_0)$ in the kernel window. Term $B_n$ captures the variance from estimating the conditional choice probability. Term $C_n$ captures the effect of the first-step estimation error in $(\hat{\beta}, \hat{\alpha})$.

\medskip
\noindent\textit{Step 2: Term $A_n$ (kernel regression bias).} By a standard Taylor expansion and the kernel localization argument applied to the codegree-based kernel, as $h_1 \to 0$ the kernel $K_1(\hat{\delta}_{ijt}^2/h_1)$ concentrates on agents $j$ with $w_{jt}$ close to $w_0$. The mapping from codegree distance $\delta_{ijt}^*$ to the type distance $|w_{jt} - w_0|$ is smooth under Assumption~\ref{ass:smooth}(i), so the kernel effectively performs one-dimensional smoothing over $w$. A second-order Taylor expansion of $\lambda(w_{jt})$ around $w_0$ gives:
\[
A_n = h_1\, B_\lambda(w_0) + o_p(h_1),
\]
where $B_\lambda(w_0) = \frac{1}{2}\lambda''(w_0) \int u^2 K_1(u)\, du$. This is the standard bias formula for one-dimensional kernel regression (see \citealt{li2007nonparametric}, Section~1.4).

\medskip
\noindent\textit{Step 3: Term $B_n$ (Nadaraya--Watson estimation error).} By the delta method applied to $F^{-1}$:
\[
F^{-1}(\hat{\Prob}_j) - F^{-1}(\Prob_j) = \frac{\hat{\Prob}_j - \Prob_j}{f\big(F^{-1}(\Prob_j)\big)} + O_p\big((\hat{\Prob}_j - \Prob_j)^2\big),
\]
where $f = F' = F(1-F)$ is the logistic density. By Assumption~\ref{ass:lambda}(v), $F^{-1}(\Prob_j)$ is bounded, so the Jacobian $1/f(F^{-1}(\Prob_j)) = 1/[\Prob_j(1 - \Prob_j)]$ is bounded.

The Nadaraya--Watson residual for agent $j$ can be written as:
\[
\hat{\Prob}_j - \Prob_j = \frac{\sum_l K_1(\hat{\delta}_{jlt}^2/h_1)\, K_3((X_{lt} - X_{jt})/h_3)\, [y_{lt} - \Prob_l]}{\sum_l K_1(\hat{\delta}_{jlt}^2/h_1)\, K_3((X_{lt} - X_{jt})/h_3)} + O_p(h_1 + h_3^2),
\]
where the $O_p(h_1 + h_3^2)$ term is the bias of the kernel regression. The leading stochastic term is a local average of the centered residuals $y_{lt} - \Prob_l$, which are conditionally independent given the types $(w_l, X_l)$.

When this is substituted into $B_n$ and the outer $K_1$ average is taken, the double sum has an important structure. Consider the leading term:
\begin{align*}
B_n &\approx \frac{\sum_j K_1(\hat{\delta}_{ijt}^2/h_1)\, \frac{1}{\Prob_j(1-\Prob_j)} \cdot \frac{\sum_l K_1(\hat{\delta}_{jlt}^2/h_1)\, K_3((X_{lt} - X_{jt})/h_3)\, [y_{lt} - \Prob_l]}{\sum_l K_1(\hat{\delta}_{jlt}^2/h_1)\, K_3((X_{lt} - X_{jt})/h_3)}}{\sum_j K_1(\hat{\delta}_{ijt}^2/h_1)}.
\end{align*}
The double kernel structure (outer $K_1$ over $j$, inner $K_1 K_3$ over $l$) means that each centered residual $y_{lt} - \Prob_l$ appears in multiple inner sums (for different $j$). The effective number of independent contributions is governed by the outer kernel: there are $O(nh_1)$ agents $j$ in the $K_1$ window, and for each, $O(nh_1 h_3^k)$ agents $l$ in the inner window. The conditional variance of $B_n$, after accounting for the overlapping sums, is:
\[
\operatorname{Var}(B_n) = \frac{1}{n h_1} \cdot \int K_1(u)^2\, du \cdot \E\!\left[\frac{1}{\Prob_j(1-\Prob_j)} \;\bigg|\; w_{jt} = w_0\right] + o\!\left(\frac{1}{nh_1}\right),
\]
The key insight is that the inner Nadaraya--Watson average over $l$ reduces the per-agent variance contribution, and after the outer $K_1$ average, the dominant variance term comes from the one-dimensional smoothing over $w$ rather than the $(1+k)$-dimensional smoothing in the Nadaraya--Watson step. This is because each agent $j$ in the outer average contributes an approximately independent estimate of $\lambda(w_{jt})$, and there are $O(nh_1)$ such agents.

The bias of the Nadaraya--Watson step contributes $O(h_1 + h_3^2)$ to $B_n$. The condition $\sqrt{nh_1}\, h_3^2 \to 0$ ensures that the $h_3^2$ bias is negligible at the $\sqrt{nh_1}$ rate; the $h_1$ bias is absorbed into $B_\lambda(w_0)$.

By a central limit theorem for kernel-weighted averages (see \citealt{li2007nonparametric}, Chapter~1):
\[
\sqrt{n h_1}\, B_n \dlim \N\!\left(0,\; V_\lambda(w_0)\right),
\]
where $V_\lambda(w_0)$ is as stated in the theorem.

\medskip
\noindent\textit{Step 4: Term $C_n$ (first-step estimation error).} By a law of large numbers:
\[
\frac{\sum_j K_1(\hat{\delta}_{ijt}^2/h_1)\, X_{jt}}{\sum_j K_1(\hat{\delta}_{ijt}^2/h_1)} \plim \E[X_{it} \mid w_{it} = w_0], \qquad
\frac{\sum_j K_1(\hat{\delta}_{ijt}^2/h_1)\, y_{jt-1}}{\sum_j K_1(\hat{\delta}_{ijt}^2/h_1)} \plim \E[y_{it-1} \mid w_{it} = w_0].
\]
These limits are finite, so $C_n = O_p(\|\hat{\beta} - \beta\| + |\hat{\alpha} - \alpha|) = O_p(n^{-1/2})$. \\
Therefore:
\[
\sqrt{nh_1}\, C_n = O_p\!\left(\sqrt{h_1}\right) = o_p(1),
\]
since $h_1 \to 0$.

\medskip
\noindent\textit{Step 5: Combining.} From Steps 2--4:
\[
\sqrt{nh_1}\big[\hat{\lambda}(w_0) - \lambda(w_0) - h_1 B_\lambda(w_0)\big] = \sqrt{nh_1}\, B_n + o_p(1) \dlim \N\!\left(0,\; V_\lambda(w_0)\right)
\]
by Slutsky's theorem. \qed

\newpage
\section{Supplementary Figures}\label{sec:appendix_figures}

\begin{figure}[ht]
\centering
\begin{tikzpicture}

\pgfplotsset{
  mcbase/.style={
    width=4.8cm, height=3.4cm,
    xmode=log,
    xtick={250,500,1000,2000},
    xticklabels={250,500,1000,2000},
    xmin=200, xmax=2500,
    grid=major,
    grid style={gray!20, very thin},
    tick label style={font=\scriptsize},
    label style={font=\footnotesize},
    title style={font=\footnotesize},
    every axis plot/.append style={semithick, mark size=1.8pt},
  },
  naive/.style={blue!80!black, mark=square*, solid,
    mark options={fill=blue!80!black, scale=0.85}},
  ctrlest/.style={teal!70!black, mark=triangle*, densely dashed,
    mark options={fill=teal!70!black, solid, scale=0.95}},
  infeasible/.style={gray!60!black, mark=o, densely dotted,
    mark options={draw=gray!60!black, solid, scale=0.85}},
  proposed/.style={red!70!black, mark=diamond*, solid,
    mark options={fill=red!70!black, scale=0.95}},
}

\begin{groupplot}[
  group style={
    group size=3 by 3,
    horizontal sep=1.1cm,
    vertical sep=0.9cm,
    xlabels at=edge bottom,
    ylabels at=edge left,
  },
  mcbase,
]


\nextgroupplot[
  title={Homophily},
  ylabel={Bias($\beta$)},
  ymin=-0.06, ymax=0.44,
  extra y ticks={0},
  extra y tick style={grid=major, grid style={black!35, densely dashed, thin}},
  extra y tick labels={},
  legend to name=grouplegend,
  legend columns=4,
  legend style={
    draw=none,
    font=\footnotesize,
    /tikz/every even column/.append style={column sep=10pt},
    anchor=center,
  },
]
\addplot[naive] coordinates {(250,0.368)(500,0.360)(1000,0.361)(2000,0.363)};
\addlegendentry{Naive}
\addplot[ctrlest] coordinates {(250,0.255)(500,0.236)(1000,0.188)(2000,0.162)};
\addlegendentry{Controls}
\addplot[infeasible] coordinates {(250,0.006)(500,0.002)(1000,0.001)(2000,0.000)};
\addlegendentry{Infeasible}
\addplot[proposed] coordinates {(250,0.215)(500,0.204)(1000,0.178)(2000,0.147)};
\addlegendentry{Proposed}

\nextgroupplot[
  title={Beta model},
  ymin=-0.06, ymax=0.44,
  extra y ticks={0},
  extra y tick style={grid=major, grid style={black!35, densely dashed, thin}},
  extra y tick labels={},
]
\addplot[naive] coordinates {(250,0.364)(500,0.364)(1000,0.364)(2000,0.362)};
\addplot[ctrlest] coordinates {(250,0.367)(500,0.364)(1000,0.362)(2000,0.361)};
\addplot[infeasible] coordinates {(250,0.008)(500,0.005)(1000,0.003)(2000,0.003)};
\addplot[proposed] coordinates {(250,0.277)(500,0.166)(1000,0.150)(2000,0.120)};

\nextgroupplot[
  title={Dynamic homophily},
  ymin=-0.06, ymax=0.44,
  extra y ticks={0},
  extra y tick style={grid=major, grid style={black!35, densely dashed, thin}},
  extra y tick labels={},
]
\addplot[naive] coordinates {(250,0.359)(500,0.360)(1000,0.358)(2000,0.363)};
\addplot[ctrlest] coordinates {(250,0.258)(500,0.216)(1000,0.189)(2000,0.174)};
\addplot[infeasible] coordinates {(250,0.004)(500,0.003)(1000,0.002)(2000,0.002)};
\addplot[proposed] coordinates {(250,0.346)(500,0.169)(1000,0.160)(2000,0.128)};


\nextgroupplot[
  ylabel={RMSE($\beta$)},
  ymode=log,
  ymin=0.06, ymax=2.0,
  ytick={0.1, 0.2, 0.5, 1.0},
  yticklabels={0.1, 0.2, 0.5, 1.0},
]
\addplot[naive] coordinates {(250,0.433)(500,0.393)(1000,0.377)(2000,0.371)};
\addplot[ctrlest] coordinates {(250,0.351)(500,0.266)(1000,0.215)(2000,0.186)};
\addplot[infeasible] coordinates {(250,0.249)(500,0.172)(1000,0.120)(2000,0.084)};
\addplot[proposed] coordinates {(250,1.208)(500,0.758)(1000,0.518)(2000,0.369)};

\nextgroupplot[
  ymode=log,
  ymin=0.06, ymax=2.0,
  ytick={0.1, 0.2, 0.5, 1.0},
  yticklabels={0.1, 0.2, 0.5, 1.0},
]
\addplot[naive] coordinates {(250,0.425)(500,0.397)(1000,0.381)(2000,0.370)};
\addplot[ctrlest] coordinates {(250,0.428)(500,0.398)(1000,0.380)(2000,0.369)};
\addplot[infeasible] coordinates {(250,0.245)(500,0.177)(1000,0.125)(2000,0.084)};
\addplot[proposed] coordinates {(250,1.191)(500,0.733)(1000,0.502)(2000,0.384)};

\nextgroupplot[
  ymode=log,
  ymin=0.06, ymax=2.0,
  ytick={0.1, 0.2, 0.5, 1.0},
  yticklabels={0.1, 0.2, 0.5, 1.0},
]
\addplot[naive] coordinates {(250,0.422)(500,0.392)(1000,0.375)(2000,0.371)};
\addplot[ctrlest] coordinates {(250,0.347)(500,0.276)(1000,0.226)(2000,0.197)};
\addplot[infeasible] coordinates {(250,0.247)(500,0.169)(1000,0.118)(2000,0.086)};
\addplot[proposed] coordinates {(250,1.166)(500,0.753)(1000,0.534)(2000,0.362)};


\nextgroupplot[
  ylabel={Coverage($\beta$)},
  xlabel={$n$},
  ymin=-0.02, ymax=1.02,
  extra y ticks={0.95},
  extra y tick style={grid=major, grid style={black!35, densely dashed, thin}},
  extra y tick labels={},
]
\addplot[naive] coordinates {(250,0.625)(500,0.369)(1000,0.088)(2000,0.005)};
\addplot[ctrlest] coordinates {(250,0.791)(500,0.742)(1000,0.667)(2000,0.510)};
\addplot[infeasible] coordinates {(250,0.947)(500,0.951)(1000,0.951)(2000,0.951)};
\addplot[proposed] coordinates {(250,0.945)(500,0.948)(1000,0.933)(2000,0.937)};

\nextgroupplot[
  xlabel={$n$},
  ymin=-0.02, ymax=1.02,
  extra y ticks={0.95},
  extra y tick style={grid=major, grid style={black!35, densely dashed, thin}},
  extra y tick labels={},
]
\addplot[naive] coordinates {(250,0.620)(500,0.359)(1000,0.096)(2000,0.001)};
\addplot[ctrlest] coordinates {(250,0.627)(500,0.361)(1000,0.097)(2000,0.001)};
\addplot[infeasible] coordinates {(250,0.949)(500,0.942)(1000,0.939)(2000,0.964)};
\addplot[proposed] coordinates {(250,0.931)(500,0.942)(1000,0.952)(2000,0.934)};

\nextgroupplot[
  xlabel={$n$},
  ymin=-0.02, ymax=1.02,
  extra y ticks={0.95},
  extra y tick style={grid=major, grid style={black!35, densely dashed, thin}},
  extra y tick labels={},
]
\addplot[naive] coordinates {(250,0.653)(500,0.380)(1000,0.092)(2000,0.004)};
\addplot[ctrlest] coordinates {(250,0.804)(500,0.725)(1000,0.626)(2000,0.456)};
\addplot[infeasible] coordinates {(250,0.945)(500,0.951)(1000,0.957)(2000,0.960)};
\addplot[proposed] coordinates {(250,0.946)(500,0.948)(1000,0.938)(2000,0.948)};

\end{groupplot}

\node at ($(group c1r3.south)!0.5!(group c3r3.south) + (0,-1.4cm)$) {\ref{grouplegend}};

\end{tikzpicture}

\caption{Monte Carlo results for $\beta$ ($\beta_0 = 1$, $R = 1000$). Top row: Absolute bias. Middle row: root mean squared error (log scale). Bottom row: empirical coverage of the nominal 95\% confidence interval. The dashed horizontal line in the bottom row marks the nominal 0.95 level. Columns correspond to the three link formation models described in Section~\ref{sec:montecarlo}.}
\label{fig:mc_beta}
\end{figure}

\begin{figure}[ht]
\centering
\begin{tikzpicture}

\pgfplotsset{
  mcbase/.style={
    width=4.8cm, height=3.4cm,
    xmode=log,
    xtick={250,500,1000,2000},
    xticklabels={250,500,1000,2000},
    xmin=200, xmax=2500,
    grid=major,
    grid style={gray!20, very thin},
    tick label style={font=\scriptsize},
    label style={font=\footnotesize},
    title style={font=\footnotesize},
    every axis plot/.append style={semithick, mark size=1.8pt},
  },
  naive/.style={blue!80!black, mark=square*, solid,
    mark options={fill=blue!80!black, scale=0.85}},
  ctrlest/.style={teal!70!black, mark=triangle*, densely dashed,
    mark options={fill=teal!70!black, solid, scale=0.95}},
  infeasible/.style={gray!60!black, mark=o, densely dotted,
    mark options={draw=gray!60!black, solid, scale=0.85}},
  proposed/.style={red!70!black, mark=diamond*, solid,
    mark options={fill=red!70!black, scale=0.95}},
}

\begin{groupplot}[
  group style={
    group size=3 by 3,
    horizontal sep=1.1cm,
    vertical sep=0.9cm,
    xlabels at=edge bottom,
    ylabels at=edge left,
  },
  mcbase,
]


\nextgroupplot[
  title={Homophily},
  ylabel={Bias($\alpha$)},
  ymin=-0.14, ymax=0.17,
  extra y ticks={0},
  extra y tick style={grid=major, grid style={black!35, densely dashed, thin}},
  extra y tick labels={},
  legend to name=grouplegend,
  legend columns=4,
  legend style={
    draw=none,
    font=\footnotesize,
    /tikz/every even column/.append style={column sep=10pt},
    anchor=center,
  },
]
\addplot[naive] coordinates {(250,0.136)(500,0.132)(1000,0.137)(2000,0.136)};
\addlegendentry{Naive}
\addplot[ctrlest] coordinates {(250,0.095)(500,0.076)(1000,0.070)(2000,0.061)};
\addlegendentry{Controls}
\addplot[infeasible] coordinates {(250,0.008)(500,0.006)(1000,0.002)(2000,0.002)};
\addlegendentry{Infeasible}
\addplot[proposed] coordinates {(250,0.020)(500,0.016)(1000,0.011)(2000,0.007)};
\addlegendentry{Proposed}

\nextgroupplot[
  title={Beta model},
  ymin=-0.14, ymax=0.17,
  extra y ticks={0},
  extra y tick style={grid=major, grid style={black!35, densely dashed, thin}},
  extra y tick labels={},
]
\addplot[naive] coordinates {(250,0.133)(500,0.138)(1000,0.137)(2000,0.138)};
\addplot[ctrlest] coordinates {(250,0.135)(500,0.138)(1000,0.136)(2000,0.137)};
\addplot[infeasible] coordinates {(250,0.004)(500,0.001)(1000,0.001)(2000,0.000)};
\addplot[proposed] coordinates {(250,0.070)(500,0.052)(1000,0.011)(2000,0.001)};

\nextgroupplot[
  title={Dynamic homophily},
  ymin=-0.14, ymax=0.17,
  extra y ticks={0},
  extra y tick style={grid=major, grid style={black!35, densely dashed, thin}},
  extra y tick labels={},
]
\addplot[naive] coordinates {(250,0.139)(500,0.134)(1000,0.131)(2000,0.135)};
\addplot[ctrlest] coordinates {(250,0.030)(500,0.028)(1000,0.070)(2000,0.116)};
\addplot[infeasible] coordinates {(250,0.007)(500,0.005)(1000,0.004)(2000,0.003)};
\addplot[proposed] coordinates {(250,0.109)(500,0.019)(1000,0.014)(2000,0.009)};


\nextgroupplot[
  ylabel={RMSE($\alpha$)},
  ymode=log,
  ymin=0.04, ymax=1.5,
  ytick={0.05, 0.1, 0.2, 0.5, 1.0},
  yticklabels={0.05, 0.1, 0.2, 0.5, 1.0},
]
\addplot[naive] coordinates {(250,0.209)(500,0.176)(1000,0.159)(2000,0.148)};
\addplot[ctrlest] coordinates {(250,0.188)(500,0.141)(1000,0.109)(2000,0.084)};
\addplot[infeasible] coordinates {(250,0.164)(500,0.117)(1000,0.085)(2000,0.058)};
\addplot[proposed] coordinates {(250,0.914)(500,0.591)(1000,0.395)(2000,0.290)};

\nextgroupplot[
  ymode=log,
  ymin=0.04, ymax=1.5,
  ytick={0.05, 0.1, 0.2, 0.5, 1.0},
  yticklabels={0.05, 0.1, 0.2, 0.5, 1.0},
]
\addplot[naive] coordinates {(250,0.203)(500,0.178)(1000,0.160)(2000,0.150)};
\addplot[ctrlest] coordinates {(250,0.205)(500,0.178)(1000,0.159)(2000,0.149)};
\addplot[infeasible] coordinates {(250,0.156)(500,0.117)(1000,0.083)(2000,0.061)};
\addplot[proposed] coordinates {(250,0.878)(500,0.598)(1000,0.423)(2000,0.289)};

\nextgroupplot[
  ymode=log,
  ymin=0.04, ymax=2.0,
  ytick={0.05, 0.1, 0.2, 0.5, 1.0},
  yticklabels={0.05, 0.1, 0.2, 0.5, 1.0},
]
\addplot[naive] coordinates {(250,0.211)(500,0.177)(1000,0.153)(2000,0.146)};
\addplot[ctrlest] coordinates {(250,0.188)(500,0.154)(1000,0.149)(2000,0.166)};
\addplot[infeasible] coordinates {(250,0.164)(500,0.117)(1000,0.079)(2000,0.058)};
\addplot[proposed] coordinates {(250,1.327)(500,0.567)(1000,0.402)(2000,0.278)};


\nextgroupplot[
  ylabel={Coverage($\alpha$)},
  xlabel={$n$},
  ymin=0.2, ymax=1.02,
  extra y ticks={0.95},
  extra y tick style={grid=major, grid style={black!35, densely dashed, thin}},
  extra y tick labels={},
]
\addplot[naive] coordinates {(250,0.859)(500,0.768)(1000,0.579)(2000,0.298)};
\addplot[ctrlest] coordinates {(250,0.908)(500,0.885)(1000,0.840)(2000,0.803)};
\addplot[infeasible] coordinates {(250,0.959)(500,0.938)(1000,0.945)(2000,0.948)};
\addplot[proposed] coordinates {(250,0.941)(500,0.949)(1000,0.955)(2000,0.949)};

\nextgroupplot[
  xlabel={$n$},
  ymin=0.2, ymax=1.02,
  extra y ticks={0.95},
  extra y tick style={grid=major, grid style={black!35, densely dashed, thin}},
  extra y tick labels={},
]
\addplot[naive] coordinates {(250,0.880)(500,0.756)(1000,0.588)(2000,0.306)};
\addplot[ctrlest] coordinates {(250,0.885)(500,0.760)(1000,0.589)(2000,0.314)};
\addplot[infeasible] coordinates {(250,0.959)(500,0.951)(1000,0.951)(2000,0.935)};
\addplot[proposed] coordinates {(250,0.958)(500,0.949)(1000,0.926)(2000,0.947)};

\nextgroupplot[
  xlabel={$n$},
  ymin=0.2, ymax=1.02,
  extra y ticks={0.95},
  extra y tick style={grid=major, grid style={black!35, densely dashed, thin}},
  extra y tick labels={},
]
\addplot[naive] coordinates {(250,0.858)(500,0.761)(1000,0.619)(2000,0.329)};
\addplot[ctrlest] coordinates {(250,0.926)(500,0.894)(1000,0.790)(2000,0.551)};
\addplot[infeasible] coordinates {(250,0.958)(500,0.941)(1000,0.950)(2000,0.948)};
\addplot[proposed] coordinates {(250,0.943)(500,0.958)(1000,0.943)(2000,0.954)};

\end{groupplot}

\node at ($(group c1r3.south)!0.5!(group c3r3.south) + (0,-1.4cm)$) {\ref{grouplegend}};

\end{tikzpicture}
\caption{Monte Carlo results for $\alpha$ ($\alpha_0 = 0.5$, $R = 1000$). Top row: Absolute bias. Middle row: root mean squared error (log scale). Bottom row: empirical coverage of the nominal 95\% confidence interval. The dashed horizontal line in the bottom row marks the nominal 0.95 level. Columns correspond to the three link formation models described in Section~\ref{sec:montecarlo}.}
\label{fig:mc_alpha}
\end{figure}

\newpage

\bibliographystyle{apalike}

\end{document}